\newtheorem{theorem}{Theorem}
\newtheorem{proposition}{Proposition}
\newtheorem{lemma}{Lemma}
\newtheorem{remark}{Remark}
\newtheorem{corollary}{Corollary}
\newtheorem{assumption}{Assumption}
\newtheorem{proof}{Proof}
\newcommand{\norm}[1]{\left\lVert #1 \right\rVert}
\newcommand\myon{ \left(\frac{1}{\lambda_n^i}\cdot \frac{\lambda_n^r } {\e^{\lambda_n^r t_C}-1}  + 1 \right)^{-1}}
\newcommand{\myonn}[3]{ \left(\frac{1}{#2}\cdot \frac{#1} {\e^{#1 #3}-1}  + 1 \right)^{-1}}
\newcommand\hit{h}
\newcommand\occupancy{o}
\newcommand\newoccupancy{o}
\newcommand\givenbase[1][]{\:#1\lvert\:}
\let\given\givenbase
\DeclarePairedDelimiterX\Basics[1](){\let\given\sgiven #1}
\newcommand{\Proba}[1]{\mathrm{Pr}\left(#1\right)}
\newcommand{\E}[1]{\mathbb{E}\left[#1\right]}
\newcommand{\Ton}{T^{\mathrm{On}}_{n} }
\newcommand{\Toff}{T_{n}^{\mathrm{Off}} }
\newcommand{\Nopen}[1]{\mathcal{N}(#1)}
\newcommand{\Nclosed}[1]{ \mathcal{N}[#1]}
\newcommand{\Nopenin}[2]{\mathcal{N}_{#1}(#2)}
\newcommand{\Nclosedin}[2]{\mathcal{N}_{#1}[#2]}
\newcommand{\e}{\mathrm{e}}
\newcommand{\foccupancyC}{g}
\newcommand{\foccupancy}{\mathbf{g}}
\newcommand{\frefresh}{\mathbf{R}}
\newcommand{\fentry}{\mathbf{E}}
\newcommand{\ftc}{t_C} 
\newcommand{\JacobianG}{\mathcal{J}_{\mathbf{G}}}
\newcommand{\JacobianGB}{\mathcal{J}_{\mathbf{G}_{\beta}}}
\DeclarePairedDelimiterX{\Iintv}[1]{\llbracket}{\rrbracket}{\iintvargs{#1}}
\NewDocumentCommand{\iintvargs}{>{\SplitArgument{1}{,}}m}
{\iintvargsaux#1} %
\NewDocumentCommand{\iintvargsaux}{mm} 
{#1\mkern1.5mu..\mkern1.5mu#2}
\DeclareMathOperator*{\argmax}{arg\,max}
\DeclareMathOperator*{\argmin}{arg\,min}
\def\BibTeX{{\rm B\kern-.05em{\sc i\kern-.025em b}\kern-.08em
    T\kern-.1667em\lower.7ex\hbox{E}\kern-.125emX}}
\journal{Computer Networks}
\begin{document}
\pagenumbering{arabic}
\setcounter{page}{1}
\begin{frontmatter}

\title{Performance Model for Similarity Caching}

\author[inst1]{Younes Ben Mazziane}

\author[inst1]{Sara Alouf}

\author[inst1]{Giovanni Neglia}

\affiliation[inst1]{organization={Université Côte d'Azur, Inria},
            city={Sophia Antipolis},
            country={France}}

\author[inst2]{Daniel~S.~Menasche}

\affiliation[inst2]{organization={Federal University of Rio de Janeiro, UFRJ},
            city={Rio de Janeiro},
            country={Brazil}}

\begin{abstract}
Similarity caching allows requests for an item 
 to be served by a similar item. Applications include recommendation systems,
multimedia retrieval, and machine learning. Recently, many similarity
caching policies have been proposed, like SIM-LRU and
RND-LRU, but the performance analysis of their hit rate is still wanting.
In this paper, we show 
how to extend the popular time-to-live approximation in classic caching to similarity
caching. In particular, we propose a method to estimate the hit rate of the similarity caching policy RND-LRU. Our method, the RND-TTL approximation, introduces the RND-TTL cache model and then tunes its parameters in such a way to mimic the behavior of RND-LRU. The parameter tuning involves solving a fixed point system of equations for which we provide an algorithm for numerical resolution and sufficient conditions for its convergence. Our approach for approximating the hit rate of RND-LRU is evaluated on both synthetic and real world traces. 
\end{abstract}

\begin{keyword}
Caching \sep Time-to-live approximation \sep Performance evaluation
\end{keyword}

\end{frontmatter}

\section{Introduction}

Many applications require to retrieve items similar to a given user's request. For example, in content-based image retrieval~\cite{falchi2008metric} systems, users can submit an image to obtain other visually similar images. A similarity cache may intercept the user's request, perform a local similarity search over the set of locally stored items, and then, if the search result is evaluated satisfactorily, provide it to the user. The cache thus may speed up the reply and reduce the load on the server, at the cost of providing items \emph{less similar} than those the server would provide.

Originally proposed for content-based image retrieval~\cite{falchi2008metric} and contextual advertising~\cite{pandey2009nearest}, similarity caches are now a building block for a large variety of machine learning based inference systems for recommendations~\cite{sermpezis18}, image recognition~\cite{drolia2017precog,venugopal2018shadow} and network traffic~\cite{dlcaching22} classification. In these cases, the similarity cache stores past queries and the respective inference results to serve future similar requests.
Motivated by the large number of applications, much effort has been devoted recently to formalize similarity caching~\cite{neglia2021similarity,garetto2021content} as well as to propose new caching policies~\cite{zhou2020adaptive,sabnis2021grades,salem2021accai}.

RND-LRU is a similarity caching policy proposed in the seminal paper \cite{pandey2009nearest}. It is a variant of the least recently used (LRU) policy adapted to the similarity caching setting. We still lack an analytical evaluation of RND-LRU. The aim of this paper is to fill this gap.


Even for the classic LRU policy, computing the hit rate, i.e., the fraction of requests satisfied by the cache, under the Independent Reference Model (IRM) \cite{irm-fagin-1977}, is a challenging task whose computational cost is exponential in both the cache size and the number of items \cite{king1972analysis,dan1990approximate}. 
The so called Che's or time-to-live (TTL) is a highly efficient method for accurately estimating the hit rate of LRU under IRM \cite{che2002hierarchical,fofack2014performance}.
The TTL approximation leverages the analysis of an opportune cache---which benefits from decoupling caching decisions across items---and utilizes its hit rate  
as an estimate for the hit rate of LRU. The TTL approximation has received theoretical support under different assumptions regarding the request process \cite{fricker2012versatile,jiang2018convergence,leonardi2015least}.

As items in a RND-LRU cache are strongly coupled, RND-LRU analysis is even more challenging.

In fact, in classic caching, an item in the cache only serves requests for itself, while in similarity caching, the cached item can serve requests for a set of similar items as long as neither these, nor their most similar items, are stored in the cache. It follows that, in similarity caching, the number of requests satisfied by a cached item depends in general on \emph{the whole cache state}.

In this paper, we extend the TTL approximation to RND-LRU, by introducing the RND-TTL approximation; the latter is based on a novel similarity caching model, that we call RND-TTL. This approximation involves tuning the parameters of the RND-TTL model to estimate the hit rate of RND-LRU. Our contributions can be summarized as follows:
\begin{itemize}
    \item We propose a novel similarity caching model named RND-TTL and we compute its the hit rate under IRM. 

    \item We establish conditions on the parameters of RND-TTL to ensure that it emulates the behavior of RND-LRU.

    \item The parameter tuning process for the RND-TTL model involves solving a system of fixed point equations; we present a parameterized iterative algorithm to solve this system and provide a practical method for selecting the algorithm's parameter.

    \item We provide sufficient conditions for the iterative algorithm to converge. 
    
    \item We evaluate the accuracy of our RND-TTL approximation to estimate the hit rate of RND-LRU on both synthetic and real-world traces.

\end{itemize}

This paper revisits and extends our previous work \cite{mazziane2022computing}.  In particular, we have reframed the analysis of the RND-LRU cache after introducing the RND-TTL approximation. Also, the convergence results of the iterative algorithm are new.

The rest of the paper is organized as follows: Sec.~\ref{s:background-notation} introduces background, notation and assumptions. Section~\ref{s:TTL-SIM-LRU} presents the RND-TTL approximation and Sec.~\ref{s:Algorithm-fixed-point} focuses on our iterative algorithm for tuning the parameters of the RND-TTL cache. We evaluate its performance in Sec.~\ref{s:experiments} on both synthetic and real word traces and summarize our findings in Sec.~\ref{Conclusion}.

\begin{table}
  \caption{Table of notation} \centering
\begin{tabular}{ |l|l| } 
 \multicolumn{2}{c}{{\bf Basic parameters}} \\
 \hline
 $\mathcal{I}$ & set of items \\
 $N=|\mathcal{I}|$ & catalog size \\
 $C$ & cache capacity \\
$S$ & state of cache; set of cached items \\
$\lambda_n$ & arrival rate of requests for item $n$ \\
$\mathrm{dis}(\cdot, \cdot)$ &function measuring the dissimilarity between items \\
 \hline
 \multicolumn{2}{c}{{\bf RND-LRU}} \\
\hline
$\tilde{\Omega}$ & RND-LRU cache state space \\ 
 $d$ & similarity threshold \\
  $\Nopen{n}$ & neighbors of item $n$ \\ 
 $\Nclosed{n}$&  neighbors of item $n$ including $n$ \\
 $\Nopenin{m}{n}$ & items in $\Nopen{n}$  strictly closer to $n$ than $m$ \\
 $\Nclosedin{m}{n}$ & items in $\Nclosed{n}$ strictly closer to $n$ than $m$ \\
 $q_n(m)$& probability to use candidate $n$ to serve a request for $m$\\ 
 $\tilde{\lambda}_n^i$& insertion rate of item $n$ \\
 $\tilde{\lambda}_n^r$& refresh rate of item $n$ \\
  $B_{n}$ &  subset of $\tilde{\Omega}$ where no item in $\Nclosed{n}$ is cached\\ 
  $B_{n,m}$ &  subset of $\tilde{\Omega}$ where closest item to $n$ in cache is $m$\\ 
 $H$ & hit rate \\
 \hline
  \multicolumn{2}{c}{{\bf RND-TTL}} \\ \hline
  $\Omega$ & set of all potential sets of cached items in RND-TTL \\ 
  $\boldsymbol{\pi}$ & RND-TTL stationary distribution \\ 
  $ \lambda^i_n$ & insertion rate of item $n$ given that $n$ is not cached \\
  $\lambda^r_n$ & timer refresh rate of item $n$ given that $n$ is cached\\
  $X_{n}(t)$& $1$ if item $n$ is in cache at time $t$ and $0$ otherwise \\ 
  $o_n$& fraction of time item $n$ spent in the cache \\ 
  $h_n$& hit probability of item $n$ \\ 
  $T$ & initial timer duration \\ 
  \hline
\end{tabular}  
\end{table}

\section{Background, Notation and Assumptions}
\label{s:background-notation}


 \subsection{Similarity Caching} \label{sec:simcache}

 In \textbf{similarity search} systems, users can query a 
 remote server, storing a set of items $\mathcal{I}$, to send the $k$ most similar items to a given item $n$, according to a specific definition of similarity.
 In practice, items are often represented by vectors in $\mathbb{R}^{d}$ (called embeddings)~\cite{mcauley2015image} so that the dissimilarity cost, $\mathrm{dis(.,.)}: \mathcal{I}^2 \xrightarrow[]{} \mathbb{R}^{+}$, can be selected to be an opportune distance between the embeddings.
A cache, that stores a small fraction of the catalog~$\mathcal{I}$, could be deployed close to the users to reduce the fetching cost of similarity searches.
The seminal papers~\cite{falchi2008metric,pandey2009nearest} suggest the cache may answer a request using a local subset of items potentially different from the true closest neighbors to further reduce the fetching cost while still maintaining an acceptable dissimilarity cost. They refer to such caches as \textbf{similarity caches}.

\subsubsection{SIM-LRU}
 One of the popular dynamic similarity caching policies is SIM-LRU \cite{pandey2009nearest}. This policy maintains an ordered list of $C$  key-value pairs. A key is the embedding of some item~$n$ requested in the past and its corresponding value is a list containing the $k'\geq k$ closest items to $n$ in $I$. We denote by $S$ the set of keys stored in the cache.
Upon a similarity search for an item~$n$, SIM-LRU selects the closest local key to $n$, i.e., $\hat n \triangleq \argmin_{m\in S} \mathrm{dis}(n,m)$. If the dissimilarity cost between $n$ and $\hat n$ is smaller than or equal to a threshold $d>0$ ($\mathrm{dis}(n,\hat n)\leq d$), the request experiences an \textit{approximate hit}:\footnote{
    Note that we have an exact hit if $\hat n= n$.
} 
the cache replies to the request for $n$ selecting the $k$ closest items to $n$ among the $k'$ values stored for $\hat n$ and moves $\hat n$'s key-value pair to the front of the list. Otherwise, the request experiences a \emph{miss}: it is forwarded to the original server to retrieve the $k'$ closest items to $n$, out of which the closest $k$ are provided to the user. Upon a miss, the cache inserts the new key-value pair for $n$ at the front of the list and evicts the key-value pair at the bottom of the list. 
We observe how the use of key-value pairs in SIM-LRU essentially converts the search of $k$ closest items into the search of the closest key in the cache. For simplicity's sake, from now on we will just identify the items, their keys and the corresponding values and say for example that the cache replies to a request for $n$ with the closest item $\hat n$ in the cache.

\begin{algorithm}[tb]
\caption{RND-LRU \cite{pandey2009nearest} }\label{alg:RND-LRU}
\begin{algorithmic}[1]
\STATE \textbf{Input:}
\STATE A sequence of requests $\mathbf{R}_T = (r_1, \ldots, r_T)$
\STATE An initial cache state $\mathbf{S}_0 = (s_{0,1}, \ldots, s_{0,C})$
\STATE Threshold similarity $d$ and probabilities $(q_{n}(m))_{n,m \in \mathcal{I}^2}$
\STATE \textbf{Output:}
\STATE The cache state at each time step $t$.
\STATE \textbf{Algorithm:}
\FOR{$t = 1$ to $T$}
    \STATE Compute the closest item to $r_t$ in $S_{t-1}$ as $\hat{r}_t = \arg\min_{m \in S_{t-1}} \text{dis}(r_t, m)$. \label{line:closet-item-in-cache}
    \STATE Generate a uniform random number $u \in [0,1]$ 
    \IF{$u \leq q_{\hat{r}_t}(r_t)$  and $\text{dis}(\hat{r}_t, r_t) \leq d$} \label{line:condition-approximate hit}
        \STATE \textbf{Case 1:} Hit, encompassing exact hit and approximate hit
        \STATE $S_t \leftarrow$MoveToFront$(S_{t-1},\hat{r}_t)$ \label{line:action-approximate-hit}  
    \ELSE
        \STATE \textbf{Case 2:} Miss
        \STATE $S_t \leftarrow $InsertAtFront$(S_{t-1} \setminus s_{t-1,C}, r_t)$ \label{line:action-miss}  
    \ENDIF
\ENDFOR
\end{algorithmic}
\end{algorithm}

\subsubsection{RND-LRU}

RND-LRU\cite{pandey2009nearest} is a generalization of SIM-LRU. Similar to SIM-LRU, RND-LRU can use $\hat{n}$, the closest item in the cache to $n$, to reply to queries for $n$ with approximate matching. However, RND-LRU introduces an additional element of randomness to determine whether $\hat{n}$ should be effectively used to fulfill the considered query. This randomness is encompassed in RND-LRU parameters $(q_{m}(n))_{n,m\in \mathcal{I}^2}$. For every two items $n$ and $m$, $q_{m}(n)$ denotes the probability that a candidate item $m$ is effectively used to reply a query for $n$, given that $m$ is the closest item to $n$ in cache, $m=\hat{n}$, and $\mathrm{dis}(n ,m)\leq d$. Throughout this work, we let $q_n(n)=1$.

The details of RND-LRU are presented in Algorithm~\ref{alg:RND-LRU}.
 Upon receiving a request for item $n$, RND-LRU locates the closest item to $n$ in the cache, denoted as $\hat{n}$ (line \ref{line:closet-item-in-cache}).
The case $\hat n =n$ corresponds to an exact hit. Otherwise, if $\hat n \neq n$ and 
$\textrm{dis}(\hat{n},n) \leq d$, RND-LRU randomly yields an approximate hit, with a probability of $q_{\hat{n}}(n)$. Both cases are captured in lines \ref{line:closet-item-in-cache}-\ref{line:action-approximate-hit}. A random sample $u$ is generated uniformly at random in the interval between 0 and 1. If $u \leq q_{\hat{n}}(n)$ and $\textrm{dis}(\hat{n},n) \leq d$  we have a hit (line \ref{line:condition-approximate hit}). 
Note that hits include approximate hits and  exact hits, as the condition in line \ref{line:condition-approximate hit}   is true  when $\hat{n}=n$ ($u \leq q_n(n)=1$ and $\textrm{dis}(n,n)=0$).
After a hit, 
the cache is rearranged by moving $\hat{n}$ to the front of the list (line \ref{line:action-approximate-hit}). Alternatively, if $\textrm{dis}(\hat{n},n) > d$ or $u > q_{\hat{n}}(n)$  
we have a miss: 
RND-LRU evicts the least recently used item at the bottom of the list and inserts $n$ at the front (line \ref{line:action-miss}).

For practical purposes, the function $q_{m}(n)$ decreases with the  dissimilarity between $m$ and $n$, with $q_{n}(n)=1$ and $q_{m}(n)=0$ for $\mathrm{dis}(m,n)> d$. We retrieve the behavior of SIM-LRU when $q_{m}(n)=1$ if $\mathrm{dis}(m ,n)\leq d$ for any two items $n$ and $m$, and the behavior of LRU when $q_{m}(n)=1$ if $m=n$ and $q_{m}(n)=0$ otherwise. For acceptance probabilities between these extremes, RND-LRU yields a caching policy that lies between SIM-LRU and LRU.

\subsubsection{Hit Rate} In the realm of similarity caching, hits may refer either to exact hits towards the originally requested item or to approximate hits, encompassing requests served by items that are deemed similar to the originally requested item.    In this paper, we select the fraction of hits whether exact or approximate as a performance metric for similarity caches, and we refer to it as the hit rate.

\subsection{TTL Approximation for LRU Cache}
\label{ss:ttl-lru}

\noindent \textbf{TTL cache.} Time to Live (TTL) is a mechanism that restricts the duration of data within a network. TTL is used, for instance, by Content Delivery Networks (CDNs) and by the Domain Name System (DNS) to determine when cached items must be evicted~\cite{fofack2013modeling,alouf:hal-01258189}.
 In a TTL cache, each item stored in the cache is associated with a timer, initially set to a duration of $T$. When the timer for an item expires, that item is evicted from the cache. When a request is made for an item $n$ that is not currently in the cache, it results in a cache ``Miss'', and subsequently, $n$ is added to the cache. Conversely, if $n$ is already present in the cache and a new request is received for it, it is considered a cache ``hit'', and the timer associated with $n$ is reset to the original value of $T$. When the requests follow a Poisson process with request rate $\lambda_n$ for item $n$, and each request is independent from the previous ones, it follows that a hit occurs for an item if the inter arrival time between two requests for the same item is smaller than $T$. Thus, the hit probability for $n$, $h_n$, is given by: $h_n  = 1- \e^{-\lambda_n T}$. Since the flow of arrivals is Poisson,  the ``Poisson Arrivals See Time Averages'' (PASTA) property \cite{wolff1982poisson} implies that the probability~$\newoccupancy_n$ that an item $n$ is in the cache (i.e., its occupancy probability, or simply occupancy) is equal to the probability that a request for that same item experiences a hit, i.e. $\hit_n = \newoccupancy_n$.

\noindent \textbf{TTL approximation.} The TTL approximation is an efficient method to estimate, under IRM and a Poisson request process, the hit rate of the Least-Recently-Used (LRU) caching policy, a particular case of SIM-LRU where each item does not have other similar items but itself, ~\cite{irm-fagin-1977,che2002hierarchical}. This method has been supported by theoretical arguments in \cite{fricker2012versatile,jiang2018convergence,leonardi2015least}. The idea of the TTL approximation is to estimate the hit rate of an LRU cache with the hit rate of a TTL cache with parameter $T$ set to the specific value $t_C$ that guarantees that the expected number of cached items in the TTL cache is equal to $C$, namely: 
\begin{equation}\label{e:tC-Defintion}
    \sum_{n\in \mathcal{I}} \left(1-\e^{-\lambda_n  t_C} \right) = C . 
\end{equation} \noindent Parameter $t_C$ is referred to as the cache characteristic time.  The above expression allows us to compute $t_C$, e.g., by means of a bisection method. The hit probability for LRU is then approximated as: 
\begin{equation}
 \label{e:hit-rate-item-lru-che}
        h_n  \approx 1- \e^{-\lambda_n t_C}. 
\end{equation}


\noindent

\subsection{ Notation and  Assumptions}


We denote the set of items as $\mathcal{I}$. Items are often represented by vectors in $\mathbb{R}^{d}$ (called embeddings)~\cite{mcauley2015image} so that the dissimilarity cost, $\mathrm{dis(.,.)}: \mathcal{I}^2 \xrightarrow[]{} \mathbb{R}^{+}$, can be selected to be an opportune distance between the embeddings. We denote SIM-LRU and RND-LRU similarity threshold as $d$. In RND-LRU, we denote by $q_{m}(n)$ the probability that a candidate item $m$ is effectively used to reply a query for $n$, given that $m$ is the closest item to $n$ in cache and $\mathrm{dis}(n ,m)\leq d$. As in SIM-LRU and RND-LRU a request for item~$n$ could be served by any item within a distance $d$ of $n$, it is convenient to define the set of such items as $\Nclosed{n}\triangleq \{m\in \mathcal{I}: \; \mathrm{dis}(n,m) \leq d\}$. We call the elements in $\Nclosed{n}$ distinct from $n$ the \emph{neighbors} of $n$ and denote their set as $\Nopen{n}\triangleq \Nclosed{n} \setminus \{n\}$. For convenience, we also define in a similar way the sets $\Nopenin{m}{n}$ and $\Nclosedin{m}{n}$, respective subsets of $\Nopen{n}$ and $\Nclosed{n}$, designating items that are closer to $n$ than~$m$, i.e. $\Nopenin{m}{n}\triangleq \{ l\in \Nopen{n}: \; \mathrm{dis}(n,l)< \mathrm{dis}(n,m) \}$ and $\Nclosedin{m}{n}\triangleq \{ l\in \Nclosed{n}: \; \mathrm{dis}(n,l)< \mathrm{dis}(n,m) \} $.  As commonly used, $\mathds{1}(A)$ stands for the indicator function that $A$ is true. 

Throughout the paper, the following assumptions hold true.
\begin{assumption} 
\label{assum-IRM-request-process}
   Requests follow a Poisson process with request rate $\lambda_n$ for item $n$, and each request is independent from the previous ones, i.e., requests follow the Independent Reference Model (IRM)~\cite{irm-fagin-1977}.
\end{assumption}
\begin{assumption}
\label{assum-RND-TTL-StationaryDistribution}
    The state of RND-LRU is a continuous time Markov chain in $\tilde \Omega$ that has a limiting distribution.  
\end{assumption}

\begin{assumption} 
\label{assum:strict-order-neighbors}
    Items in $\Nopen{n}$ can be strictly ordered according to their dissimilarity with respect to $n$, i.e., for any $m,l\in \Nopen{n}$ and $m\neq l$, we have $\mathrm{dis}(n,m)\neq \mathrm{dis}(n,l)$.
\end{assumption}

If Assumption \ref{assum:strict-order-neighbors} does not hold, 
we can introduce an arbitrary order for items with the same dissimilarity.

\section{RND-TTL Approximation for Similarity Caching}
\label{s:TTL-SIM-LRU}


We introduce the RND-TTL approximation method, which aims to estimate the hit rate of RND-LRU by utilizing a RND-TTL cache that emulates the behavior of RND-LRU. Firstly, in Section \ref{ss:RND-TTL-sim-caching}, we provide a comprehensive description of the RND-TTL cache model, highlighting its specific characteristics. We also derive the occupancy probability, which represents the fraction of time spent by an item in the RND-TTL cache, and compute the probability of an item experiencing a hit under Assumption~\ref{assum-IRM-request-process}. Secondly, in Section~\ref{ss:RND-LRU-RND-TTL}, we demonstrate how the RND-TTL model captures the dynamics and behavior of RND-LRU. Thirdly, in Section~\ref{ss:RND-TTL-approximation}, we present a method for selecting appropriate parameters for the RND-TTL cache to estimate the hit rate of RND-LRU.

\subsection{The RND-TTL Caching Model}
\label{ss:RND-TTL-sim-caching}

\subsubsection{A TTL-based similarity caching policy} 
In a TTL cache, each item $n$ is associated with a timer initially set to $T_n$. Timers decrease over time, and an item is evicted from cache when its timer reaches zero.   
 Each time the item is requested, 
if an exact cache hit occurs, the item is served and its
TTL is reset.  Otherwise, we compute the closest item to $n$ in the cache, $\hat{n}$. If $\hat n $ is within a distance $d$ of $n$, i.e. $\hat n \in \Nopen{n}$, it can serve as a potential candidate to fulfill the request for $n$. In this case, a uniform random number $u \in [0,1]$ is generated. 
One of the following two scenarios occur:
\begin{itemize}
\item If $u < q_{\hat{n}}(n)$ and there is a candidate $\hat{n}$  to serve the request, such candidate is effectively used, corresponding to an approximate hit.  
\item If $u 
> q_{\hat{n}}(n)$ or there is no candidate to serve the request, we have a cache miss, and item $n$ is inserted into the cache. 
\end{itemize}
In all cases, the item used to serve the request has its 
TTL reset to $T_n$.



Note that RND-LRU and the above described TTL-based policy have in common the rules used to determine when and if items should be used to serve a given request.  However, whereas an item in an RND-LRU cache can be evicted as a result of the arrival of a request  that it cannot serve, 
in the TTL-based caches evictions occur after TTL reaches zero. In addition, a refresh in an  RND-LRU cache corresponds to a ``move to front'' operation, whereas in  a TTL-based cache it corresponds to a TTL reset.

Algorithm~\ref{alg:TTL-SIM2} in~\ref{app:algorithm-TTL} can be used to simulate the above policy.   Note that TTL-based caches are versatile, as we can adjust the hit probability of a file by controlling the time a file is kept in cache. In this paper, we leverage such versatility by considering a TTL-based model inspired by the above TTL-policy, that we refer to as RND-TTL. 

\subsubsection{RND-TTL and similarity caching} 
\label{sec:relaterndttltosimilarity}
The RND-TTL cache model is parameterized by the vector of TTLs 
 $\bm{T}=(T)_{n\in \mathcal{I}}$ and by two additional vectors $\boldsymbol{\lambda^r}=(\lambda_n^r)_{n\in \mathcal{I}}$, $\mathbf{p^i}=(p_n^{i})_{n\in \mathcal{I}}$ as described next.

The parameters $\bm{T}$ and $\boldsymbol{\lambda^r}$ dictate how long items remain in the cache. 
In the RND-TTL cache, each item is assigned a timer upon its insertion in the cache, and is evicted from the cache when its timer expires. Item $n$'s timer is initialized with the duration $T_n$ and is reset to $T_n$ over time according to a Poisson process with rate $\lambda_n^r$ (the superscript ``$r$'' refers to ``reset'' or ``refresh'').



Upon receiving a request for an item $n$, either $n$ is in the cache and is used to fulfill the request or it is not in the cache which gives rise to the two following possible scenarios:
    \begin{itemize}
     \item The request for $n$ results in a cache miss and consequently item $n$ is inserted into the cache. This scenario occurs with probability $p_n^i$ (the superscript ``$i$'' refers to ``insertion'').
     \item The request is fulfilled by the nearest item to $n$ in the cache, which occurs with probability $1-p_n^i$. 
 \end{itemize}

Note that the above model is inspired by the behavior of the TTL-based policy described in the previous section, while assuming that the dynamics of items are decoupled from each other as in traditional TTL systems. Indeed, $p_n^i$ and $\lambda_n^r$ can be set according to the modeling purposes. In particular, following the TTL approximation, in the rest of this paper we assume a single TTL value $T$ for all items, $T_n=T$, for all $n \in \mathcal{I}.$ Then,  we show that  $T$, $p_n^i$ and $\lambda_n^r$ can be set in a way to capture the behavior of RND-LRU, with the coupling between items reflected through a parametrization of those values.

\subsubsection{Computing occupancies} 
We are interested in computing the fraction of time $o_n$ spent by item $n$ in the RND-TTL cache in the stationary setting. Let $(X_n(t))_{t\geq 0}$ be the stochastic process taking value $1$ when item $n$ is in the cache and $0$ otherwise. The occupancy $o_n$ is formally written as follows.
\begin{align}
    o_n \triangleq \lim_{T\to +\infty} \frac{1}{T} \int_{0}^{T} \mathds{1}(X_n(u)=1) \;du. 
\end{align}

\begin{proposition}[Occupancy]
\label{prop:occupancy-RND-TTL}
    Under Assumption \ref{assum-IRM-request-process}, the occupancy in the RND-TTL cache of item $n$ is expressed as: 
\begin{align}
\label{e:occupancy-simlru-che}
    o_n =   \myon,
\end{align}
where 
\begin{align}\label{e:lambda-entry-def}
\lambda_n^i &= \lambda_n \cdot p_n^i.
\end{align}
\end{proposition}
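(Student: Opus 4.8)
The plan is to model the single-item process $(X_n(t))_{t\geq 0}$ as a two-state process — ``in cache'' ($X_n=1$) and ``out of cache'' ($X_n=0$) — and to compute the long-run fraction of time spent in state $1$ by a renewal-reward argument. First I would describe the dynamics of $X_n$ under Assumption~\ref{assum-IRM-request-process}: while $X_n=0$, item $n$ enters the cache at the instants of a Poisson process of rate $\lambda_n^i = \lambda_n p_n^i$ (requests arrive at rate $\lambda_n$, and each independently triggers an insertion with probability $p_n^i$); while $X_n=1$, the item carries a timer that was last set to $T$ and is reset to $T$ at the epochs of an independent Poisson process of rate $\lambda_n^r$, and $n$ leaves the cache when the timer elapses without an intervening refresh. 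Thus a generic ``in-cache'' sojourn ends after a duration equal to $T$ minus nothing if no refresh occurs in $[0,T)$, or else the sojourn is prolonged; by the memorylessness of the refresh process, the sojourn length is $\min$-type and its expectation is most cleanly obtained as below.

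The key computation is the expected in-cache sojourn time. Let $\tau$ be the time from the last reset of the timer to the first refresh, so $\tau\sim\mathrm{Exp}(\lambda_n^r)$. The in-cache period consists of a sequence of i.i.d.\ ``timer-to-refresh-or-expiry'' intervals: each interval lasts $\min(\tau, T)$; if $\tau < T$ (probability $1-\e^{-\lambda_n^r T}$) the timer is refreshed and another such interval begins, while if $\tau \geq T$ (probability $\e^{-\lambda_n^r T}$) the item is evicted. Hence the number of such sub-intervals in one in-cache sojourn is geometric with success probability $\e^{-\lambda_n^r T}$, giving mean $\e^{\lambda_n^r T}$, and each sub-interval has mean $\E{\min(\tau,T)} = (1-\e^{-\lambda_n^r T})/\lambda_n^r$. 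Multiplying, the expected in-cache sojourn is
\begin{align}
\E{\text{in-cache}} = \e^{\lambda_n^r T}\cdot\frac{1-\e^{-\lambda_n^r T}}{\lambda_n^r} = \frac{\e^{\lambda_n^r T}-1}{\lambda_n^r}.
\end{align}
The expected out-of-cache period is simply $1/\lambda_n^i$, the mean of an $\mathrm{Exp}(\lambda_n^i)$ waiting time for the first insertion-triggering request. By the renewal-reward theorem (a standard regenerative argument, the regeneration points being the insertion instants),
\begin{align}
o_n = \frac{\E{\text{in-cache}}}{\E{\text{in-cache}} + \E{\text{out-of-cache}}} = \frac{(\e^{\lambda_n^r T}-1)/\lambda_n^r}{(\e^{\lambda_n^r T}-1)/\lambda_n^r + 1/\lambda_n^i},
\end{align}
and dividing numerator and denominator by $(\e^{\lambda_n^r T}-1)/\lambda_n^r$ yields exactly $o_n = \left(\tfrac{1}{\lambda_n^i}\cdot\tfrac{\lambda_n^r}{\e^{\lambda_n^r T}-1} + 1\right)^{-1}$, matching~\eqref{e:occupancy-simlru-che} with $t_C$ there playing the role of $T$.

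The main obstacle I expect is being careful about what exactly "resets the timer" while the item is in the cache, and in particular whether an insertion-triggering event can occur while $X_n=1$ — it cannot, since insertion is defined only on a miss — so the only mechanism operating in state $1$ is the refresh process at rate $\lambda_n^r$, and this is why $\lambda_n^i$ and $\lambda_n^r$ enter asymmetrically. A secondary technical point is justifying the regenerative/renewal-reward step: I would note that Assumption~\ref{assum-IRM-request-process} makes the Poisson insertion and refresh streams independent, so the successive insertion epochs form genuine regeneration points with i.i.d.\ cycles of finite mean, legitimizing the ratio-of-expectations formula for the stationary fraction of time. Everything else is the elementary geometric-sum computation above.
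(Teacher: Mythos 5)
Your proof is correct and follows essentially the same route as the paper's: a renewal-reward argument with the off-period mean $1/\lambda_n^i$ and the on-period mean $(\e^{\lambda_n^r T}-1)/\lambda_n^r$ obtained from a geometric number of exponential refresh intervals via Wald's identity (the paper phrases the on-period as the busy period of an M/D/$\infty$ queue, but its appendix derivation is the same geometric-sum computation). No gaps.
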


\begin{proof}  The result follows from a renewal argument, where $\E{\Ton}$ and $\E{\Toff}$ are the mean time that an item resides on and off the cache, per cycle,
\begin{align}
    o_n = \frac{\E{\Ton}}{\E{\Toff} + \E{\Ton}} = \left(\E{\Toff} \cdot\frac{1}{\E{\Ton}}   +1  \right)^{-1}.
\end{align}
In the above expression, 
$\E{\Ton}$ is the mean duration of a busy period of an M/D/$\infty$ queue with   arrival rate and mean residence time  given by $\lambda_n^r$ and $t_C$,  respectively,
\begin{equation}
    \E{\Ton} = \frac{1}{\lambda_n^r} \left(e^{\lambda_n^r t_C} -1\right).
\end{equation}
$\E{\Toff}$ is the mean time to insert an item after it is removed,
\begin{equation}
    \E{\Toff} = \frac{1}{\lambda_n^i}.  
\end{equation}
For additional details  we refer the reader to~\ref{app:proofoccupancy}. 
\end{proof}

We stress that under Assumption~\ref{assum-IRM-request-process} the stationary distribution of $X_n$ is given by the occupancy, namely, $\lim_{t\to \infty} \Proba{X_{n}(t)=1}= o_n$ and $\lim_{t\to \infty} \Proba{X_{n}(t)=0}= 1-o_n$, thanks to \cite[Thm. 3.4.4]{ross1995stochastic}.

\subsubsection{Set of cached items distribution} 



Let $S$ denote the set of cached contents. In TTL-based policies such as RND-TTL, $S$ belongs to the power set of the catalog $\mathcal{I}$, i.e., $S \in \Omega$, where $\Omega = 2^\mathcal{I}$ and $2^\mathcal{I}$ denotes the power set of $\mathcal{I}$. Observing that the caching decisions in the RND-TTL cache are independent across items, it follows that for any set of cached items $S\in \Omega$, the corresponding stationary probability $\pi_S$ can be computed as follows:
\begin{align}\label{e:stationary-distribution-RND-TTL}
\pi_S= \prod_{n\in S} o_n \cdot \prod_{m\notin S} (1-o_m). 
\end{align}

\subsubsection{Item hit probability} 
We now give an explicit expression for the hit probability for each item in the RND-TTL cache. 

\begin{proposition}[Item hit probability]
\label{prop:hit-rate-RND-TTL}
Under Assumption \ref{assum-IRM-request-process}, the hit probability $\hit_{n}$ for item $n$ in the RND-TTL cache is given by: 
\begin{equation}\label{e:hit-rate-rnd-TTL}
   \hit_{n} =  o_n + (1-o_n)\cdot (1-p_n^i)\; , 
\end{equation}
where $o_n$ is given in \eqref{e:occupancy-simlru-che}. 
\end{proposition}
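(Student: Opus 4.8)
The plan is to condition on whether item $n$ is in the cache at the instant of a request, using the PASTA property to relate the state seen by an arriving request to the stationary state. First I would note that, since requests for $n$ form a Poisson process and (by Assumption~\ref{assum-IRM-request-process} together with the independence of caching decisions across items in RND-TTL) an arriving request sees the cache in its stationary regime, the probability that $n$ is in the cache when a request for $n$ arrives equals the stationary occupancy $o_n$ computed in Proposition~\ref{prop:occupancy-RND-TTL}. This gives the first term $o_n$ in \eqref{e:hit-rate-rnd-TTL}: whenever $n$ itself is cached, the request for $n$ is served by $n$ (an exact hit), which always counts as a hit.

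Next I would handle the complementary event, that $n$ is not in the cache, which has stationary probability $1-o_n$. Here I would invoke the RND-TTL model's defining rule (Section~\ref{sec:relaterndttltosimilarity}): conditioned on $n$ not being cached, a request for $n$ results in a miss (and insertion of $n$) with probability $p_n^i$, and is served by the nearest cached item to $n$ — an approximate hit — with probability $1-p_n^i$. Since the per-request insertion coin is by assumption independent of the cache state, the contribution of this branch to the hit probability is $(1-o_n)\cdot(1-p_n^i)$. Adding the two mutually exclusive contributions yields $\hit_n = o_n + (1-o_n)(1-p_n^i)$, as claimed; substituting \eqref{e:occupancy-simlru-che} for $o_n$ completes the statement.

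The main subtlety — and the step I would be most careful about — is the PASTA justification: one must argue that a Poisson arrival for $n$ indeed sees the time-stationary distribution of $X_n(t)$, which is legitimate here because in the RND-TTL model the evolution of $X_n$ is driven only by $n$'s own Poisson request stream and its independent refresh/insertion randomness, so the arriving request is independent of the state just before it (the "lack of anticipation" condition for PASTA). Once that is in place, the rest is just the law of total probability over the two-valued state $X_n \in \{0,1\}$ and an appeal to the model's transition rules, with no further computation needed beyond quoting $o_n$ from the previous proposition.
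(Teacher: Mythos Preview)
Your proposal is correct and follows essentially the same approach as the paper: invoke PASTA so that an arriving Poisson request for $n$ sees $X_n=1$ with probability $o_n$, then condition on $X_n$ and use the model's rule that a request with $n$ absent is an approximate hit with probability $1-p_n^i$. The only cosmetic difference is that the paper packages this via a ``generalized PASTA'' lemma on state-modulated counting processes (computing the hit \emph{rate} first and dividing by $\lambda_n$), machinery it develops mainly for reuse in Propositions~\ref{prop:entry-rate-RND-LRU}--\ref{prop:refresh-rate-RND-LRU}, whereas you apply standard PASTA plus the law of total probability directly.
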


\begin{proof}
    See \ref{app:PASTA} and \ref{app:hit-rate-RND-TTL}. 
    
\end{proof}

The RND-TTL cache can be seen as a generalization of the TTL cache as the latter can be obtained when two conditions are met: $(i)$ 
$p_n^i=1$ for each item $n$, and $(ii)$ the timer refresh process of each item $n$ coincides with its request process. The hit rate of the TTL cache can be retrieved from \eqref{e:hit-rate-rnd-TTL} and \eqref{e:occupancy-simlru-che} by letting $p_n^i = 1$ and  $\lambda_n^r = \lambda_n^i = \lambda_n$. Equations \eqref{e:hit-rate-item-lru-che}, \eqref{e:occupancy-simlru-che} and \eqref{e:hit-rate-rnd-TTL} are then all equivalent.



While we have described the RND-TTL cache model using parameters $T$, $\boldsymbol{\lambda}^{r}$ and $\mathbf{p}^{i}$, in what follows, it will be more convenient to retain as parameters $T$, $\boldsymbol{\lambda^r}$, and $\boldsymbol{\lambda^i}=(\lambda_n^i)_{n\in \mathcal{I}}$ (see \eqref{e:lambda-entry-def}). 

\subsection{Relation between RND-LRU and RND-TTL}
\label{ss:RND-LRU-RND-TTL}

Using the RND-TTL cache to estimate the hit rate of RND-LRU is analogous to using the TTL cache to approximate the hit rate of LRU. 
Both RND-TTL and TTL enable the decoupling of caching decisions across items, with the goal of capturing the behavior of an item $n$ in terms of its insertion and eviction from the cache, independently of other items. We revisit the concepts of timer expiration, insertion policy, and timer re-initialization in the TTL cache and the RND-TTL cache and establish their connection to the caching decisions of LRU and RND-LRU, respectively.

{\noindent \textbf{Timer expiration.}} In both RND-LRU and LRU, an item is evicted from the cache when it is no longer among the $C$ recently used items. This behavior is captured and represented in RND-TTL and TTL caches by assigning a timer with a duration of $T$ to each cached item. An item is then evicted upon expiration of its timer. 


{\noindent \textbf{Insertion in the cache.}} In LRU/TTL cache, a non cached item $n$ is always inserted into the cache when it receives a request. It follows that the insertion rate for $n$, when it is not cached, is equal to its request rate $\lambda_n$ for both LRU and TTL. However, in RND-LRU, this is not the case as a non-cached item $n$ can be served by a similar item already in the cache. As a result, when $n$ is not in the cache, the \textbf{insertion rate} for item $n$ in RND-LRU is smaller or equal to $\lambda_n$. In RND-TTL, the parameter $\lambda_n^i$ serves as the insertion rate for item $n$ when it is not cached, allowing RND-TTL to capture the insertion behavior of item $n$ in the RND-LRU cache by tuning $\lambda_n^i$ accordingly.

{\noindent \textbf{Timer re-initialization.} }In LRU, when a cached item $n$ receives a request, it is refreshed by being moved to the front of the list. This behavior is captured in TTL by re-initializing the timer for item $n$. It follows that the \textbf{refresh rate} for $n$, when it is in the cache, is equal to $\lambda_n$ for both LRU and TTL. However, in the case of RND-LRU, the refresh process is not solely based on its own request. Item $n$ might also be refreshed when its neighboring items receive requests. As a result, in RND-LRU, when $n$ is cached, the \textbf{refresh rate} of $n$ is greater than or equal to $\lambda_n$. In RND-TTL, the parameter $\lambda_n^r$ determines the rate at which $n$'s timer is re-initialized when $n$ is cached. By appropriately adjusting $\lambda_n^r$, RND-TTL can capture the refresh operation of an item in RND-LRU.

In Section \ref{ss:RND-TTL-approximation}, we examine the insertion rate and refresh rate of an item in RND-LRU in detail, which allows us to derive guidelines for selecting the parameters $\boldsymbol{\lambda^i}, \boldsymbol{\lambda^r}$, and $T$ for the RND-TTL approximation.

\subsection{RND-TTL approximation to RND-LRU}
\label{ss:RND-TTL-approximation}
We propose an extension of the TTL approximation, namely the RND-TTL approximation, for estimating the hit rate of RND-LRU under Assumption \ref{assum-IRM-request-process}. We recall that the TTL approximation approximates the hit rate of LRU with the hit rate of a TTL cache and it has received theoretical support in \cite{fricker2012versatile,jiang2018convergence,leonardi2015least}. The RND-TTL approximation chooses the parameters of the RND-TTL cache model, namely $\boldsymbol{\lambda^i}$ and $\boldsymbol{\lambda^r}$, based on how the insertion rate and refresh rate for an item in RND-LRU depend on the limiting distribution of the Markov chain representing RND-LRU cache's state. Furthermore, $T$ is chosen in a similar manner to its selection in the TTL approximation, ensuring that the expected number of cached items is equal to~$C$. 


\begin{proposition}[RND-LRU insertion rate]
\label{prop:entry-rate-RND-LRU}
    Under Assumptions \ref{assum-IRM-request-process}, \ref{assum-RND-TTL-StationaryDistribution} and \ref{assum:strict-order-neighbors}, the insertion rate of item $n$ in RND-LRU, $\tilde{\lambda}_{n}^i$, is expressed as: 
    \begin{align}
        \tilde{\lambda}_{n}^i = \lambda_n \left(\Proba{B_{n}} + \sum_{m\in \Nopen{n}} (1-q_{m}(n)) \Proba{B_{n,m}}\right),
    \end{align}
    where 
    \begin{align}
    & B_{n}=\{ S\in \tilde \Omega: \; S\cap \Nclosed{n}=\emptyset \},\\
    & B_{n,m}=\{ S\in \tilde \Omega: \; m\in S , S \cap \Nclosedin{m}{n}=\emptyset\}\,, \, \forall m\in \Nopen{n}\,,
    \label{e:Bnm}
    \end{align}
and $\tilde \Omega$ refers to the cache state space with RND-LRU.
\end{proposition}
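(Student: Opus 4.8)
The plan is to compute the rate at which item $n$ is inserted into the RND-LRU cache by conditioning on the cache state $S \in \tilde\Omega$ and applying the PASTA property. Since requests for $n$ form a Poisson process of rate $\lambda_n$ (Assumption~\ref{assum-IRM-request-process}) and are independent of the cache state, a request for $n$ arriving when the cache is in state $S$ sees $S$ distributed according to the limiting distribution of the RND-LRU Markov chain (which exists by Assumption~\ref{assum-RND-TTL-StationaryDistribution}). So I would write $\tilde\lambda_n^i = \lambda_n \sum_{S \in \tilde\Omega} \Proba{S} \cdot r_n(S)$, where $r_n(S)$ is the probability that a request for $n$, issued while the cache is in state $S$, triggers an insertion of $n$ (i.e., a miss in the RND-LRU sense). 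The task then reduces to identifying $r_n(S)$ for each $S$ and grouping the states.

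First I would note that an insertion of $n$ happens exactly when the request for $n$ is a miss in Algorithm~\ref{alg:RND-LRU}: either $n \notin S$ and no candidate in $S$ is accepted, or—and here I would remark that if $n \in S$ we have an exact hit, never an insertion, so those states contribute $0$. For $n \notin S$, let $\hat n = \argmin_{m \in S}\mathrm{dis}(n,m)$ be the closest cached item (well-defined and unique among neighbors by Assumption~\ref{assum:strict-order-neighbors}). Two sub-cases: (a) $S \cap \Nclosed{n} = \emptyset$, i.e., no cached item lies within distance $d$ of $n$; then there is no eligible candidate at all, the request is always a miss, and $r_n(S) = 1$. This is precisely the event $B_n$. (b) $S \cap \Nclosed{n} \neq \emptyset$; then $\hat n$ is the unique closest neighbor $m \in \Nopen{n}$, and by definition of $\hat n$ this means $m \in S$ and $S \cap \Nclosedin{m}{n} = \emptyset$ — exactly the event $B_{n,m}$. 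Conditioned on $B_{n,m}$, the request for $n$ is a miss with probability $1 - q_m(n)$ (the acceptance coin fails), so $r_n(S) = 1 - q_m(n)$ on this event.

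Then I would observe that the events $\{B_n\} \cup \{B_{n,m} : m \in \Nopen{n}\}$, together with the event $\{n \in S\}$, partition $\tilde\Omega$: every state either contains $n$, or contains no neighbor of $n$ (that is $B_n$), or has a unique closest neighbor $m$ to $n$ — uniqueness being guaranteed by Assumption~\ref{assum:strict-order-neighbors} — which places it in exactly one $B_{n,m}$. Summing $\lambda_n \, r_n(S)$ against $\Proba{S}$ over this partition yields
\[
\tilde\lambda_n^i = \lambda_n\left( \Proba{B_n} + \sum_{m \in \Nopen{n}} (1 - q_m(n)) \, \Proba{B_{n,m}} \right),
\]
which is the claimed formula.

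The main obstacle I anticipate is not any single hard step but rather making the partition argument airtight: one must verify that the $B_{n,m}$ are genuinely disjoint (this is exactly where Assumption~\ref{assum:strict-order-neighbors} is used — without a strict order two neighbors could tie for closest and the definitions of $\Nclosedin{m}{n}$ would need an arbitrary tie-break, as noted after Assumption~\ref{assum:strict-order-neighbors}), and that a state in $B_{n,m}$ indeed has $\hat n = m$, which requires checking that $m$ being in $S$ with nothing in $S$ strictly closer to $n$ forces $m$ to be the argmin — and in particular that $m \in \Nclosed{n}$, so $\mathrm{dis}(n,\hat n) \le d$ and the candidate is actually eligible. A minor subtlety worth a sentence is the justification that the request process "sees" the stationary state: this is PASTA applied to the RND-LRU Markov chain, combined with Assumption~\ref{assum-RND-TTL-StationaryDistribution} ensuring the limiting distribution $\Proba{\cdot}$ is well-defined.
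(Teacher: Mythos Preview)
Your proposal is correct and mirrors the paper's argument: partition the states with $n\notin S$ into $B_n$ and the $B_{n,m}$, read off the conditional miss probability on each piece, and average against the stationary distribution via PASTA. The paper merely packages the last step through its Generalized PASTA result (Theorem~\ref{th:PASTA-MMPP} and Corollary~\ref{cor:PASTA-MMPP}), constructing the insertion count as a state-modulated thinning of the Poisson request stream, which is exactly your $\lambda_n \sum_S \Proba{S}\, r_n(S)$ written more formally.
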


\begin{proof}
    See  \ref{app:PASTA} and \ref{app:proof-entry-rate-RND-LRU}.
    
\end{proof}

\begin{proposition}[RND-LRU refresh rate]
\label{prop:refresh-rate-RND-LRU}
    Under Assumptions \ref{assum-IRM-request-process}, \ref{assum-RND-TTL-StationaryDistribution} and \ref{assum:strict-order-neighbors}, the refresh rate of item $n$ in RND-LRU, $\tilde{\lambda}_n^r$, is expressed as: 
   \begin{align}
&\tilde{\lambda}_{n}^r = \sum_{m\in \Nclosed{n}} q_{n}(m) \lambda_m \,\Proba{B_{m,n}},
\end{align}
where $B_{m,n}$ is defined as in \eqref{e:Bnm} for every $m\in \Nclosed{n}$.
\end{proposition}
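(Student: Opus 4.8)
The argument will parallel that of Proposition~\ref{prop:entry-rate-RND-LRU}, with ``$n$ is inserted'' replaced by ``$n$'s timer is refreshed''. The starting observation is that, in RND-LRU, the list position (equivalently, the timer, in the RND-TTL picture) of $n$ is refreshed exactly when a request arrives for some item $m$, the request is a hit, and the cached item used to serve it --- hence the one moved to the front --- is $n$. Since each request concerns a single item and a hit moves exactly one cached item to the front, the events indexed by the requested item $m$ are mutually exclusive, so I would write the refresh rate of $n$ as the sum over $m$ of the rates of the events ``a request for $m$ arrives and is served by $n$''. Because $q_n(m)=0$ whenever $\mathrm{dis}(m,n)>d$, only $m\in\Nclosed{n}$ contribute, which already produces the index set of the claimed sum.

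The core steps, in order, would be: (i) for a fixed $m\in\Nclosed{n}$, characterize the cache states $S$ in which a request for $m$ is served by $n$. By lines~\ref{line:closet-item-in-cache}--\ref{line:action-approximate-hit} of Algorithm~\ref{alg:RND-LRU} this requires $n\in S$; that $n$ be the closest cached item to $m$, i.e.\ $S\cap\Nclosedin{n}{m}=\emptyset$ (Assumption~\ref{assum:strict-order-neighbors} makes $\hat m$ unambiguous); and, when $m\neq n$, the fresh uniform draw to satisfy $u\le q_n(m)$, which occurs with probability $q_n(m)$ independently of $S$. The set of such states is precisely $B_{m,n}$; for $m=n$ one reads $\Nclosedin{n}{n}=\emptyset$ and $q_n(n)=1$, so $B_{n,n}=\{S\in\tilde\Omega:\,n\in S\}$ and every exact hit refreshes $n$, as it should. (ii) Condition on the current cache state: under Assumption~\ref{assum-IRM-request-process} requests for $m$ form an independent Poisson stream of rate $\lambda_m$, so, given the state is $S$, the rate at which ``a request for $m$ arrives and is served by $n$'' occurs equals $\lambda_m\,q_n(m)\,\I{S\in B_{m,n}}$. (iii) Average over the stationary state: invoking PASTA (see~\ref{app:PASTA}), the state seen by the Poisson requests is distributed according to the limiting distribution of RND-LRU, which exists by Assumption~\ref{assum-RND-TTL-StationaryDistribution}, and is independent of the arrival; a rate-conservation (ergodic-average) argument then gives that the long-run refresh rate due to $m$ equals $\lambda_m\,q_n(m)\,\Proba{B_{m,n}}$. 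Summing over $m\in\Nclosed{n}$ yields the stated formula.

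The only genuinely non-routine point --- and the step I expect to be the main obstacle --- is (i): correctly translating ``$n$ is the cached item used to serve a request for $m$'' into membership in $B_{m,n}$, which means being careful about the tie-breaking of $\hat m$ via Assumption~\ref{assum:strict-order-neighbors} and about reconciling the boundary case $m=n$ with the conventions $q_n(n)=1$ and $\Nclosedin{n}{n}=\emptyset$. The remaining ingredients --- disjointness of the per-$m$ events, independence of the acceptance draw $u$, and the passage from the instantaneous state-dependent rate to its stationary average through PASTA --- mirror the insertion-rate proposition and introduce no new difficulty.
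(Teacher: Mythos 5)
Your proposal is correct and follows essentially the same route as the paper: identify the states in which a request for $m$ is served by $n$ as exactly $B_{m,n}$ (using Assumption~\ref{assum:strict-order-neighbors} to make $\hat m$ well defined and the conventions $q_n(n)=1$, $\Nclosedin{n}{n}=\emptyset$ for the exact-hit case), thin each request stream by $q_n(m)$, and average over the stationary state via the PASTA machinery of~\ref{app:PASTA}. The only cosmetic difference is that the paper aggregates the per-$m$ thinned streams into a single state-modulated Poisson process $P_S$ with rate $\lambda_S=\sum_{m\in\Nclosed{n}}q_n(m)\lambda_m\I{S\in B_{m,n}}$ and applies its generalized PASTA theorem once, whereas you apply PASTA to each item's stream separately and sum — the computations coincide.
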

\begin{proof}
    See \ref{app:PASTA} and \ref{app:proof-refresh-rate-RND-LRU}. 
    
\end{proof}

{ \noindent \textbf{Choice of} $\boldsymbol{\lambda^r}$\textbf{,} $\boldsymbol{\lambda^i}$ \textbf{and} $\mathbf{T}$.} 
We have derived in Propositions~\ref{prop:entry-rate-RND-LRU} and~\ref{prop:refresh-rate-RND-LRU} the insertion rate $\tilde{\lambda}_n^i$ and the refresh rate $\tilde{\lambda}_n^i$ of item $n$ under the RND-LRU policy, respectively. The same rates for item $n$ in the RND-TTL model are $\lambda_n p_n^i(1-o_n)$ and $\lambda_n^r o_n$. However, the parameters $p_n^i$ and $\lambda_n^r$ are yet to be set. In order to approach the performance of RND-LRU, we propose using Propositions~\ref{prop:entry-rate-RND-LRU} and~\ref{prop:refresh-rate-RND-LRU} to set the values of the parameters $p_n^i$ and $\lambda_n^r$, provided the probabilities of the defined sets are computed according to the stationary distribution over the state space $\Omega$ of the RND-TTL cache.

For the insertion rate, we write
\begin{align}
\nonumber
\lambda_n p_n^i(1-o_n) &= \lambda_n \!\left(\Proba{B_{n}} + \!\sum_{m\in \Nopen{n}}\! (1-q_{m}(n)) \Proba{B_{n,m}}\right)
\end{align}
with $B_{n}$ and $B_{n,m}$ for $m\in \Nopen{n}$ redefined in the state space $\Omega$ of RND-TTL. To write the probabilities of the sets we rely on the independence of caching decisions in RND-TTL and its stationary distribution given in \eqref{e:stationary-distribution-RND-TTL}. It readily comes that the parameter $p_n^i$ is to be set as follows
\begin{align}
\label{e:pne-independence}
&   p_n^i =\!\!\prod_{m \in \Nopen{n} }\!\! (1-\newoccupancy_m) + \!\!\sum_{m\in \Nopen{n}}\!\! (1-q_m(n))\, \newoccupancy_{m} \!\!\prod_{j \in \Nopenin{m}{n}}\!\! \left(1-\newoccupancy_{j}\right).
\end{align}
For the refresh rate, we write
\begin{align}
\nonumber
\lambda_n^r o_n &=  \sum_{m\in \Nclosed{n}} q_{n}(m) \lambda_m \,\Proba{B_{m,n}}
\end{align}
with $B_{m,n}$ for $m\in \Nclosed{n}$ redefined in the state space $\Omega$. Leveraging the independence of caching decisions in RND-TTL and its stationary distribution given in \eqref{e:stationary-distribution-RND-TTL}, we set the parameter $\lambda_n^r$ as follows
\begin{align}
\label{e:lambda-refreshement-def}
&\lambda_n^r =  \lambda_n + \sum_{m\in \Nopen{n}} q_n(m) \lambda_m\prod_{j \in \Nclosedin{n}{m}} \left(1-\newoccupancy_{j}\right)  .
\end{align}


\noindent Given $\boldsymbol{\lambda^r}$ and $\boldsymbol{\lambda^i}$, $T$ is chosen from the set $T_C(\boldsymbol{\lambda^r},\boldsymbol{\lambda^i})$, meaning that $T$ verifies: 
\begin{align}\label{e:tC-sim}
    \sum_{n\in \mathcal{I}} \myonn{\lambda_{n}^{r}}{\lambda_{n}^i}{T} = C.
\end{align}
Choosing $T$ in this way guarantees that the expected number of items in the RND-TTL cache is equal to the cache capacity $C$ similarly to the TTL approximation for LRU as in \eqref{e:tC-Defintion}.

{ \noindent \textbf{RND-LRU's hit rate approximation. }If we can compute values for $\boldsymbol{\lambda^r}$, $\boldsymbol{\lambda^i}$ and $T$ verifying the equations above, then using Proposition \ref{prop:hit-rate-RND-TTL} and \eqref{e:pne-independence}, we can estimate the hit rate of RND-LRU as:
\begin{align} \label{e:Hit-Rate-RNDLRU-Approx}
        H \approx \sum_{n\in \mathcal{I}} \lambda_n \cdot \hit_{n},
\end{align}
where
\begin{equation}\label{e:hit-rate-rnd-lru-independence}
   \hit_{n} =  o_n + \sum_{m\in \Nopen{n}} q_{m}(n) \cdot o_m \prod_{j\in \Nclosedin{m}{n}} (1-o_j). 
\end{equation}

In the following section, we present an iterative algorithm that enables numerical determination of the parameters $\boldsymbol{\lambda^r}, \boldsymbol{\lambda^i}$, and $T$ based on the aforementioned description.


\section{Algorithm for Finding Approximate Hit Probabilities}
\label{s:Algorithm-fixed-point}

The RND-TTL approximation suggests to choose the parameters $\boldsymbol{\lambda^r}$, $\boldsymbol{\lambda^i}$ and $T$ for the RND-TTL cache such that the following set of equations is verified: 
\begin{align}
\label{e:lambda-entry-independence}
    &\boldsymbol{\lambda^i} = \fentry(\mathbf{o})  ,\\ 
    \label{e:lambda-refreshement-independence}
    &\boldsymbol{\lambda^r}  = \frefresh(\mathbf{o}) ,\\ \label{e:vec-occupancy-system}
    & \mathbf{\occupancy} = \foccupancy(\boldsymbol{\lambda^r},\boldsymbol{\lambda^i},T)= (\foccupancyC(\lambda_n^r,\lambda_n^i,T))_{n\in \mathcal{I}}  , \\ \label{e:capacity_constraint} 
    & T\in  T_C(\boldsymbol{\lambda^r},\boldsymbol{\lambda^i}) \iff \;  \sum_{n \in \mathcal{I}} g(\lambda^r_{n},\lambda^i_{n},T) =C,
    \end{align}
where $\fentry = (E_n(\mathbf{o}))_{n\in \mathcal{I}}$ with $E_n(\mathbf{o})$ obtained by substituting \eqref{e:pne-independence} in \eqref{e:lambda-entry-def}, $\frefresh = (R_n(\mathbf{o}))_{n\in\mathcal{I}}$ with $R_n(\mathbf{o})$ obtained from \eqref{e:lambda-refreshement-def} and $\mathbf{\occupancy}$ is obtained from \eqref{e:occupancy-simlru-che} and $g$ is defined as
\begin{align}\label{e:occupancy-g-function}
g(x_1,x_2,x_3) \triangleq \myonn{x_1}{x_2}{x_3}. 
\end{align}
Combining \eqref{e:lambda-entry-independence}-\eqref{e:capacity_constraint}, we obtain a system of $3N+1$ equations in $3N+1$ unknowns, from which we can obtain in particular the occupancies and the duration $T$. 
Finally, once the occupancies are known, we can compute the vector of hit probabilities, $\mathbf{h}=(h_n)_{n\in \mathcal{I}}$, according to \eqref{e:hit-rate-rnd-lru-independence} and estimate the hit rate of RND-LRU according to \eqref{e:Hit-Rate-RNDLRU-Approx}.

In Section \ref{ss:Fixed-point-equation}, we demonstrate that finding a solution for the system of equations \eqref{e:lambda-entry-independence}-\eqref{e:capacity_constraint} corresponds to finding a fixed point of a function $\mathbf{G}(\mathbf{o})$, i.e., a root for the equation   $\mathbf{o}=\mathbf{G}(\mathbf{o})$. Moreover, we show a sufficient condition for $\mathbf{G}$ to have at least one fixed point.
In Section \ref{ss:Algorithm-Fixed-Point}, we propose an iterative algorithm for finding a fixed point of $\mathbf{G}$, following the form $\mathbf{o}(j+1)= (1-\beta )\mathbf{G}(\mathbf{o}(j)) + \beta \mathbf{o}(j)$ for $\beta \in [0,1)$. Additionally, we provide a sufficient condition for the existence of a unique fixed point for $\mathbf{G}$ and the convergence of our iterative algorithm towards this unique fixed point, given a well-tuned parameter $\beta$. In Section \ref{ss:Choice-Beta}, we suggest a practical approach for tuning the parameter $\beta$ of our fixed point algorithm.

\subsection{Fixed Point Equations}
\label{ss:Fixed-point-equation}

We denote the set $T_C(\fentry(\mathbf{o}), \frefresh(\mathbf{o}))$ as $T_C(\mathbf{o})$. We denote the capped simplex as $\Delta_C$ such that: 
\begin{equation}
    \Delta_C \triangleq \{ \mathbf{o}\in \mathbb{R}^{N}:\; 0 \leq o_n\leq 1, \; \sum_{n\in \mathcal{I}} o_n=C \ \}. \\
\end{equation}

\begin{lemma}[$T_C(\mathbf{o})$ is a singleton]
\label{lem:tc-function}
If the cover condition, given by:
 \begin{equation}\label{e:condition-existence-fixed-point}
     \forall \mathcal{M}\subset \mathcal{I}:  \# \mathcal{M} \leq C ,\quad \# \bigcup_{n\in \mathcal{M}} \Nopen{n} <N-C,
 \end{equation}
is satisfied, then $T_C(\mathbf{o})$ has a unique element for every $\mathbf{o}\in \Delta_C$, meaning that: 
    \begin{align}
        &\forall \mathbf{o} \in \Delta_C, \exists ! T_0\in \mathbb{R}^{+}:  \quad F(\mathbf{o}, T_0) = 0, \\ \label{e:F-capacity-function}
        & F(\mathbf{o},T) \triangleq \sum_{n \in \mathcal{I}} g(R_n(\mathbf{o}),E_n(\mathbf{o}),T) - C,
    \end{align}
where the symbol $\exists !$ refers to unique existence.  
\end{lemma}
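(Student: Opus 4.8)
The plan is to show that for fixed $\mathbf{o} \in \Delta_C$, the function $T \mapsto F(\mathbf{o}, T)$ is continuous and strictly increasing on $\mathbb{R}^+$, and then to verify the boundary behavior $F(\mathbf{o}, 0^+) < 0$ and $\lim_{T\to\infty} F(\mathbf{o}, T) > 0$, so that the intermediate value theorem gives existence and strict monotonicity gives uniqueness. The cover condition \eqref{e:condition-existence-fixed-point} will be needed precisely to guarantee the limit at infinity strictly exceeds zero.

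First I would fix $\mathbf{o} \in \Delta_C$ and observe that $R_n(\mathbf{o})$ and $E_n(\mathbf{o})$ are well-defined nonnegative reals depending only on $\mathbf{o}$ (via \eqref{e:lambda-refreshement-def} and \eqref{e:pne-independence}, \eqref{e:lambda-entry-def}); note also $R_n(\mathbf{o}) = \lambda_n^r \geq \lambda_n > 0$. Then I would analyze the single-item occupancy function $g(x_1, x_2, x_3) = \myonn{x_1}{x_2}{x_3}$ as a function of $x_3 = T$ with $x_1, x_2 > 0$ fixed: since $\e^{x_1 x_3} - 1$ is positive and strictly increasing in $x_3$, the quantity $\frac{x_1}{x_2}\cdot\frac{1}{\e^{x_1 x_3}-1}$ is strictly decreasing, hence $g$ is strictly increasing and continuous in $T$. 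Therefore $F(\mathbf{o}, \cdot)$, a finite sum of such terms minus $C$, is continuous and strictly increasing in $T$, which already yields uniqueness.

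For existence I would check the two endpoints. As $T \to 0^+$, each $\e^{x_1 T} - 1 \to 0^+$, so each $g(R_n, E_n, T) \to 0$, giving $F(\mathbf{o}, 0^+) = -C < 0$. As $T \to \infty$, each $g(R_n, E_n, T) \to \myonn{}{}{} $ with the exponential term vanishing, i.e. $g \to 1$ for every $n$ with $E_n(\mathbf{o}) > 0$; for $n$ with $E_n(\mathbf{o}) = 0$ the term stays $0$. So $\lim_{T\to\infty} F(\mathbf{o}, T) = \#\{n : E_n(\mathbf{o}) > 0\} - C$. Here is where the cover condition enters: $E_n(\mathbf{o}) = 0$ forces (from \eqref{e:pne-independence}) $\prod_{m\in\Nopen{n}}(1-o_m) = 0$ and every sum term to vanish, which requires some neighbor $m$ of $n$ to have $o_m = 1$; since $\mathbf{o} \in \Delta_C$, the set $\mathcal{M} = \{m : o_m = 1\}$ has at most $C$ elements, and $E_n(\mathbf{o}) = 0 \implies n \in \bigcup_{m\in\mathcal{M}}\Nopen{m}$. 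Thus $\#\{n : E_n(\mathbf{o}) = 0\} \leq \#\bigcup_{m\in\mathcal{M}}\Nopen{m} < N - C$ by \eqref{e:condition-existence-fixed-point}, so $\#\{n : E_n(\mathbf{o}) > 0\} > C$ and the limit is strictly positive. The intermediate value theorem then produces $T_0 \in \mathbb{R}^+$ with $F(\mathbf{o}, T_0) = 0$, and strict monotonicity makes it unique.

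The main obstacle I anticipate is the careful bookkeeping in the claim $E_n(\mathbf{o}) = 0 \implies$ some neighbor has occupancy $1$: one must check that in \eqref{e:pne-independence} all the (nonnegative) summands as well as the leading product vanish simultaneously, and argue that the leading product $\prod_{m\in\Nopen{n}}(1-o_m)$ vanishing already implies the claim — each summand then automatically vanishes too — so that the bound $\#\{n : E_n(\mathbf{o}) = 0\} \leq \#\bigcup_{m : o_m = 1}\Nopen{m}$ is valid. The rest is routine real analysis of the explicit function $g$.
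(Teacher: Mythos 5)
Your proposal is correct and follows essentially the same route as the paper's own proof: continuity and monotonicity of $F(\mathbf{o},\cdot)$, the endpoint values $F(\mathbf{o},0)=-C<0$ and $\lim_{T\to\infty}F(\mathbf{o},T)=\#\{n: E_n(\mathbf{o})>0\}-C$, and the cover condition combined with the symmetry $m\in\Nopen{n}\iff n\in\Nopen{m}$ to show that strictly more than $C$ items have positive insertion rate. The only slip is the intermediate claim that every summand of $F$ is strictly increasing in $T$ — terms with $E_n(\mathbf{o})=0$ are identically zero, as you yourself note later — but since more than $C$ items have $E_n(\mathbf{o})>0$, at least one summand is strictly increasing, so strict monotonicity of $F$ and hence uniqueness still follow, exactly as the paper argues.
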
    
\begin{proof}
 See  \ref{app: tC-Function}.
 
\end{proof}
\begin{remark} If the cover condition in $\eqref{e:condition-existence-fixed-point}$ is not satisfied, it indicates that we are in a situation of practical uninterest. In this case, storing any $C$ items in the cache would be sufficient to serve the requests of at least $N-C$ items.

\end{remark}

From now on, we assume that the cover condition in \eqref{e:condition-existence-fixed-point} is verified. Therefore, thanks to Lemma \ref{lem:tc-function}, $T_C(\mathbf{o})$ is a singleton and we can define a function $t_C$ from $\Delta_C$ to $\mathbf{R}^{+}$, where for each $\mathbf{o}$, it associates the unique element in $T_C(\mathbf{o})$, i.e., 
\begin{align}\label{e:def-tC}
        t_C(\mathbf{o}) = T \iff F(\mathbf{o},T) = 0.
\end{align}
We also introduce the function $\mathbf{G}$ from $\Delta_C$ to $\Delta_C$ defined as:       
\begin{align} \label{e:Fixed-Point-G}
    \forall \mathbf{o} \in \Delta_C, \; \mathbf{G}(\mathbf{o}) & \triangleq \foccupancy\left(\frefresh(\mathbf{o}), \fentry(\mathbf{o}),t_C(\mathbf{o})\right) \\
    &=(g(R_1(\mathbf{o}),E_1(\mathbf{o}),t_C(\mathbf{o})),\cdots,g(R_N(\mathbf{o}),E_N(\mathbf{o}),t_C(\mathbf{o}))).
\end{align}
It follows that finding a solution for the system of equations \eqref{e:lambda-entry-independence}-\eqref{e:capacity_constraint} boils down to finding a fixed point of $\mathbf{G}$ within $\Delta_C$. 

For any sets $A$ and $B$, we denote the set of functions from $A$ to $B$ that are continuously differentiable as $\mathcal{C}^{1}(A \to B)$.

\begin{lemma}[Differentiability of $\ftc$] 
\label{lem:tc-class-C1}
The function $\ftc$ is continuously differentiable within the set $\Delta_C$, i.e., $\ftc \in \mathcal{C}^{1}(\Delta_C \to \Delta_C)$.
The gradient of $\ftc$ can be expressed as:
\begin{equation} \label{e:gradient-tC}
    \forall j \in \mathcal{I}, \; \frac{\partial \ftc}{\partial o_j}(\mathbf{o}) = - \frac{\partial F}{\partial o_j}\left(\mathbf{o},t_C(\mathbf{o})\right) \cdot \left(\frac{\partial F}{\partial T}\left(\mathbf{o},t_C(\mathbf{o})\right) \right)^{-1}, 
\end{equation}
where $F$ is expressed in \eqref{e:F-capacity-function}. 
\end{lemma}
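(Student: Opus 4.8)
The plan is to establish the differentiability of $\ftc$ by invoking the implicit function theorem applied to the equation $F(\mathbf{o},T)=0$, which defines $t_C(\mathbf{o})$ implicitly via Lemma~\ref{lem:tc-function}. The first step is to verify that $F$, as given in \eqref{e:F-capacity-function}, is itself continuously differentiable on a neighbourhood of the relevant set $\Delta_C \times \mathbb{R}^{+}$. This reduces to checking that each summand $g(R_n(\mathbf{o}),E_n(\mathbf{o}),T)$ is $\mathcal{C}^1$: the maps $E_n$ and $R_n$ are polynomials in the occupancies (see \eqref{e:pne-independence}, \eqref{e:lambda-refreshement-def}, \eqref{e:lambda-entry-def}), hence smooth, and $g$ from \eqref{e:occupancy-g-function} is a smooth function of its three arguments wherever $x_1 > 0$ and the denominator $\e^{x_1 x_3}-1$ does not vanish --- which holds for $x_1,x_3>0$. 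One should note that $R_n(\mathbf{o}) \geq \lambda_n > 0$ always, so the relevant arguments stay in the smooth region, and composition preserves $\mathcal{C}^1$.

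The second step is to check the non-degeneracy hypothesis of the implicit function theorem, namely that $\partial F/\partial T\big(\mathbf{o},t_C(\mathbf{o})\big) \neq 0$ for every $\mathbf{o}\in\Delta_C$. Since $g(x_1,x_2,x_3)$ is strictly increasing in $x_3$ (a longer timer strictly increases occupancy --- this is the same monotonicity already used in the proof of Lemma~\ref{lem:tc-function}), we have $\partial F/\partial T > 0$ on all of $\Delta_C\times\mathbb{R}^{+}$; in particular it is nonzero, and in fact the inverse appearing in \eqref{e:gradient-tC} is well defined. I expect this monotonicity claim to be the main thing to nail down carefully, though it is essentially the derivative computation $\partial_{x_3} g > 0$ and is exactly the property underpinning the uniqueness in Lemma~\ref{lem:tc-function}.

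With these two ingredients in place, the implicit function theorem immediately yields that $t_C \in \mathcal{C}^1(\Delta_C \to \mathbb{R}^{+})$ and gives the gradient formula by implicit differentiation of $F(\mathbf{o},t_C(\mathbf{o})) = 0$: differentiating with respect to $o_j$ gives $\frac{\partial F}{\partial o_j}(\mathbf{o},t_C(\mathbf{o})) + \frac{\partial F}{\partial T}(\mathbf{o},t_C(\mathbf{o}))\cdot\frac{\partial \ftc}{\partial o_j}(\mathbf{o}) = 0$, which rearranges to \eqref{e:gradient-tC}. A minor technical point worth addressing is that $\Delta_C$ is a relatively closed subset of an affine hyperplane, not an open set, so ``continuously differentiable on $\Delta_C$'' should be read as the restriction of a $\mathcal{C}^1$ map defined on an open neighbourhood; one applies the implicit function theorem on such a neighbourhood and then restricts. (The statement $\ftc \in \mathcal{C}^1(\Delta_C \to \Delta_C)$ in the lemma appears to be a typo for $\mathcal{C}^1(\Delta_C \to \mathbb{R}^{+})$, consistent with \eqref{e:def-tC}; I would state it with codomain $\mathbb{R}^{+}$.)
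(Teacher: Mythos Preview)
Your proposal is correct and rests on the same two ingredients as the paper's proof: that $F$ is $\mathcal{C}^1$ and that $\partial F/\partial T>0$ (the strict monotonicity already used in Lemma~\ref{lem:tc-function}). The only difference is packaging: you invoke the implicit function theorem as a black box, whereas the paper gives a hands-on derivation of the difference quotient via the Mean Value Theorem applied to $F$ between the points $(\mathbf{o},t_C(\mathbf{o}))$ and $(\mathbf{o}+\epsilon\mathbf{e}_j,t_C(\mathbf{o}+\epsilon\mathbf{e}_j))$, then lets $\epsilon\to 0$; this is effectively a local reproof of the implicit function theorem, and the paper itself remarks that a similar argument appears in \cite{de2012implicit}.
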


\begin{proof}
    See \ref{app:tC-class-C1}.  
\end{proof}

\begin{proposition}[Fixed Point Existence] 
\label{prop:G-C1-Exist-Fixed-Point}
The function $\mathbf{G}$ is continuously differentiable within $\Delta_C$ and it has at least one fixed point in $\Delta_C$.
\end{proposition}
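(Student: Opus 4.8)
The plan is to invoke Brouwer's fixed point theorem on the map $\mathbf{G}\colon \Delta_C \to \Delta_C$. This requires two ingredients: (i) that $\Delta_C$ is a nonempty, compact, convex subset of a Euclidean space, and (ii) that $\mathbf{G}$ is continuous on $\Delta_C$. Point (i) is immediate: $\Delta_C$ is the intersection of the hyperplane $\sum_n o_n = C$ with the cube $[0,1]^N$, hence closed, bounded and convex; it is nonempty because $0 \le C \le N$. So the substantive content is establishing that $\mathbf{G}$ maps $\Delta_C$ into itself and is continuous, after which Brouwer applies directly and yields the claimed fixed point. Continuous differentiability, which is the first assertion of the proposition, is a strengthening we get essentially for free along the way and which will be needed later for the convergence analysis.

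First I would verify that $\mathbf{G}$ is well defined as a map into $\Delta_C$. Fix $\mathbf{o} \in \Delta_C$. The functions $\fentry$ and $\frefresh$ are given by the explicit polynomial expressions \eqref{e:pne-independence}, \eqref{e:lambda-entry-def} and \eqref{e:lambda-refreshement-def} in the components of $\mathbf{o}$, so $\boldsymbol{\lambda^i} = \fentry(\mathbf{o})$ and $\boldsymbol{\lambda^r} = \frefresh(\mathbf{o})$ are finite; moreover $\lambda_n^r \ge \lambda_n > 0$ from \eqref{e:lambda-refreshement-def}, and $\lambda_n^i = \lambda_n p_n^i$ with $p_n^i \ge \prod_{m\in\Nopen{n}}(1-o_m) > 0$ when the occupancies are not all saturated — here I would rely on the cover condition and Lemma~\ref{lem:tc-function} to guarantee that $t_C(\mathbf{o})$ is a well-defined positive real. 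Then $\mathbf{G}(\mathbf{o}) = \foccupancy(\frefresh(\mathbf{o}),\fentry(\mathbf{o}),t_C(\mathbf{o}))$ has components $g(\lambda_n^r,\lambda_n^i,t_C(\mathbf{o})) = \myonn{\lambda_n^r}{\lambda_n^i}{t_C(\mathbf{o})} \in (0,1)$ since the bracketed quantity is of the form $(\text{positive}+1)^{-1}$. Finally, the constraint $\sum_{n} g(\lambda_n^r,\lambda_n^i,t_C(\mathbf{o})) = C$ holds by the very definition of $t_C(\mathbf{o})$ via \eqref{e:def-tC} and \eqref{e:F-capacity-function}. Hence $\mathbf{G}(\mathbf{o}) \in \Delta_C$, so $\mathbf{G}\colon \Delta_C \to \Delta_C$.

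Next I would establish continuity, in fact $\mathcal{C}^1$ regularity. The maps $\mathbf{o}\mapsto\fentry(\mathbf{o})$ and $\mathbf{o}\mapsto\frefresh(\mathbf{o})$ are polynomials, hence $\mathcal{C}^\infty$. By Lemma~\ref{lem:tc-class-C1}, $\ftc \in \mathcal{C}^1(\Delta_C\to\Delta_C)$. The scalar function $g(x_1,x_2,x_3)$ in \eqref{e:occupancy-g-function} is $\mathcal{C}^\infty$ on the region where $x_1 > 0$, $x_2 > 0$, $x_3 > 0$ (the only place it could misbehave is where a denominator vanishes, i.e. $\e^{x_1 x_3} = 1$, which does not occur for positive arguments). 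Therefore $\mathbf{G}$, being a composition of $\mathcal{C}^1$ maps into the domain of a $\mathcal{C}^\infty$ function, is $\mathcal{C}^1$ on $\Delta_C$ — and in particular continuous. Applying Brouwer's fixed point theorem to the continuous self-map $\mathbf{G}$ of the nonempty compact convex set $\Delta_C$ produces a point $\mathbf{o}^\star \in \Delta_C$ with $\mathbf{G}(\mathbf{o}^\star) = \mathbf{o}^\star$, which completes the proof.

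The main obstacle is bookkeeping around the boundary of $\Delta_C$: one must be sure that for every $\mathbf{o}\in\Delta_C$ the intermediate quantities $\lambda_n^i$, $\lambda_n^r$ and especially $t_C(\mathbf{o})$ stay in the open region where $g$ and $\ftc$ are smooth, so that no division by zero or loss of regularity sneaks in. This is exactly what the cover condition \eqref{e:condition-existence-fixed-point} and Lemmas~\ref{lem:tc-function} and~\ref{lem:tc-class-C1} are there to supply, so the argument reduces to quoting them correctly; no genuinely hard new estimate is needed.
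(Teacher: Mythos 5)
Your proposal is correct and follows essentially the same route as the paper: establish that $\mathbf{G}$ is $\mathcal{C}^1$ (hence continuous) on $\Delta_C$ by combining the smoothness of $\fentry$, $\frefresh$ and $g$ with Lemma~\ref{lem:tc-class-C1} for $\ftc$, then apply Brouwer's fixed point theorem to the continuous self-map of the nonempty compact convex set $\Delta_C$. Your additional verification that $\mathbf{G}$ actually maps $\Delta_C$ into itself (components in $[0,1]$ and summing to $C$ by the definition of $t_C$) is a point the paper leaves implicit, and is a welcome bit of extra care rather than a deviation.
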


\begin{proof}
    It is evident that the functions $\foccupancy(\cdot,\cdot,\cdot)$, $\frefresh(\cdot)$, and $\fentry(\cdot)$ are continuously differentiable within $\left(\mathbb{R}^{+}\right)^{N}\cdot \left(\mathbb{R}^{+}\right)^{N} \cdot \mathbb{R}^{+} $, $\Delta_C$ and $\Delta_C$, respectively. Furthermore, according to Lemma \ref{lem:tc-class-C1}, we have that $t_C\in \mathcal{C}^1(\Delta_C\to \Delta_C)$. As a result, we conclude that $\mathbf{G}\in \mathcal{C}^{1}(\Delta_C\to\Delta_C)$. 
    Noting that  $\Delta_C$ is a non empty compact convex set,  Brouwer's fixed point theorem \cite{park1999ninety} implies  the existence of a fixed point for $\mathbf{G}$.
    
\end{proof}

Proposition \ref{prop:G-C1-Exist-Fixed-Point} indicates that when the cover condition \eqref{e:condition-existence-fixed-point} is satisfied, it is possible to find parameters for the RND-TTL cache model that verify the system of equations \eqref{e:lambda-entry-independence}-\eqref{e:capacity_constraint}. Therefore, we can apply the RND-TTL approximation to estimate the hit rate of RND-LRU.  



\begin{algorithm}[tb]
\caption{Fixed point method}\label{alg:cap}
\begin{algorithmic}[1]
 \renewcommand{\algorithmicrequire}{\textbf{Input:}}
 \renewcommand{\algorithmicensure}{\textbf{Output:}}
 \REQUIRE $C$,  $\mathbf{\lambda},$ $\mathrm{dis}(.,.),$ $d,$  $(q_{n}(i))_{(n,i)\in I^{2}}$, $\beta$,   stopping condition 
 \ENSURE  Estimation of $\mathbf{o}, \mathbf{h} , t_C$ 
 \\ 
 \textit{Initialization}: 
\STATE Obtain $t_{C}(0)$ such that $  \sum_{n\in I}\left(1-\e^{- \lambda_{n}\cdot t_{C}(0)} \right) = C$ \label{line:ini0}
\STATE $\mathbf{o}(0) \gets 1-\e^{- \mathbf{\lambda}\cdot t_C(0)}$ 
\label{line:ini1}
\STATE $\mathbf{h}(0) \gets f^{h} (\mathbf{o}(0))$  
\STATE $j \gets 1$
 
\WHILE{\textit{Stopping condition not satisfied}} 
 \label{line:work0}
    \STATE $\boldsymbol{\lambda^i}(j) \gets \fentry(\mathbf{o}(j-1))$ (See \eqref{e:lambda-entry-independence})  \label{line:lambda-e}
    \STATE $\boldsymbol{\lambda^r}(j) \gets \frefresh(\mathbf{o}(j-1))$  (See \eqref{e:lambda-refreshement-independence}) \label{line:lambda-r}
\label{line:work1}
    \STATE $\textrm{Obtain } \label{line:obtaintcj1}  t_{C}(j) \textrm{ such that}:  \sum_{n\in I} (\foccupancy( \boldsymbol{\lambda^i}(j) , \boldsymbol{\lambda^r}(j) , t_C(j)))_{n} =C$ (See \eqref{e:capacity_constraint},\eqref{e:vec-occupancy-system})
\label{line:obtaintcj} 
    \STATE $\mathbf{o}(j) \gets (1-\beta) \cdot \foccupancy( \boldsymbol{\lambda^i}(j) , \boldsymbol{\lambda^r}(j) , t_C(j))+ \beta \cdot \mathbf{o}(j-1)$  \label{line:occ1}  
    \STATE  $\mathbf{h}(j) = f^h(\mathbf{o}(j))$ (See \eqref{e:hit-rate-rnd-lru-independence})
    \STATE $j\gets j+1$
\ENDWHILE 
 
\RETURN $\mathbf{\hit}(j), \; \mathbf{\newoccupancy}(j), \; t_C(j)$  
\end{algorithmic} 
\end{algorithm}

\begin{figure}[t]
\centering
\begin{tikzpicture}[>=stealth]
    \node[rectangle, draw, rounded corners, minimum width=3cm, minimum height=1.5cm, align=center] (block1) {Compute occupancies  given $T$ and \\ insertion/refresh rates};
    
    \node[rectangle, draw, rounded corners, minimum width=3cm, minimum height=1.5cm, align=center, right of=block1, node distance=7cm] (block2) {Compute  insertion/refresh rates \\ given occupancies};

     \node[rectangle, draw, rounded corners, minimum width=5.5cm, minimum height=1.2cm, align=center, below of=block1, node distance=2.5cm] (block3) {Compute $T$ given \\ insertion/refresh rates};

  \draw[->, rounded corners=5mm, minimum width=5cm, bend left=30] (block2.south) to node[above] {$\bm{\lambda^i}$, $\bm{\lambda^r}$} (block3.east);

    \draw[->, rounded corners=5mm, minimum width=5cm, bend left=30, node distance=2cm] (block3.north) to node[above] {$\quad \quad T$} (block1.south);

    \draw[->, rounded corners=5mm, minimum width=5cm, bend left=40] (block1.east) to node[above] {$\bm{o}$} (block2.west);
    \draw[->, rounded corners=5mm, bend left=40] (block2.west) to node[below] {$\bm{\lambda^i}$, $\bm{\lambda^r}$} (block1.east);
\end{tikzpicture}
\caption{Essence of the fixed point algorithm.}
\label{f:fixedpoint}
\end{figure}
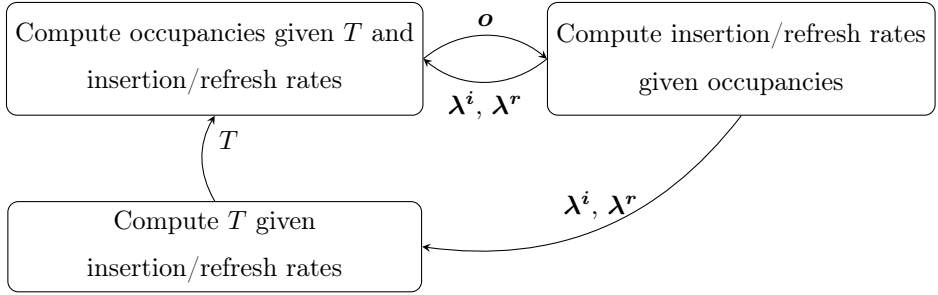

\subsection{ Fixed Point Algorithm}
\label{ss:Algorithm-Fixed-Point}

A natural approach to finding a fixed point of $\mathbf{G}$ is through an iterative method. This is illustrated in Figure~\ref{f:fixedpoint}. Starting with an initial guess $\mathbf{o}(0)$, we perform iterations of the form $\mathbf{o}(j+1) = \beta \mathbf{o}(j) + (1-\beta)\mathbf{G}(\mathbf{o}(j))$, where $\beta\in [0,1)$ \cite{mann1953mean}. A detailed version of these iterations is presented in Algorithm~\ref{alg:cap}.
Initially, we guess the occupancies $\mathbf{o}$, using LRU occupancies as a starting point. Specifically, we set $\mathbf{o}(0) = 1 - \e^{-\mathbf{\lambda} t_C(0)}$, where $t_C(0)$ satisfies \eqref{e:tC-Defintion} and $\sum_{n\in \mathcal{I}} o_n(0) = C$ (lines~\ref{line:ini0}--\ref{line:ini1}).
Then, we compute $\boldsymbol{\lambda^i}(1)$ and $\boldsymbol{\lambda^r}(1)$ using \eqref{e:lambda-entry-independence} and \eqref{e:lambda-refreshement-independence}, respectively (lines \ref{line:work0}--\ref{line:work1}).
The value of $t_C(0)$ is determined by solving the equation \eqref{e:capacity_constraint} using methods such as bisection or Newton's method.
Next, we calculate the new estimate of the occupancies $\mathbf{o}(1)$ (line \ref{line:occ1}).

The same procedure is repeated for subsequent iterations until a stopping condition is met. This condition could be, for example, the difference between the occupancies computed at consecutive iterations becoming smaller than a given threshold, or reaching the maximum number of iterations ($j\leq n_{\mathrm{iterations}}$).

Note that in Figure~\ref{f:fixedpoint} the boxes on the left-hand side, together with their inputs (insertion and refresh rates), represent the conventional perspective on caching. This involves using fixed rates, and computing item occupancies to estimate hit probabilities.  Under a TTL-based model, it also involves computing the characteristic time $T$ to approximate LRU, so that the sum of expected occupancies equals the cache capacity $C$. In contrast, the box on the right-hand side takes into consideration the unique nature of similarity caches. In similarity caches, the insertion and refresh rates are influenced by the current state of the cache, and these rates are determined as a function of the occupancies.


For a given value of $\beta$, the iterations of Algorithm \ref{alg:cap} are of the form: $\mathbf{o}(j+1) = \mathbf{G_{\beta}} (\mathbf{o}(j))$ such that: 
\begin{equation} \label{e:Defintion-G-Beta}
    \mathbf{G}_{\beta} (\mathbf{o}) \triangleq (1-\beta) \mathbf{G}(\mathbf{o}) + \beta \mathbf{o}.  
\end{equation}
The function $\mathbf{G_\beta}$ is continuously differentiable thanks to Proposition \ref{prop:G-C1-Exist-Fixed-Point}. We denote its Jacobian matrix as $\mathcal{J}_{\mathbf{G_\beta}}$. 

\begin{proposition}[Fixed Point Uniqueness and Convergence]
\label{prop:jacobian-convergence}
    If
    \begin{equation}\label{e:condition-unique-fixed-point}
       \exists \beta \in [0,1):\;  \text{sup}_{\mathbf{o}\in \Delta_C} \norm{\mathcal{J}_{\mathbf{G_\beta}}(\mathbf{o})} < 1, 
    \end{equation}
where $\norm{\cdot}$ can be any operator norm, then $\mathbf{G}$ has unique fixed point in $\Delta_C$ and Algorithm \ref{alg:cap} with parameter $\beta$ converges to this unique fixed point. 
\end{proposition}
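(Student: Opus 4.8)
The plan is to recognize the hypothesis \eqref{e:condition-unique-fixed-point} as a contraction condition for $\mathbf{G}_\beta$, apply the Banach fixed point theorem on $\Delta_C$, and then transfer the conclusion back to $\mathbf{G}$ and to Algorithm~\ref{alg:cap}. First I would fix a $\beta\in[0,1)$ together with an operator norm $\norm{\cdot}$, induced by some vector norm on $\mathbb{R}^N$ that I also denote $\norm{\cdot}$, for which $L\triangleq\sup_{\mathbf{o}\in\Delta_C}\norm{\JacobianGB(\mathbf{o})}<1$. I would then record that $\mathbf{G}_\beta$ maps $\Delta_C$ into itself: for $\mathbf{o}\in\Delta_C$ we have $\mathbf{G}(\mathbf{o})\in\Delta_C$ (this is how $\mathbf{G}$ is defined, see Proposition~\ref{prop:G-C1-Exist-Fixed-Point}), and $\mathbf{G}_\beta(\mathbf{o})=(1-\beta)\mathbf{G}(\mathbf{o})+\beta\mathbf{o}$ is a convex combination of two points of the convex set $\Delta_C$.

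The key step is the Lipschitz estimate. Since $\mathbf{G}_\beta\in\mathcal{C}^1(\Delta_C\to\Delta_C)$ by Proposition~\ref{prop:G-C1-Exist-Fixed-Point} and $\Delta_C$ is convex, for any $\mathbf{o},\mathbf{o}'\in\Delta_C$ the whole segment $\gamma(t)=\mathbf{o}+t(\mathbf{o}'-\mathbf{o})$, $t\in[0,1]$, lies in $\Delta_C$, so the fundamental theorem of calculus gives $\mathbf{G}_\beta(\mathbf{o}')-\mathbf{G}_\beta(\mathbf{o})=\int_0^1\JacobianGB(\gamma(t))\,(\mathbf{o}'-\mathbf{o})\,dt$; bounding under the integral with the chosen operator norm yields $\norm{\mathbf{G}_\beta(\mathbf{o}')-\mathbf{G}_\beta(\mathbf{o})}\le L\,\norm{\mathbf{o}'-\mathbf{o}}$. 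Hence $\mathbf{G}_\beta$ is an $L$-contraction of the nonempty closed (hence complete) set $\Delta_C$, so the Banach fixed point theorem yields a unique fixed point $\mathbf{o}^\star\in\Delta_C$ of $\mathbf{G}_\beta$, and for every starting point in $\Delta_C$ the Picard sequence $\mathbf{o}(j+1)=\mathbf{G}_\beta(\mathbf{o}(j))$ converges to $\mathbf{o}^\star$, with $\norm{\mathbf{o}(j)-\mathbf{o}^\star}\le L^j\norm{\mathbf{o}(0)-\mathbf{o}^\star}$.

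Finally I would transfer these conclusions. Because $\beta<1$, the equivalence $\mathbf{G}_\beta(\mathbf{o})=\mathbf{o}\iff(1-\beta)(\mathbf{G}(\mathbf{o})-\mathbf{o})=\mathbf{0}\iff\mathbf{G}(\mathbf{o})=\mathbf{o}$ shows that $\mathbf{G}$ and $\mathbf{G}_\beta$ have exactly the same fixed points in $\Delta_C$, so $\mathbf{o}^\star$ is the unique fixed point of $\mathbf{G}$. It remains to check that Algorithm~\ref{alg:cap} run with this $\beta$ realizes the iteration $\mathbf{o}(j+1)=\mathbf{G}_\beta(\mathbf{o}(j))$: lines~\ref{line:lambda-e}--\ref{line:obtaintcj} compute $\boldsymbol{\lambda^i}(j)=\fentry(\mathbf{o}(j-1))$, $\boldsymbol{\lambda^r}(j)=\frefresh(\mathbf{o}(j-1))$ and $t_C(j)=t_C(\mathbf{o}(j-1))$ (uniqueness of $t_C$ here is Lemma~\ref{lem:tc-function}), so the vector $\foccupancy(\cdot)$ formed in line~\ref{line:occ1} equals $\mathbf{G}(\mathbf{o}(j-1))$ and line~\ref{line:occ1} sets $\mathbf{o}(j)=(1-\beta)\mathbf{G}(\mathbf{o}(j-1))+\beta\mathbf{o}(j-1)=\mathbf{G}_\beta(\mathbf{o}(j-1))$; moreover lines~\ref{line:ini0}--\ref{line:ini1} produce $\mathbf{o}(0)\in\Delta_C$. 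Convergence of the algorithm to $\mathbf{o}^\star$ then follows.

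The main obstacle — essentially the only non-bookkeeping point — is making the mean value inequality rigorous on $\Delta_C$, which is neither open nor full-dimensional: this is exactly where the $\mathcal{C}^1$ regularity asserted in Proposition~\ref{prop:G-C1-Exist-Fixed-Point} and the convexity of $\Delta_C$ (so that segments never leave the domain) are used, and one must be careful to pair the stated ``operator norm'' with the vector norm it is induced by, so that $\norm{Ax}\le\norm{A}\,\norm{x}$. One could alternatively work in affine coordinates on the hyperplane $\{\sum_n o_n=C\}$ to sidestep the geometry, but the segment argument is cleaner; everything else is routine.
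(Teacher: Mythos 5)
Your proposal is correct and follows essentially the same route as the paper: bound the Lipschitz constant of $\mathbf{G}_\beta$ by the supremum of the Jacobian operator norm, invoke the Banach fixed point theorem on $\Delta_C$, and note that $\mathbf{G}$ and $\mathbf{G}_\beta$ share the same fixed points. Your write-up is somewhat more careful than the paper's (it makes the mean value inequality on the convex set $\Delta_C$ explicit, verifies that $\mathbf{G}_\beta$ maps $\Delta_C$ into itself, correctly restricts the fixed-point equivalence to $\beta<1$, and checks that Algorithm~\ref{alg:cap} realizes the Picard iteration), but these are refinements of the same argument rather than a different approach.
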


\begin{proof}
Let $L= \text{sup}_{\mathbf{o}\in \Delta_C} \norm{\mathcal{J}_{\mathbf{G_\beta}}(\mathbf{o})}$, under \eqref{e:condition-existence-fixed-point} $\mathbf{G_\beta}\in \mathcal{C}^{1}(\Delta_C\to \Delta_C)$,  we deduce that $\mathbf{G_\beta}$ is Lipshitz with constant $L$ \cite{weaver2018lipschitz}, i.e., $\norm{\mathbf{G_\beta}(\mathbf{o_1})-\mathbf{G_\beta}(\mathbf{o_2})} \leq L \norm{\mathbf{o_1}-\mathbf{o_2}}$ for any $\mathbf{o_1},~ \mathbf{o_2}$. Taking advantage of the popular Banach fixed point Theorem \cite{meise1997introduction}, we deduce that $\mathbf{G_\beta}$ has a unique fixed point and Algorithm \ref{alg:cap} with parameter $\beta$ converges to this unique fixed point. Notice that for any $\beta$, the set of fixed points of $\mathbf{G}$ and $\mathbf{G_\beta}$ coincides, which concludes the proof.

\end{proof}


In practice, one can compute the norm of the matrix $\JacobianGB$ for few vectors $\mathbf{o}\in \Delta_C$ to get an idea of whether the sufficient condition in \eqref{e:condition-unique-fixed-point} is satisfied and then Algorithm \ref{alg:cap} with parameter $\beta$ converges to a unique fixed point. In the next proposition, we give an explicit formula for $\JacobianGB$ to ease its computation.

We denote by $\text{Diag}(x)$   an $N$-dimensional diagonal matrix, where the entries of the vector $x$ are positioned along its diagonal, and by  $I_N$ the $N$ dimensional identity matrix.  

\begin{proposition}[Computation of  $\JacobianGB$]\label{e:prop-Jacobian-GB}
    The Jacobian matrix $\JacobianGB$ has the following expression: 
\begin{equation} \label{e:Jacobian-G-Beta}
   \forall \mathbf{o}_0 \in \Delta_C, \quad    \JacobianGB(\mathbf{o}_0) = (1-\beta) \cdot \mathcal{J}_{\mathbf{G}}(\mathbf{o}_0) + \beta \cdot I_{N},
\end{equation}
where
\begin{align} 
 \nonumber \label{e:jacobian-G-Experession}
     \mathcal{J}_{\mathbf{G}}(\mathbf{o}_0) =  \text{Diag}\left(\partial_1 \foccupancy \right)&\cdot \mathcal{J}_{\frefresh}(\mathbf{o}_0)  + \text{Diag}\left(\partial_2 \foccupancy \right) \cdot \mathcal{J}_{\fentry}(\mathbf{o}_0)\\
     &- \frac{1}{\sum_{n\in \mathcal{I}} \partial_{3} \foccupancy_n} \partial_{3} \foccupancy^\intercal \cdot \left( \partial_1 \foccupancy \cdot \mathcal{J}_{\frefresh}(\mathbf{o}_0) + \partial_2 \foccupancy\cdot \mathcal{J}_{\fentry}(\mathbf{o}_0)  \right),
\end{align}
and $\mathcal{J}_{\frefresh}$ and $\mathcal{J}_{\fentry}$ are the Jacobian matrices of the functions $\frefresh$ and $\fentry$, respectively.
For $j\in \{1,2,3\}$, the vector $\partial_j \foccupancy$ has $n$ columns, where the $n$-th component denoted as $\partial_j g_n$ is given by $\frac{\partial g}{\partial x_j}(R_{n}(\mathbf{o}_0), E_{n}(\mathbf{o}_0), t_C(\mathbf{o}_0))$, with $g$ defined in~\eqref{e:occupancy-g-function}.
\end{proposition}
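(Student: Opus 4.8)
The plan is to derive the expression \eqref{e:Jacobian-G-Beta} directly from the definition $\mathbf{G}_\beta(\mathbf{o}) = (1-\beta)\mathbf{G}(\mathbf{o}) + \beta \mathbf{o}$ in \eqref{e:Defintion-G-Beta}: since $\JacobianGB = (1-\beta)\JacobianG + \beta\, \mathcal{J}_{\mathrm{Id}} = (1-\beta)\JacobianG + \beta I_N$ by linearity of differentiation, the whole task reduces to computing $\JacobianG(\mathbf{o}_0)$. So the real content is \eqref{e:jacobian-G-Experession}. First I would recall from \eqref{e:Fixed-Point-G} that the $n$-th component of $\mathbf{G}$ is $g\bigl(R_n(\mathbf{o}), E_n(\mathbf{o}), t_C(\mathbf{o})\bigr)$, so by the chain rule, for each $j \in \mathcal{I}$,
\begin{align}
\nonumber
\frac{\partial G_n}{\partial o_j}(\mathbf{o}_0) = \partial_1 g_n \cdot \frac{\partial R_n}{\partial o_j}(\mathbf{o}_0) + \partial_2 g_n \cdot \frac{\partial E_n}{\partial o_j}(\mathbf{o}_0) + \partial_3 g_n \cdot \frac{\partial t_C}{\partial o_j}(\mathbf{o}_0),
\end{align}
where $\partial_k g_n$ abbreviates $\frac{\partial g}{\partial x_k}(R_n(\mathbf{o}_0), E_n(\mathbf{o}_0), t_C(\mathbf{o}_0))$. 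Writing this in matrix form over all $n$ and $j$ gives $\JacobianG = \mathrm{Diag}(\partial_1\foccupancy)\,\mathcal{J}_{\frefresh} + \mathrm{Diag}(\partial_2\foccupancy)\,\mathcal{J}_{\fentry} + \partial_3\foccupancy^\intercal \otimes \nabla t_C$, i.e. the last term is the outer product of the column vector $\partial_3\foccupancy$ with the row vector $\nabla t_C(\mathbf{o}_0)$.

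The remaining step is to substitute the explicit gradient of $t_C$. Here I would invoke Lemma~\ref{lem:tc-class-C1}, which already gives $\frac{\partial \ftc}{\partial o_j}(\mathbf{o}) = -\frac{\partial F}{\partial o_j}(\mathbf{o}, t_C(\mathbf{o})) \cdot \bigl(\frac{\partial F}{\partial T}(\mathbf{o}, t_C(\mathbf{o}))\bigr)^{-1}$ with $F(\mathbf{o},T) = \sum_{n\in\mathcal{I}} g(R_n(\mathbf{o}), E_n(\mathbf{o}), T) - C$ from \eqref{e:F-capacity-function}. Differentiating $F$ in $T$ yields $\frac{\partial F}{\partial T} = \sum_{n\in\mathcal{I}} \partial_3 g_n$, explaining the scalar denominator $\sum_{n\in\mathcal{I}} \partial_3 \foccupancy_n$. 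Differentiating $F$ in $o_j$ (again by the chain rule applied to each summand, noting the $T$-slot is held fixed) gives $\frac{\partial F}{\partial o_j} = \sum_{n\in\mathcal{I}} \bigl(\partial_1 g_n \frac{\partial R_n}{\partial o_j} + \partial_2 g_n \frac{\partial E_n}{\partial o_j}\bigr)$, which in vector form is exactly the row vector $\partial_1\foccupancy^\intercal \mathcal{J}_{\frefresh}(\mathbf{o}_0) + \partial_2\foccupancy^\intercal \mathcal{J}_{\fentry}(\mathbf{o}_0)$ — up to the transpose bookkeeping this is the parenthesized factor in the last line of \eqref{e:jacobian-G-Experession}. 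Plugging these two pieces into the outer-product term and collecting everything reproduces \eqref{e:jacobian-G-Experession}.

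The only genuine subtlety — and the step I expect to require the most care — is keeping the transpose/orientation conventions consistent: the paper writes $\partial_j\foccupancy$ as a row ("has $n$ columns"), so $\partial_3\foccupancy^\intercal$ is a column and $\partial_1\foccupancy \cdot \mathcal{J}_{\frefresh}$ is a row, and one must check that the rank-one correction $\frac{1}{\sum_n \partial_3 g_n}\,\partial_3\foccupancy^\intercal\,(\partial_1\foccupancy\,\mathcal{J}_{\frefresh} + \partial_2\foccupancy\,\mathcal{J}_{\fentry})$ is indeed an $N\times N$ matrix matching the $\mathrm{Diag}(\cdot)\mathcal{J}$ terms. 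Everything else is a routine application of the chain rule together with Lemma~\ref{lem:tc-class-C1}; no new analytic input is needed beyond the continuous differentiability already established in Proposition~\ref{prop:G-C1-Exist-Fixed-Point}. I would therefore structure the proof as: (i) cite \eqref{e:Defintion-G-Beta} and linearity to reduce to $\JacobianG$; (ii) apply the chain rule to $G_n = g(R_n, E_n, t_C)$; (iii) substitute $\nabla t_C$ from Lemma~\ref{lem:tc-class-C1} and expand $\partial F/\partial o_j$, $\partial F/\partial T$; (iv) assemble the matrix form, being explicit about the outer product.
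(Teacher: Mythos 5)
Your proposal is correct and follows essentially the same route as the paper's own proof: linearity to reduce to $\mathcal{J}_{\mathbf{G}}$, the chain rule applied componentwise to $g(R_n,E_n,t_C)$, and substitution of the gradient of $t_C$ from Lemma~\ref{lem:tc-class-C1} with $\partial F/\partial T=\sum_n \partial_3 g_n$ and $\partial F/\partial o_j$ expanded via the chain rule, yielding the rank-one correction. Your remark on the row/column orientation of $\partial_j \foccupancy$ is a fair point about the paper's notation but does not change the argument.
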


\begin{proof}
See \ref{app:proof-jac-GB}. 
\end{proof}

\subsection{Choice of $\beta$}
\label{ss:Choice-Beta} 

Our approach for selecting the value of $\beta$ for Algorithm \ref{alg:cap} is based on Proposition \ref{prop:jacobian-convergence}. Let $Y(\mathbf{o})$ be the set of values of $\beta$ in $[0,1)$ for which the spectral norm of $\JacobianGB$ is smaller than $1$, i.e., 

\begin{equation}
   Y(\mathbf{o}) \triangleq \{\beta\in [0,1):~  \norm{\JacobianGB(\mathbf{o})}_2 < 1 \}.
\end{equation}

\noindent Equation \eqref{e:condition-unique-fixed-point} in Proposition \ref{prop:jacobian-convergence} is equivalent to the set $\bigcap_{\mathbf{o}\in \Delta_C} Y(\mathbf{o})$ being non-empty. In other words, choosing the parameter $\beta$ of Algorithm \ref{alg:cap} from the set $\bigcap_{\mathbf{o}\in \Delta_C} Y(\mathbf{o})$ guarantees the convergence of Algorithm \ref{alg:cap} to the unique fixed point of $\mathbf{G}$.

We stress that the characterization of the sets $Y(\mathbf{o})$ and $\bigcap_{\mathbf{o}\in \Delta_C} Y(\mathbf{o})$ is difficult. For this reason, we proceed with a randomized approach. First, we randomly sample $d$ vectors from $\Delta_C$, $(\mathbf{o}_j)_{1 \leq j \leq d}$. Then, for each sampled vector $\mathbf{o}_j$, we compute a subset of values of $\beta$ leading to $\norm{\JacobianGB({\mathbf{o}_j)}}_2 < 1$. We denote the considered subset of $Y{(\mathbf{o}_j)}$ as $ \tilde{Y}(\mathbf{o}_j)$, where~$\tilde{Y}(\mathbf{o}_j) \subset Y(\mathbf{o}_j)$. Finally, we take the intersection $\bigcap_{j=1}^{d} \tilde{Y}(\mathbf{o}_j)$ as a set of candidate values for $\beta$. 
When we use a larger number of sampled vectors, $d$, the likelihood that the values of $\beta \in \bigcap_{j=1}^{d} \tilde{Y}(\mathbf{o}_j)$ satisfy the condition in~\eqref{e:condition-unique-fixed-point} increases. However, this also comes with the drawback of higher computational costs.


In the next proposition, we compute the aforementioned subset of $Y(\mathbf{o})$, $\tilde{Y}(\mathbf{o})$, based on the input vector $\mathbf{o}$. To this aim, we leverage the spectral radius of the Jacobian. Recall that the spectral radius of a matrix is defined as the maximum absolute value of its eigenvalues. We denote the spectral radius of matrix $M$ by $\rho(M)$, and its spectral norm by $\norm{M}_2=\sqrt{\rho{(M M^{\intercal})}}$.

\begin{proposition}[Properties of $Y(\mathbf{o})$]
\label{lem:effect-beta-spectral-norm}
Let $\gamma$ be the spectral norm of the Jacobian matrix   $\mathbf{G}(\mathbf{o})$ and let $\eta$ be the spectral radius of the matrix $\mathcal{J}_{\mathbf{G}}(\mathbf{o})+ \mathcal{J}_{\mathbf{G}}(\mathbf{o})^\intercal$,
\begin{align}
\gamma&=\norm{\JacobianG(\mathbf{o})}_2 = \sqrt{\rho(\JacobianG(\mathbf{o}) \JacobianG(\mathbf{o})^{\intercal} )}\\
\eta&=\rho(\mathcal{J}_{\mathbf{G}}(\mathbf{o})+ \mathcal{J}_{\mathbf{G}}(\mathbf{o})^\intercal).
\end{align}
If 
\begin{equation}\label{e:condition-subsest-Y}
            \eta^{2} - 4 \eta \gamma + 4 \geq 0 ,
\end{equation}
 then $Y(\mathbf{o})$ satisfies 
     \begin{align}
        &Y(\mathbf{o}) \supset \tilde{Y}(\mathbf{o})  \label{e:lem-Yo-a}
        \end{align}
        where
        \begin{equation}
             \tilde{Y}(\mathbf{o}) = (a,b)
        \end{equation}
        and
        \begin{align} 
        &a = \max\left(0, \frac{ 2 \gamma - \eta - \sqrt{\eta^{2} - 4 \eta \gamma + 4}}{2 \left(\gamma + 1\right)}\right), \\ \label{e:lem-Yo-b}
        &b = \min \left(1, \frac{ 2 \gamma - \eta  + \sqrt{\eta^{2} - 4 \eta \gamma + 4}}{2 \left(\gamma + 1\right)} \right).   
    \end{align}

\end{proposition}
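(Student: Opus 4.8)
The plan is to reduce the claimed inclusion to a scalar quadratic inequality in $\beta$. By Proposition~\ref{e:prop-Jacobian-GB} the Jacobian factors as $\JacobianGB(\mathbf{o}) = (1-\beta)\JacobianG(\mathbf{o}) + \beta I_N$; fix $\mathbf{o}\in\Delta_C$ and abbreviate $J \triangleq \JacobianG(\mathbf{o})$. Since $\beta\in Y(\mathbf{o})$ means exactly $\norm{\JacobianGB(\mathbf{o})}_2 < 1$, it suffices to show that every $\beta$ in the open interval $(a,b)$ of the statement satisfies $\norm{\JacobianGB(\mathbf{o})}_2 < 1$; this is precisely \eqref{e:lem-Yo-a}.

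The first and main step is a uniform upper bound on the spectral norm. Writing $\norm{\JacobianGB(\mathbf{o})}_2^2 = \rho(\JacobianGB(\mathbf{o})\JacobianGB(\mathbf{o})^\intercal)$ and expanding
\begin{equation}
\JacobianGB(\mathbf{o})\JacobianGB(\mathbf{o})^\intercal = (1-\beta)^2\, JJ^\intercal + \beta(1-\beta)\,(J+J^\intercal) + \beta^2 I_N ,
\end{equation}
one notes that this matrix is positive semidefinite, so its spectral radius equals its largest eigenvalue, and that it is a sum of symmetric matrices, so that largest eigenvalue is at most the sum of the three summands' largest eigenvalues (Weyl's inequality). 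Estimating the contributions through the definitions of $\gamma$ and $\eta$, and controlling the indefinite cross term via $0\le\beta(1-\beta)\le\beta$, gives a bound $\norm{\JacobianGB(\mathbf{o})}_2^2 \le \Phi(\beta)$ with $\Phi$ a quadratic polynomial in $\beta$.

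The remaining work is algebraic. Imposing $\Phi(\beta) < 1$ is equivalent, after collecting terms, to
\begin{equation}
(\gamma+1)\,\beta^2 + (\eta-2\gamma)\,\beta + (\gamma-1) < 0 .
\end{equation}
Its discriminant is $(\eta-2\gamma)^2 - 4(\gamma+1)(\gamma-1) = \eta^2 - 4\eta\gamma + 4$, which is nonnegative exactly under hypothesis \eqref{e:condition-subsest-Y}; the leading coefficient $\gamma+1$ being positive, the quadratic is then negative precisely on the open interval bounded by its two real roots $\tfrac{2\gamma-\eta\pm\sqrt{\eta^2-4\eta\gamma+4}}{2(\gamma+1)}$. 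Intersecting that interval with the admissible range $[0,1)$ of $\beta$ — i.e.\ clipping the smaller root from below by $0$ and the larger root from above by $1$ — produces exactly the claimed endpoints $a$ and $b$ of \eqref{e:lem-Yo-b}. For any such $\beta$ we then have $\Phi(\beta)<1$, hence $\norm{\JacobianGB(\mathbf{o})}_2<1$, i.e.\ $\beta\in Y(\mathbf{o})$, which proves $\tilde Y(\mathbf{o})\subset Y(\mathbf{o})$.

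I expect the first step to be the only genuine difficulty. Obtaining $\Phi$ in exactly the shape that produces the stated coefficients — and hence the stated discriminant $\eta^2-4\eta\gamma+4$ — hinges on handling the indefinite cross term $\beta(1-\beta)(J+J^\intercal)$ through its spectral norm $\eta=\rho(J+J^\intercal)$ (rather than a looser bound such as $2\gamma$), together with careful bookkeeping of how the spectral norm of $J$ enters the $JJ^\intercal$ term. Once $\Phi$ is pinned down, the discriminant identity, the formulas for the roots, and the truncation to $[0,1)$ are all routine.
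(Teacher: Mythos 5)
Your proposal follows essentially the same route as the paper's proof: write $\JacobianGB(\mathbf{o})=(1-\beta)\JacobianG(\mathbf{o})+\beta I_N$, expand $\JacobianGB\JacobianGB^\intercal$, bound the top eigenvalue of the sum by the sum of the top eigenvalues of the symmetric summands (your Weyl step is the paper's triangle inequality for the spectral norm), control the cross term by $\beta\eta$ using $\beta(1-\beta)\le\beta$ and $\eta\ge 0$, and then solve the resulting quadratic in $\beta$; the discriminant $\eta^2-4\eta\gamma+4$, the roots, and the clipping to $[0,1)$ are identical to the paper's. The one point you defer to ``careful bookkeeping'' is exactly where both your sketch and the paper's own proof are loose: with $\gamma=\norm{\JacobianG(\mathbf{o})}_2$ as defined in the statement, the first term contributes $(1-\beta)^2\norm{\JacobianG\JacobianG^\intercal}_2=(1-\beta)^2\gamma^2$, not $(1-\beta)^2\gamma$, so the quadratic matching the stated endpoints $a,b$ is obtained only if $\gamma$ is read as $\rho(\JacobianG\JacobianG^\intercal)$ (the squared spectral norm); as written, your derivation inherits this same substitution without justifying it.
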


\begin{proof}
    See \ref{app:properties}.  
\end{proof}

\begin{remark}
If, for a given $\mathbf{o}$, $\mathcal{J}_{\mathbf{G}}(\mathbf{o})$ is antisymmetric, i.e, $\mathcal{J}_{\mathbf{G}}(\mathbf{o})=-\mathcal{J}_{\mathbf{G}}(\mathbf{o})^{\intercal}$, then $\eta=0$ and \eqref{e:condition-subsest-Y} is verified. It follows from Proposition~\ref{lem:effect-beta-spectral-norm} that in this case $\tilde Y(\mathbf{o}) = \left(\frac{\gamma -1}{\gamma +1}, 1\right)$.
\end{remark}

For each vector $\mathbf{o}_j$, we compute the associated constants $\gamma_j = \gamma(\mathbf{o}_j)$ and $\eta_j = \eta(\mathbf{o}_j)$. Next we verify if \eqref{e:condition-subsest-Y} is satisfied for each pair $(\gamma_j, \eta_j)$. If this condition holds for all pairs, then using Proposition~\ref{lem:effect-beta-spectral-norm} we can determine a subset of  $\bigcap_{j=1}^{d} Y(\mathbf{o}_j)$, namely  $\bigcap_{j=1}^{d} \tilde{Y}(\mathbf{o}_j)$, and select a value for $\beta$ from within that subset to guarantee the convergence of the proposed fixed point algorithm.

\section{Numerical Evaluation}
\label{s:experiments}

\begin{table}[t]
   \caption{Parameters of the experiments.}
   \label{tab:parameters-experiments}
   \centering
\begin{tabular}{ |l|l|l| } 
 \hline
 Variable & Synthetic traces  & Amazon trace \\
 \hline
 \hline
 $\mathcal{I}$ & $[0..99]^{2}$ & Products  \\
 $N=|\mathcal{I}|$ & $10^{4}$ & $\approx 10^{4}$  \\
 $\lambda_n$ & \eqref{e:arrival-rates-square-two-hot-regions} & Empirical \\
 $\mathrm{dis}(\cdot, \cdot)$ &Euclidean distance & Euclidean distance \\
 $d$&  $1$ and $2$ & $300$ \\
 Number of requests $r$ & $2\cdot 10^{5}$  & $\approx 10^{5}$ \\ 
 Number of iterations Alg. \ref{alg:cap} & $25$ and $15$  & $40$\\ 
$q_n(m)$  & $\left(\mathrm{dis}(n,m)\right)^{-2}$ &   $\left(\mathrm{dis}(n,m)\right)^{-0.2}$ \\
 \hline 
\end{tabular}  
\end{table}

\begin{figure*}
    \begin{subfigure}{\linewidth}
  \centering
  \includegraphics[width=0.6\linewidth]{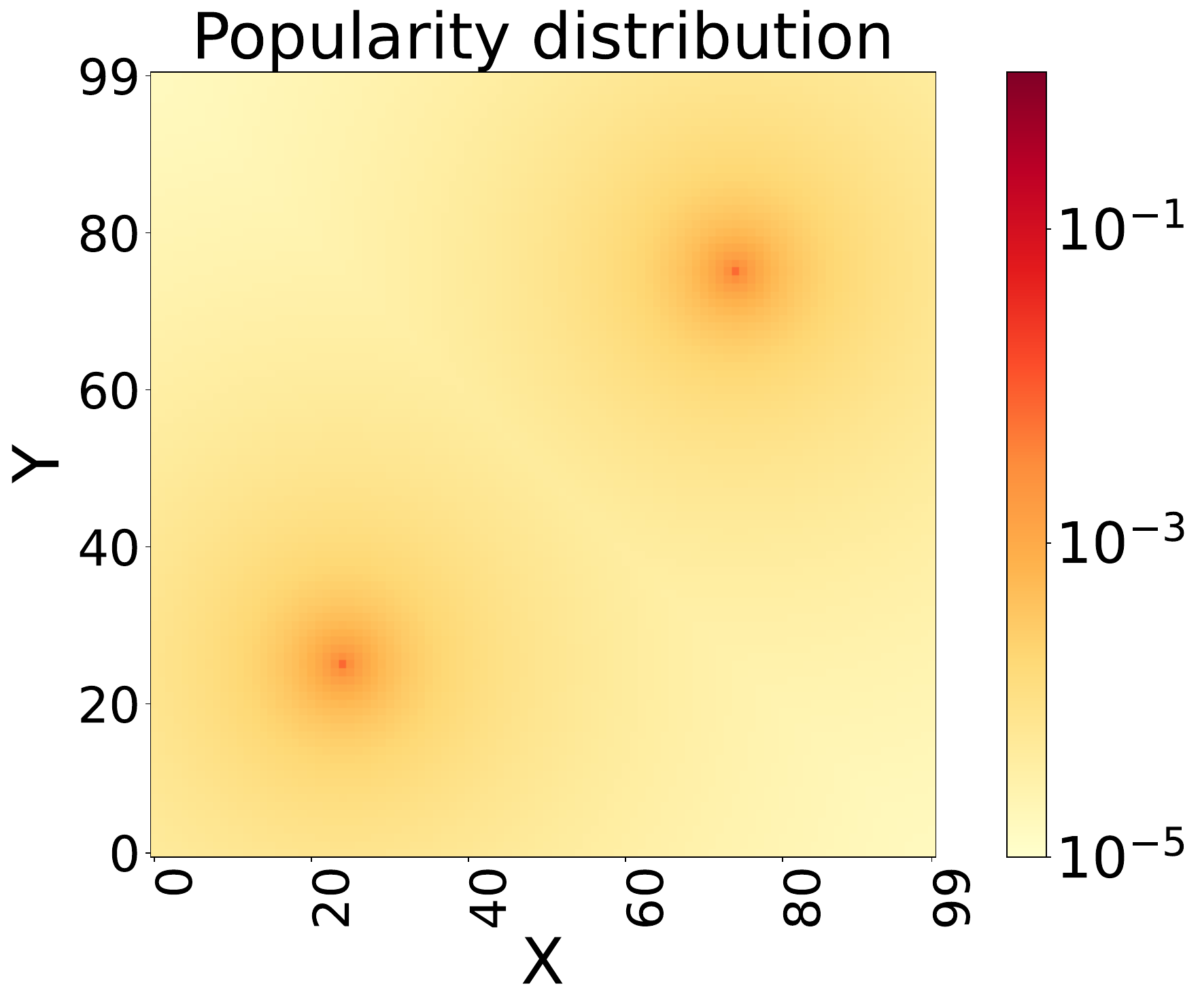}
\caption{$\alpha=1.4$}
\label{fig:popularity-alpha1.4}
\end{subfigure}
\hfill
\begin{subfigure}{\linewidth}
  \centering
  \includegraphics[width=0.6\linewidth,keepaspectratio]{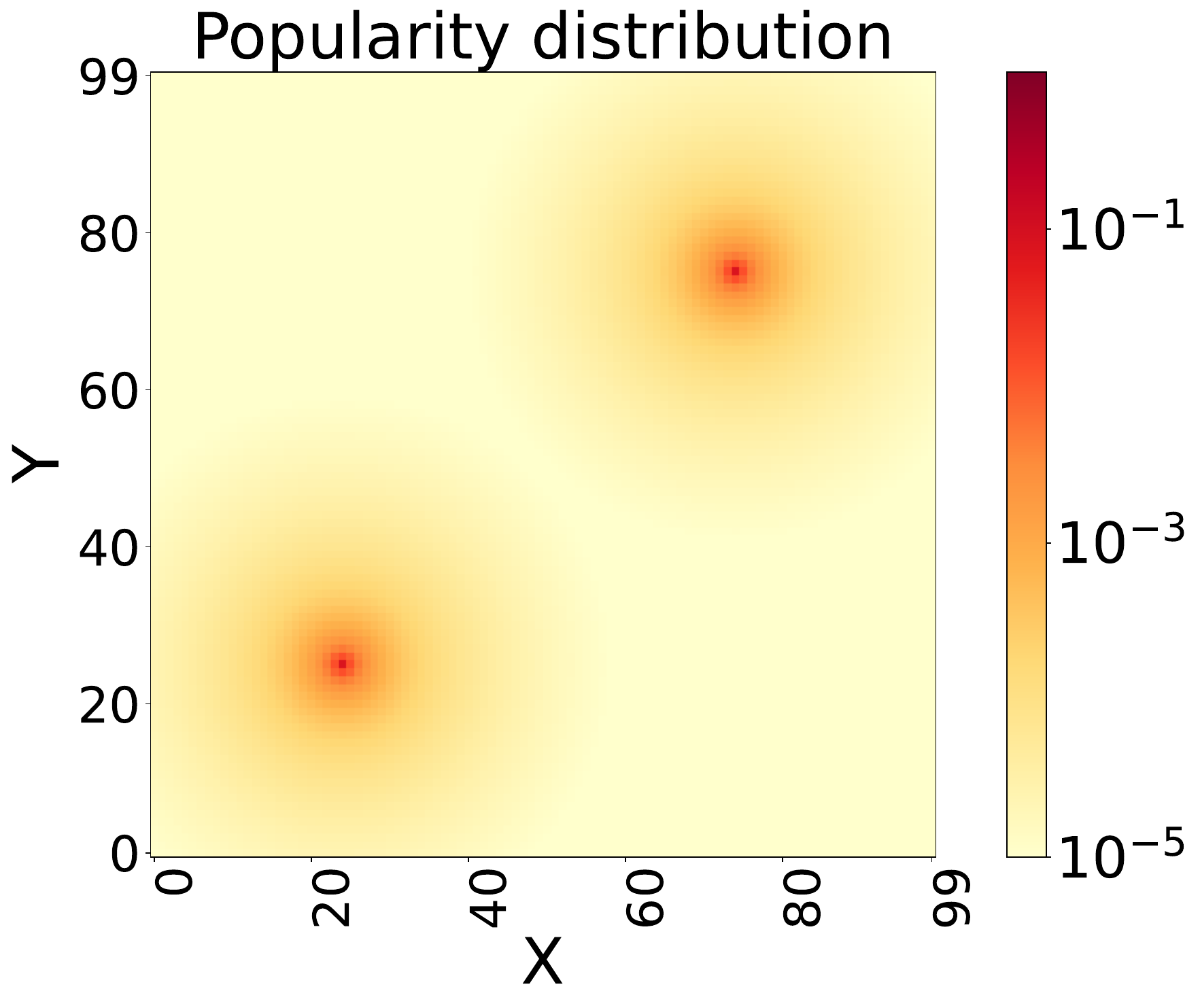}
\caption{$\alpha=2.5$}
\label{fig:popularity-alpha2.5}
\end{subfigure}
\begin{subfigure}{\linewidth}
  \centering
  \includegraphics[width=0.6\linewidth,keepaspectratio]{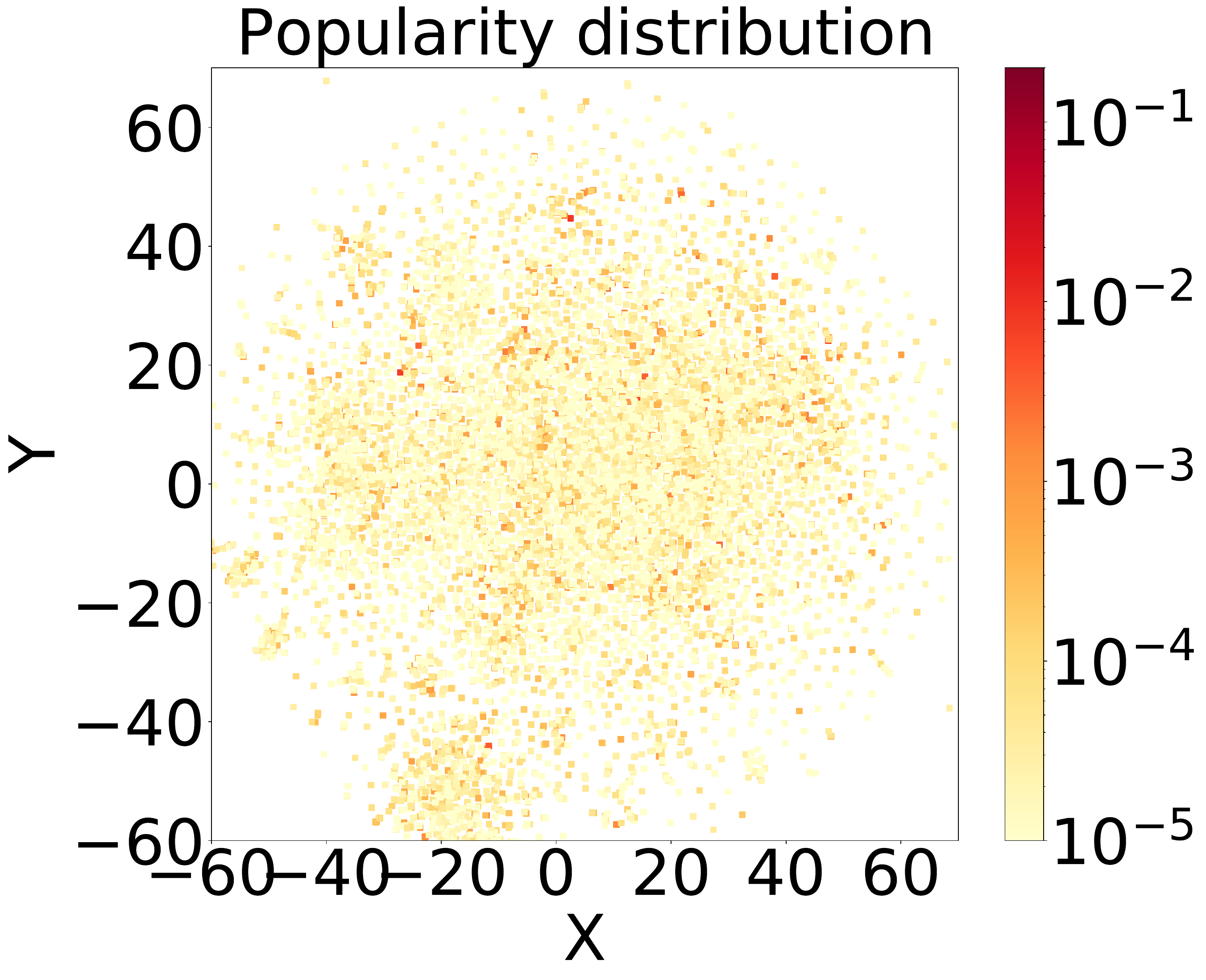}
\caption{Amazon trace}
\label{fig:popularity-amazon}
\end{subfigure}
    \caption{Spatial popularity distribution.}
    \label{fig:popularities}
\end{figure*}

 We assess the accuracy of our proposed RND-TTL approximation method by conducting experiments on both synthetic and real-world traces. The traces are described in Section \ref{ss:Experimental-Setting}. To evaluate our method, we compare our approach for estimating the hit rate of RND-LRU with other alternative solutions discussed in Section~\ref{ss:alternative solutions}. Subsequently, we analyze the approximation accuracy in Section~\ref{ss:RND-TTL approximation-experiments}. Our RND-TTL approximation technique involves solving a set of equations using an iterative fixed-point method outlined in Algorithm~\ref{alg:cap}. In Section \ref{ss:Convergence-Algorithm1-experiments}, we evaluate the convergence of Algorithm~\ref{alg:cap}.

\subsection{Experimental Setting} 
\label{ss:Experimental-Setting}

We evaluate the efficiency of the proposed fixed point method (Algorithm~\ref{alg:cap}) to predict the hit rate on synthetic traces and on an Amazon trace \cite{sabnis2021grades}.

{ \noindent \textbf{Synthetic traces.}} For the synthetic traces, each item corresponds to two features, characterized by a point in a grid, $\mathcal{I}=[0..99]^{2}$ (e.g. Fig.~\ref{fig:popularities}). The total number of items is $|\mathcal{I}|= 10^{4}$, and the dissimilarity function between items  $\mathrm{dis}(\cdot,\cdot)$ is the Euclidean distance. Neighbors of item $(x,y)$ at the same distance are ordered counterclockwise starting from the item to the right, i.e., from $(x+a,y)$ with $a>0$. The synthetic traces are generated in an IRM fashion~\cite{irm-fagin-1977}, where the popularity distribution for an item $n=(x,y)$ is given by 
\begin{equation}
\label{e:arrival-rates-square-two-hot-regions}
     p_{(x,y)} \sim  \left(\min \left \{\mathrm{dis}(n,(24,24)), \mathrm{dis}(n,(74,74)) \right\} +1\right)^{-\alpha} ,
\end{equation}
where $\alpha$ is a parameter controlling the skewness of the popularity distribution.
We generate $50$ synthetic streams for $\alpha=1.4$ and $\alpha=2.5$ and each stream is of $r=2\cdot 10^{5}$ requests for items in $\mathcal{I}$. Figures~\ref{fig:popularity-alpha1.4}, \ref{fig:popularity-alpha2.5} illustrate the popularity distribution in \eqref{e:arrival-rates-square-two-hot-regions} for $\alpha \in \{1.4, 2.5 \}$.

{ \noindent \textbf{Real world trace. }}For the Amazon trace, each item corresponds to an Amazon product. The request trace in ~\cite{sabnis2021grades} is generated by mapping every Amazon review for the item to an item request. Each item has been mapped to an Euclidean space of dimension $100$ using the technique in ~\cite{mcauley2015image}, where the Euclidean distance reflects dissimilarity between two items. Inspired by this methodology, we generate a corresponding IRM stream of requests matching the item popularity in the trace in \cite{sabnis2021grades}. Figure \ref{fig:popularity-amazon} shows a representation of item popularities in a 2D space through the t-SNE algorithm ~\cite{van2008visualizing}.

\subsection{Benchmarks and Alternative Approaches}
\label{ss:alternative solutions}

In what follows, we compare hit rate estimates provided by RND-LRU using  Algorithm~\ref{alg:cap} with the hit rate estimations for LRU and for the optimal static allocation. We also propose an alternative approach to estimate RND-LRU's hit rate.

\textbf{LRU.} The hit rate and the occupancy for an item $n$ are computed using \eqref{e:hit-rate-item-lru-che} and $t_C$ is deduced using the cache capacity constraint given by \eqref{e:tC-Defintion}.

\textbf{Optimal Static Allocation.} 
The maximum hit rate obtainable by a static allocation under similarity caching can be obtained solving a maximum weighted coverage problem. We consider, as in SIM-LRU, that each item can be used to satisfy any request for items closer than $d$.
The maximum weighted coverage problem takes as input a capacity $C$, a set of items ~$\mathcal{I}$, with $N=|\mathcal{I}|$, their corresponding weights $W=(w_n)_{n\in I}$ and a set of sets $R=\{R_1,\ldots, R_N \}$ such that $R_n \subset \mathcal{I}$.  The objective is to find a set $\sigma^{*}\subset \{ 1,\ldots , N\} $  such that: $ \sigma^{*} = \argmax_{\sigma \subset \{ 1,\ldots , N\}: |\sigma|\leq C  } \sum_{n\in \cup_{j\in \sigma } R_j}  w_n$.

Finding the best static allocation is equivalent to solving a maximum weighted coverage problem, with   
weights $w_n= \lambda_n$ for $n\in \mathcal{I}$, $C$ the cache capacity, and $R$ the set of neighbors for each item, i.e., $R= \{\Nclosed{n} \}_{n\in \mathcal{I}}$. 

The maximum weighted coverage problem is known to be NP-hard. In practice, a popular greedy algorithm guarantees a $(1- 1/e)$ approximation ratio~\cite{nemhauser1978analysis}.

The greedy algorithm operates as follows: initially, it selects the set $R_{c_1}=R_{c_1}^{0}$ with the largest coverage, where $c_1$ is determined by $c_1=\argmax_{n\in \mathcal{I}} \sum_{m \in R_n} \lambda_m$. Subsequently, the algorithm considers sets $(R_{n}^{1})_{n\in \mathcal{I}}$ defined as $R_{n}^{1} = R_{n}^{0} \setminus R_{c_1}^{0}$ in the next step, and it chooses the set $R_{c_2}$ based on $c_2= \argmax_{n\in \mathcal{I}} \sum_{m \in R_n^{1}} \lambda_m$. The same procedure is repeated until $C$ items are collected or all the items are chosen.

\textbf{LRU with aggregate requests.} Under SIM-LRU  an item is refreshed by the requests for all its neighbors. A naive approach to study a SIM-LRU cache is then to consider that it operates as a LRU cache with  request rates for each item equivalent to the sum of the request rates for all items in its neighborhood. One can then use the TTL approximation for LRU, leading to the following formulas:

    \begin{equation}\label{e:hit-rate-naive-che}
        \hit_{n} = 1- \e^{-\sum_{i\in \Nclosed{n}} \lambda_i  t_C } , \quad \newoccupancy_{n} =\hit_{n}.
    \end{equation}

We refer to the TTL approximation for LRU as `LRU', the greedy algorithm as `Greedy', and LRU with aggregate requests as `LRU-agg'.

\begin{figure*}
    \begin{subfigure}{\linewidth}
  \centering
  \includegraphics[width=0.7\linewidth]{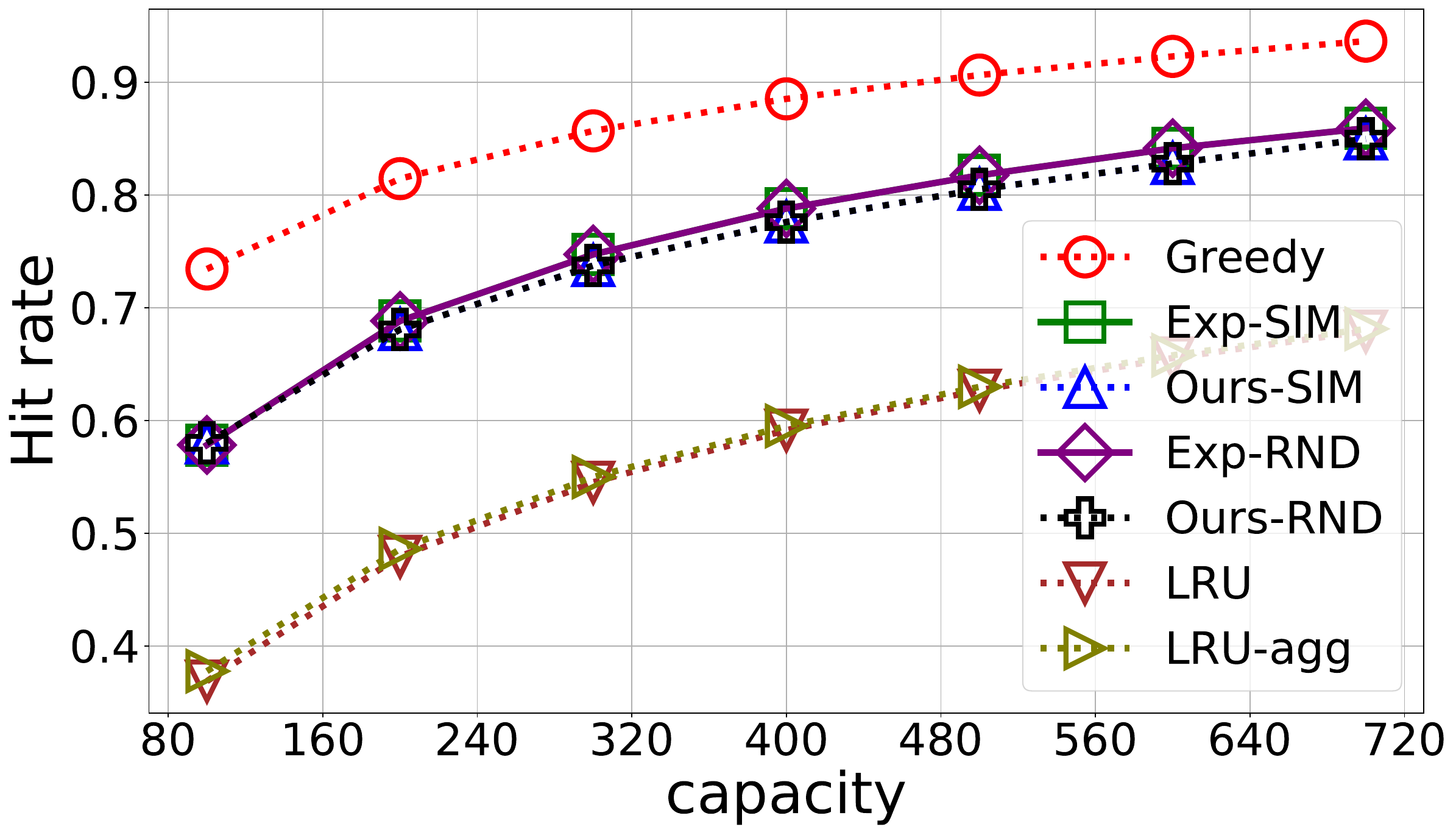}
\caption{Synthetic trace,  $\alpha = 2.5$, $d=1$, $25$ iterations}
\label{fig:hit-rate-d1}
\end{subfigure}
\hfill
\begin{subfigure}{\linewidth}
  \centering
  \includegraphics[width=0.7\linewidth, keepaspectratio]{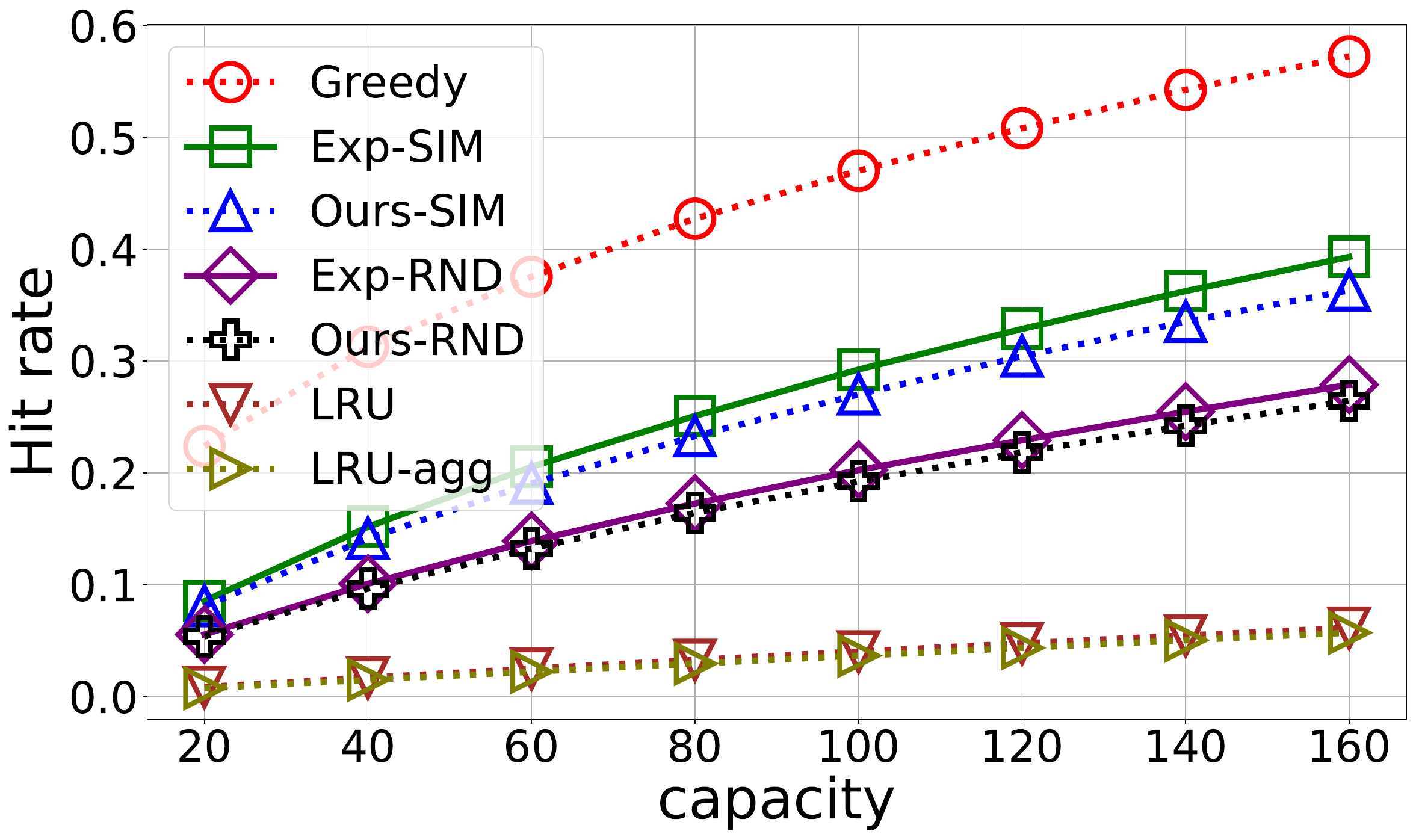}
  \caption{Synthetic trace, $\alpha=1.4$, $d=2$, $15$ iterations}
  \label{fig:hit-rate-d2}
\end{subfigure}
\hfill
 \begin{subfigure}{\linewidth}
      \centering  
    \includegraphics[width=0.7\linewidth,keepaspectratio]{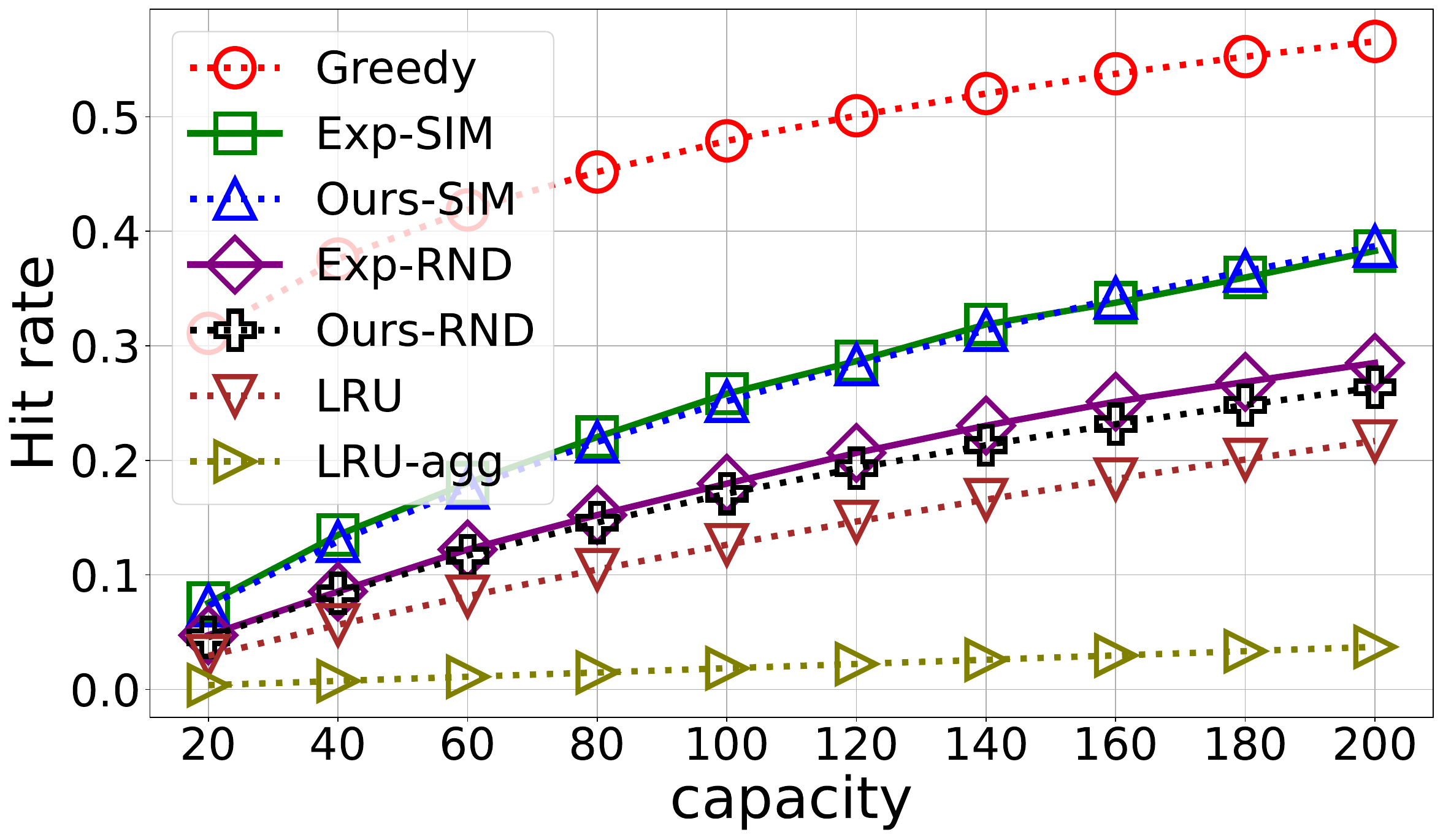}
    \caption{Amazon trace, $d=300$, $40$ iterations}
    \label{fig:hit-rate-amazon}
\end{subfigure}
    \caption{Hit rate versus cache capacity, $\beta=0.5$.}
    \label{fig:hit-rates} 
    \vspace{-0.1in}
\end{figure*}

\begin{figure}[t]
\begin{center}
    \begin{subfigure}{0.48\linewidth}
  \centering
  \includegraphics[width=0.99\linewidth]{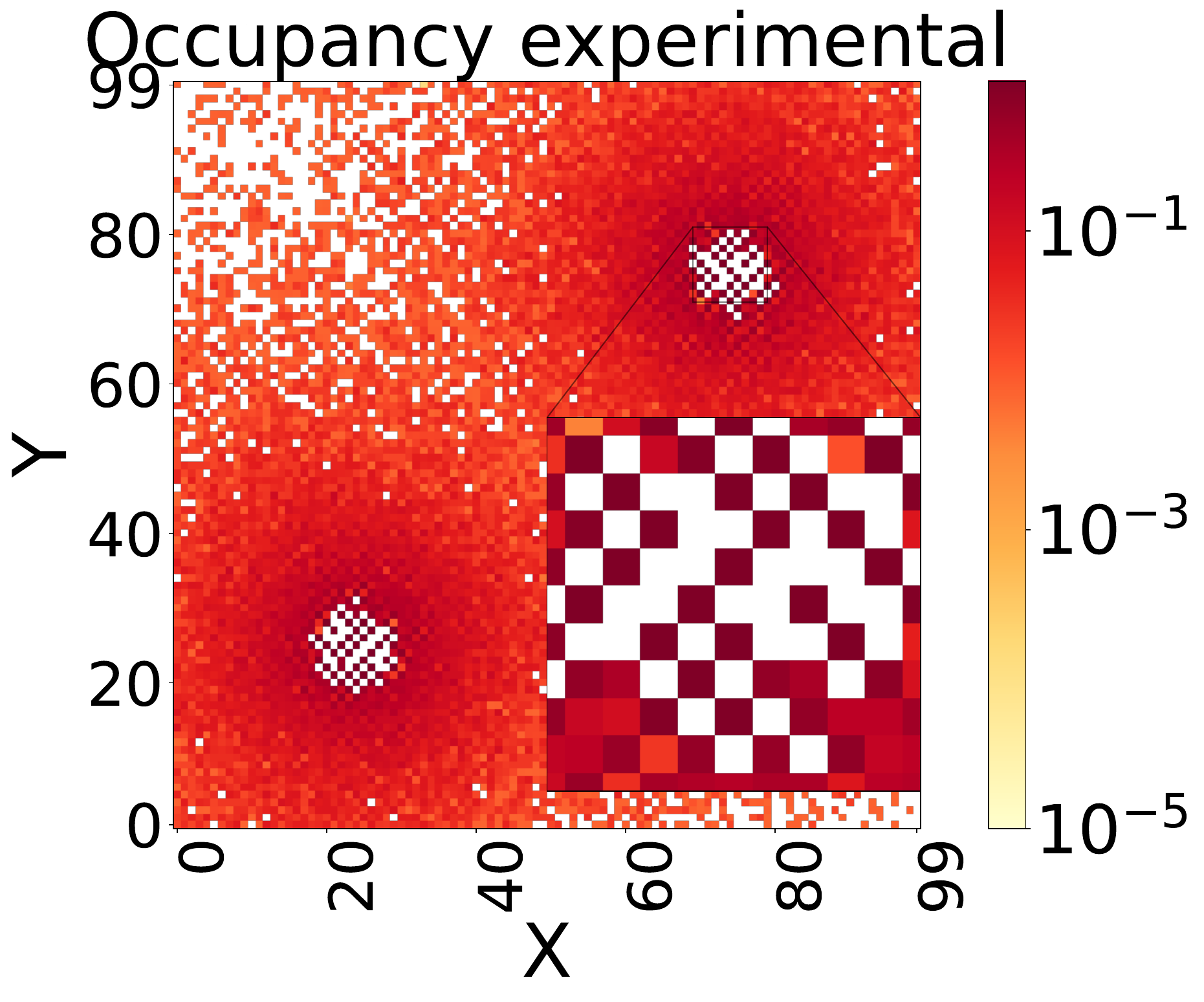}
\caption{$r=2\cdot 10^{5}$}
\label{fig:occupancy-exp-c500-d1}
\end{subfigure}
\end{center}
\begin{center}
\begin{subfigure}{0.48\linewidth}
  \centering
  \includegraphics[width=0.99\linewidth,keepaspectratio]{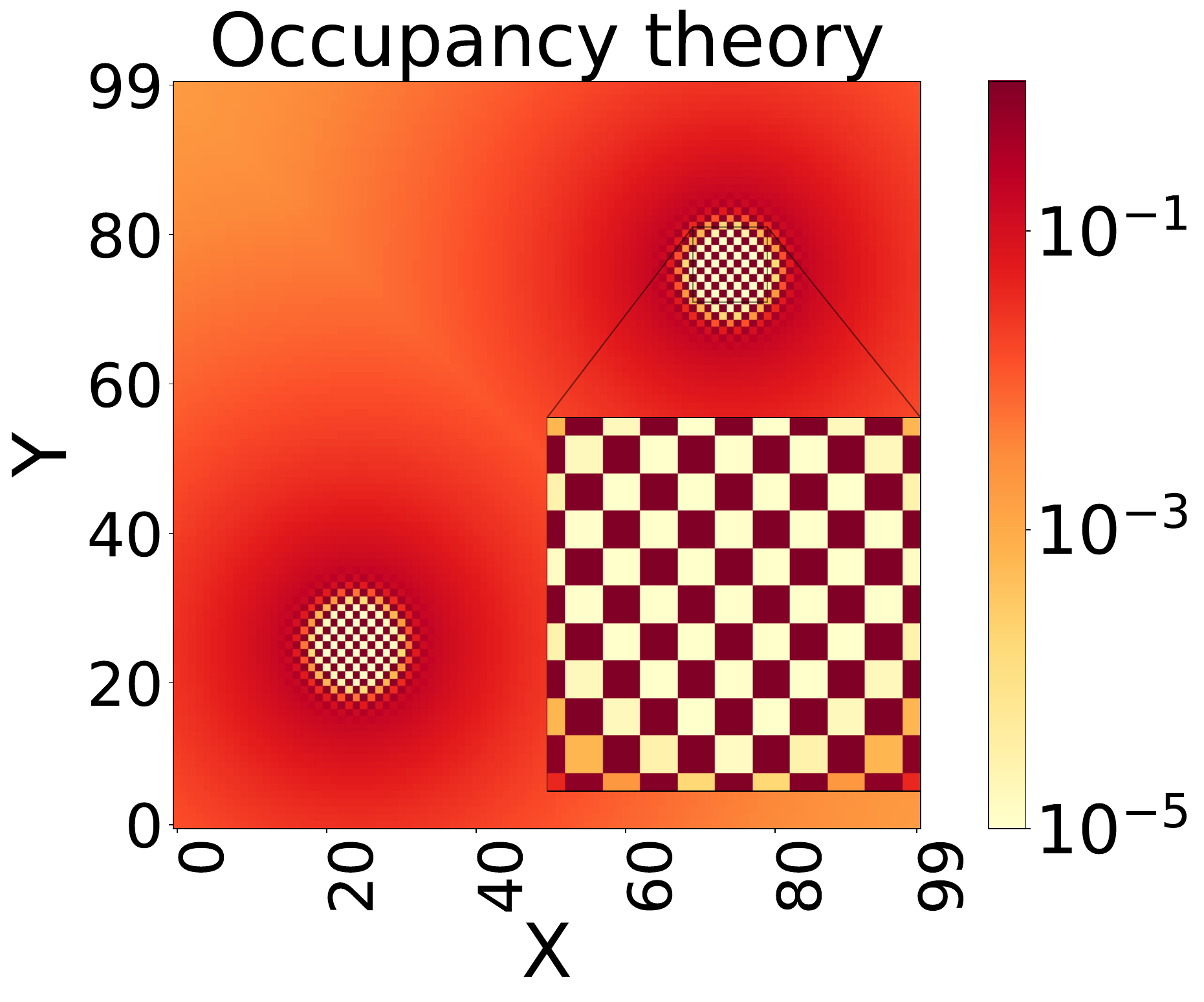}
\caption{$25$ iterations}
\label{fig:occupancy-theory-c500-d1}
\end{subfigure}
\end{center}
    \caption{Synthetic trace occupancies: $C=500$, $d=1$, $\alpha=2.5$.}
    \label{fig:occupancies}
\end{figure}

\subsection{RND-TTL approximation evaluation}
\label{ss:RND-TTL approximation-experiments}


We empirically compute the hit rate of similarity cache mechanisms using SIM-LRU and RND-LRU on both synthetic and real-world traces described in Section \ref{ss:Experimental-Setting}. In the case of synthetic traces, SIM-LRU and RND-LRU are utilized with similarity threshold parameter $d=1$ and $d=2$, for request process skewness $\alpha=2.5$ and $\alpha=1.4$, respectively. Additionally, given two distinct items $n$ and $m$, we set RND-LRU parameters $q_n(m)$ to $(\mathrm{dis}(n,m))^{-2}$. Note that when $d=1$, RND-LRU reduces to SIM-LRU. Results for the hit rate are averaged over the $50$ request processes for $\alpha=2.5$ and $\alpha=1.4$. The $95\%$ confidence intervals were smaller than $1.2 \cdot 10^{-3}$ in all the considered synthetic experiments for the hit rate computation. For the Amazon trace, SIM-LRU and RND-LRU are employed with a similarity threshold $d=300$. Furthermore, we choose $q_n(m)=(\mathrm{dis}(n,m))^{-0.2}$ as the RND-LRU parameters.
In all experiments, we refer to the empirical hit rates for SIM-LRU and RND-LRU as `Exp-SIM' and `Exp-RND', respectively.

For all the theoretical computations of the hit rate, the arrival rates $\boldsymbol{\lambda}$ for items are taken equal to the corresponding request  probabilities.

Our approach utilizes Algorithm~\ref{alg:cap} with parameter $\beta=0.5$ and a stopping condition determined by a fixed number of iterations. This algorithm is employed to estimate the approximate hit probabilities for each item, $\mathbf{h}$, and subsequently determines the overall cache hit rate $H$.

We refer to the latter estimate, for SIM-LRU and RND-LRU, as `Ours-SIM' and `Ours-RND', respectively. Possible alternative methods to estimate the hit rate are presented in Sec.~\ref{ss:alternative solutions}. The numerical values used for all the experiments are summarized in Table~\ref{tab:parameters-experiments}.

In Fig.~\ref{fig:hit-rates}, we show the empirical hit rate along with its estimates obtained through different approaches, for the two synthetic settings and for the Amazon trace. In  the considered settings, `Greedy' overestimates the hit rate. `LRU' and `LRU-agg', in contrast, underestimate it. `Ours-SIM' and `Ours-RND'  clearly outperform all the  alternative approaches presented in Sec.~\ref{ss:alternative solutions} in estimating the empirical hit rate, while tending to  underestimate it.
As `LRU' does not take into account the similarity between items, the gap between `LRU' and `Exp-SIM' reveals the benefits of similarity caching over exact caching.

For the synthetic settings in Figs.~\ref{fig:hit-rate-d1} and~\ref{fig:hit-rate-d2}, `LRU' and `LRU-agg' achieve similar hit rates. When $\boldsymbol{\tilde{\lambda}}= (\tilde{\lambda}_{n})_{n\in \mathcal{I}}$, where $\tilde{\lambda}_{n}=\sum_{m\in \Nclosed{n}}\lambda_m$,  is proportional to $\boldsymbol{\lambda}$, then `LRU' and `LRU-agg' have the same hit rate. In the choice of the popularity distribution in Figs.~\ref{fig:hit-rate-d1} and~\ref{fig:hit-rate-d2} (see \eqref{e:arrival-rates-square-two-hot-regions}), a popular item and its neighbors share similar rates, i.e., $\lambda_n \approx \lambda_m$ for $m\in \Nclosed{n}$. It follows that in the settings of Figs.~\ref{fig:hit-rate-d1} and~\ref{fig:hit-rate-d2}, the approximations~$\tilde{\lambda}_n\approx 5 \cdot \lambda_n$ and~$\tilde{\lambda}_n\approx 13 \cdot \lambda_n$ hold for the respective scenarios, especially for the popular items.  This provides insight into the comparable hit rates observed between 'LRU' and 'LRU-agg' in Figs.~\ref{fig:hit-rate-d1} and~\ref{fig:hit-rate-d2}.






While our approach provides the best estimates, we can observe that it slightly underestimates the hit rate. In order to understand this effect, we show in Fig.~\ref{fig:occupancies} the empirically estimated occupancy vector and the one produced by Algorithm~\ref{alg:cap}.
The proposed algorithm broadly captures the empirical occupancy  patterns, but  with  subtleties regarding symmetries. In particular, the zoom on Fig.~\ref{fig:occupancy-theory-c500-d1} shows that our approach produces a regular chess board  pattern. Some items are predicted to stay almost all the time in the cache while their $4$ neighbors are predicted to spend virtually no time in it. 
The corresponding empirical occupancy on Fig.~\ref{fig:occupancy-exp-c500-d1} shows a less symmetric pattern, implying that in this setup SIM-LRU is able to satisfy a group of requests using a smaller number of cache slots when compared against what is predicted by our approach. This, in turn, partially explains why our approach underestimates the hit~rate.


\subsection{ Convergence of Algorithm~\ref{alg:cap}} \label{ss:Convergence-Algorithm1-experiments}

The RND-TTL approximation selects the parameters $\boldsymbol{\lambda^i}$, $\boldsymbol{\lambda^r}$, and $T$ for the RND-TTL cache in a way that ensures the occupancy vector, as described in \eqref{e:occupancy-simlru-che}, is a fixed point of the function $\mathbf{G}$ defined in \eqref{e:Fixed-Point-G}. Algorithm~\ref{alg:cap} employs an iterative procedure aimed at finding a fixed point of $\mathbf{G}$, thereby determining the appropriate values for the RND-TTL cache's parameters. 

 Figure~\ref{fig:characteristic time t_C} shows the evolution of characteristic time $t_C$ and hit rate $H$ over different iterations. We observe that estimates of $H$ and $t_C$ by our algorithm converge in few iterations (less than $50$), under all considered scenarios. Note that $t_C(0)$, the value of $t_C$ at iteration $0$, is also the value of $t_C$ for `LRU' (see \eqref{e:tC-Defintion}). In addition, across all experiments, $t_C$ for `Ours-SIM' using Algorithm~\ref{alg:cap} converges to a value larger than $t_C(0)$.
Indeed, under LRU, $t_C$ is bounded by the time required for $C$ distinct items to be requested. For SIM-LRU and RND-LRU, in contrast, after $C$ distinct items are requested, an item previously in cache can remain there, despite not serving any requests. This occurs due to approximate hits, explaining why $t_C$ is larger for `Ours-SIM' than `LRU'. 

\begin{figure}
\begin{center}
\begin{subfigure}{0.8\linewidth}
  \centering
  \includegraphics[width=0.99\linewidth]{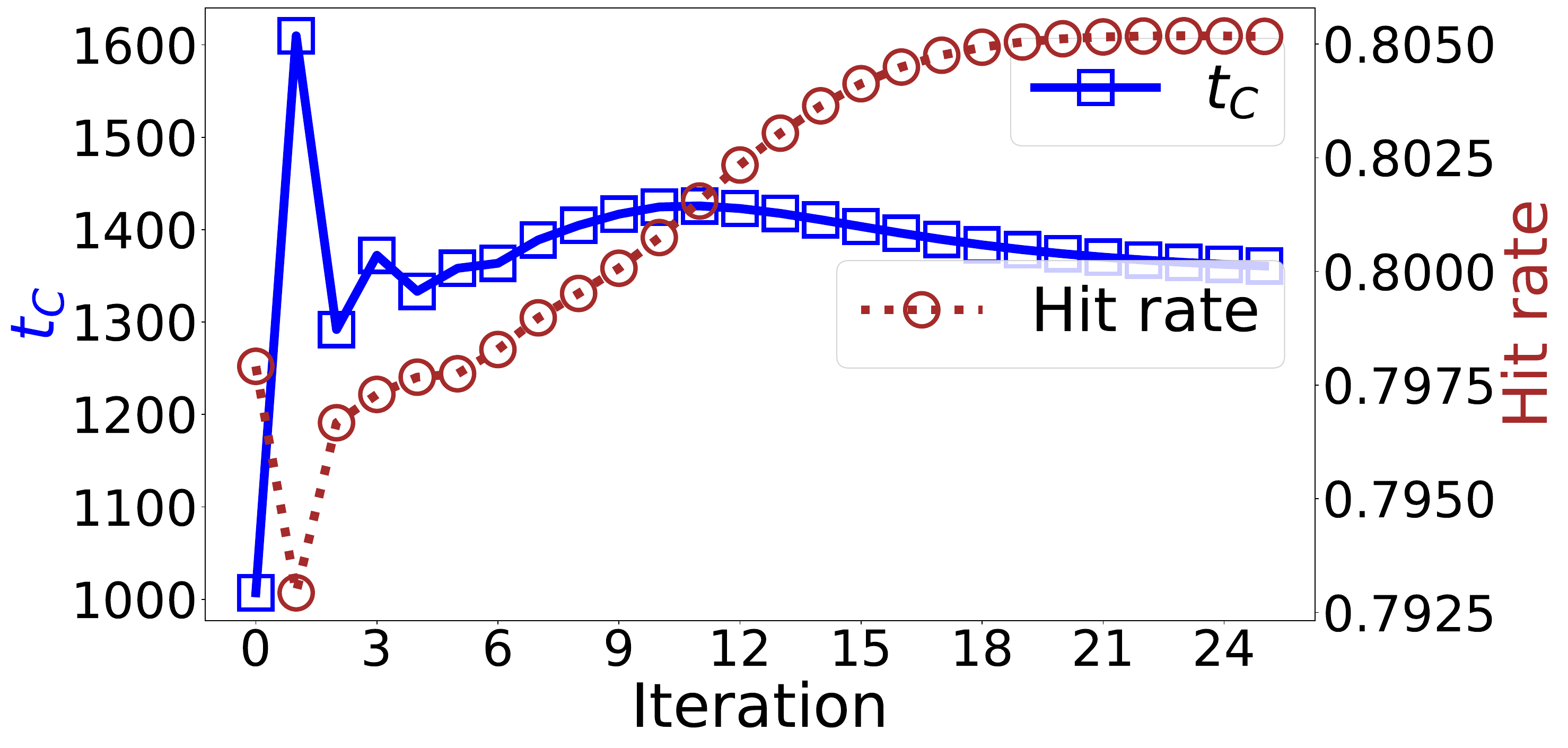}  
\caption{Synthetic trace, $ \alpha = 2.5$, $d=1$, $C=500$}
\label{fig:t_C-C500-d1} 
\end{subfigure}
\end{center}
 \begin{center}
 \begin{subfigure}{0.8\linewidth}
    
    \includegraphics[width=0.99\linewidth]{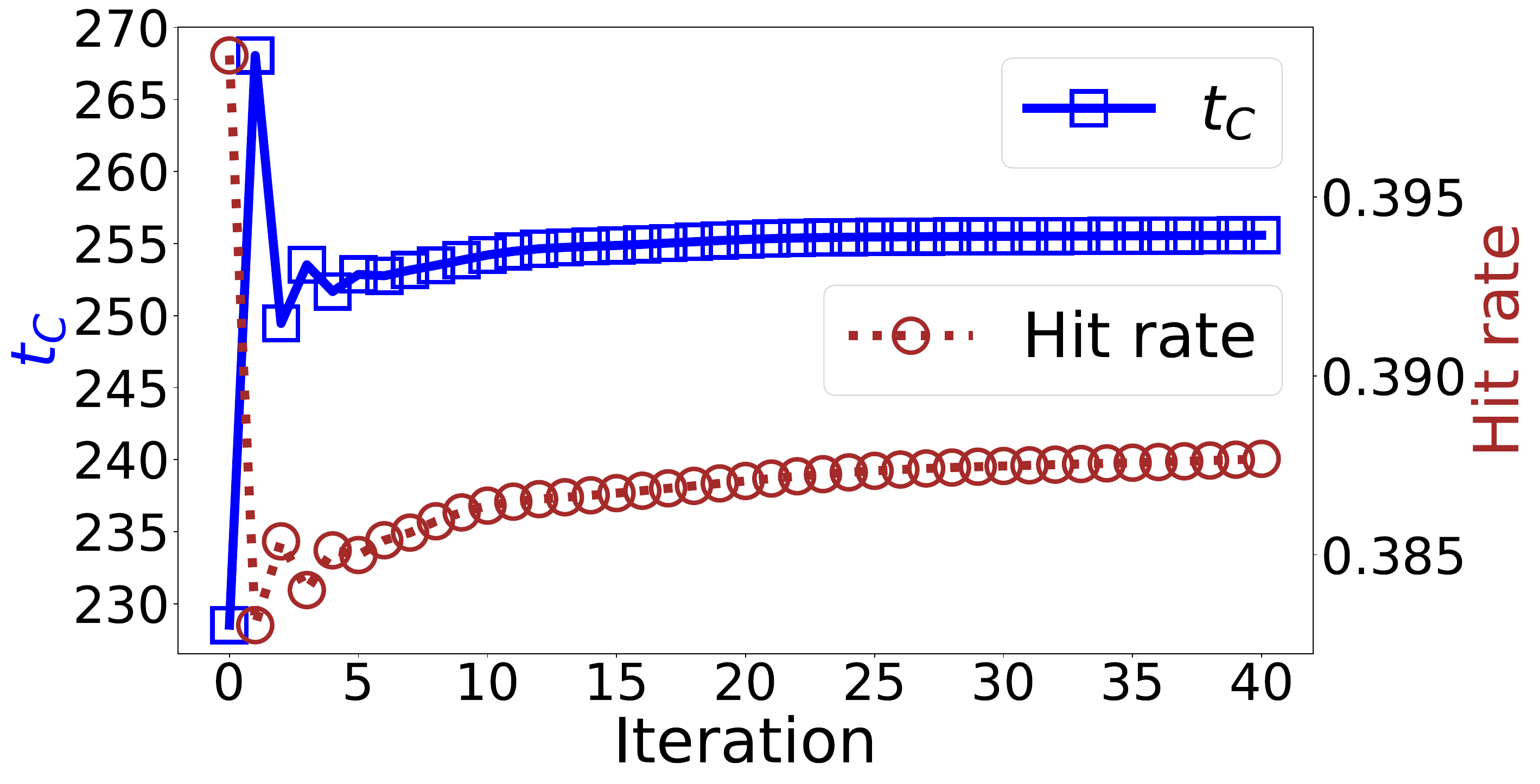}

    \caption{Amazon trace, $d=300$,  $C=200$ } 
    \label{fig:tc-amazon}
    \end{subfigure}
    \end{center}
    \caption{Characteristic time $t_C$ and hit rate in different iterations of Algorithm~\ref{alg:cap} for SIM-LRU.}
    \label{fig:characteristic time t_C}
\end{figure}

\begin{figure*}
    \begin{subfigure}{\linewidth}
  \centering
  \includegraphics[width=0.7\linewidth]{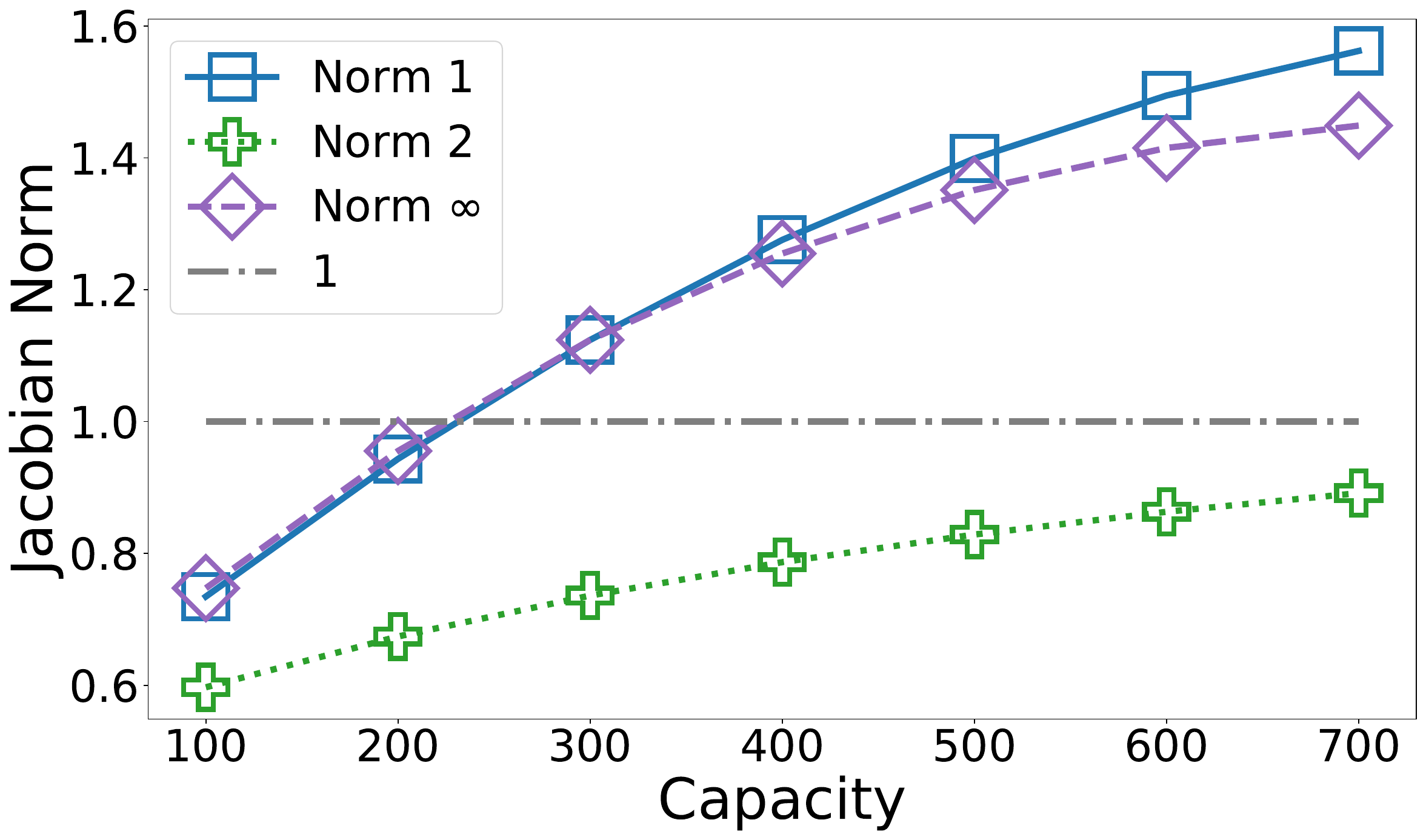}
\caption{Synthetic trace, $\alpha=2.5$,  $d=1$}
\label{fig:Jac-capacity-InitLRU-D1-B05}
\end{subfigure}
\hfill
\begin{subfigure}{\linewidth}
  \centering
  \includegraphics[width=0.7\linewidth, keepaspectratio]{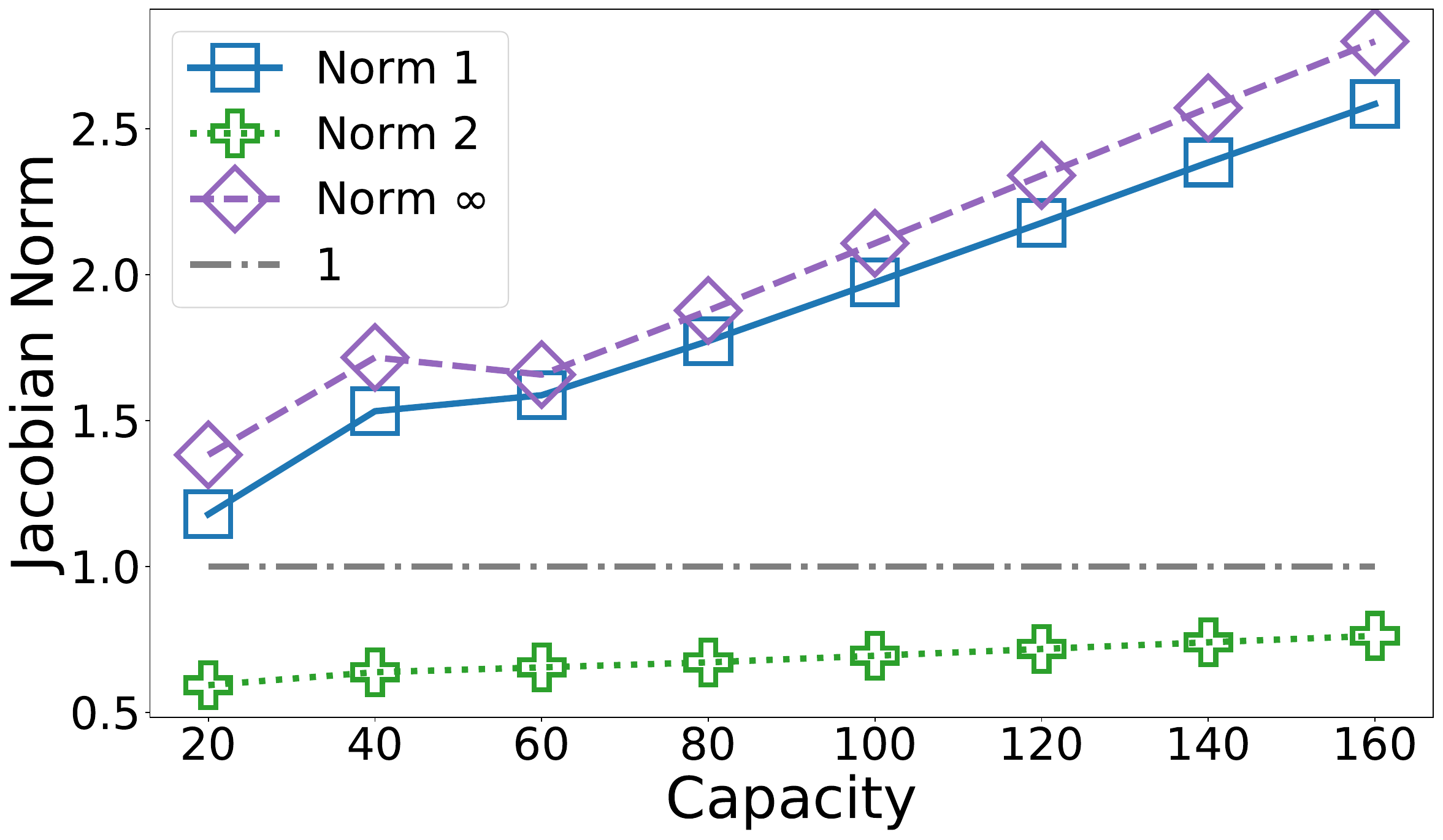}
  \caption{Synthetic trace, $\alpha=1.4$, $d=2$ }
  \label{fig:Jac-capacity-InitLRU-D2-B05}
\end{subfigure}
\hfill
 \begin{subfigure}{\linewidth}
      \centering  
    \includegraphics[width=0.7\linewidth,keepaspectratio]{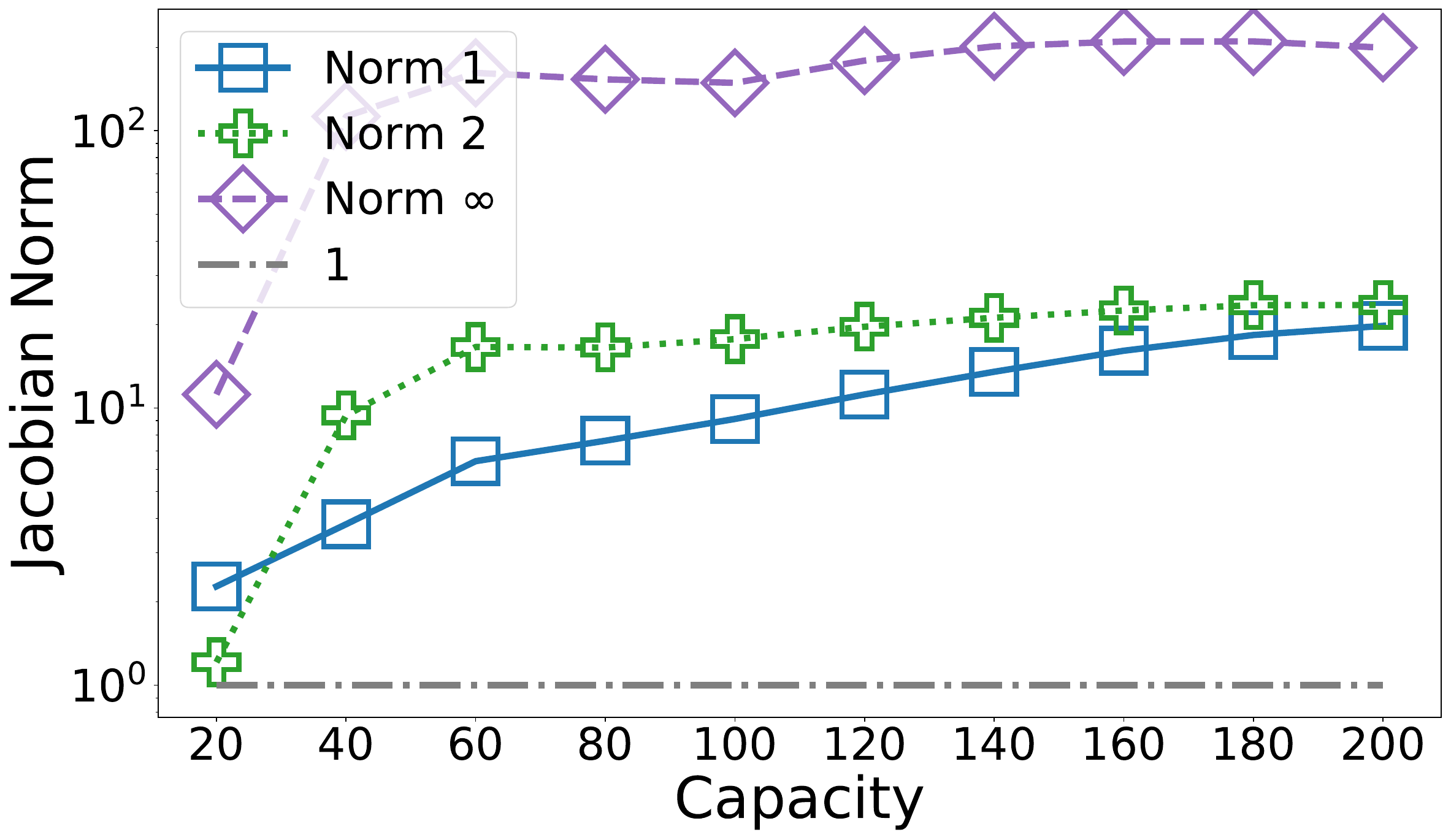}
    \caption{Amazon trace, $d=300$}
    \label{fig:Jac-Norm-amazon}
\end{subfigure}
    \caption{ Norm $\JacobianGB$ versus cache capacity,  $\beta=0.5$.}
    \label{fig:Jac-Norm-Capacity} 
    \vspace{-0.1in}
\end{figure*}

Proposition \ref{prop:jacobian-convergence} provides a sufficient condition for the convergence of Algorithm~\ref{alg:cap} with parameter $\beta$ towards a unique fixed point of $\mathbf{G}$. This condition requires an operator norm of the Jacobian matrix, $\JacobianGB(\mathbf{o})$ associated with the map $\mathbf{G}_{\beta}$, to be strictly less than $1$ for any $\mathbf{o}\in \Delta_C$.  

In Figure \ref{fig:Jac-Norm-Capacity}, we show the spectral norm and norms $1$ and infinity of $\JacobianGB(\mathbf{o}(0))$, where $\mathbf{o}(0)$ is given in line $1$ of Algorithm~\ref{alg:cap}, for the two synthetic traces and the Amazon trace described in Section \ref{ss:Experimental-Setting}. We show in \ref{app:Implementation-Details} details for the computation of $\JacobianGB$. In all the settings, $\beta$ is set to $0.5$. We observe in Figure \ref{fig:Jac-Norm-Capacity} that the norm of the Jacobian matrix $\JacobianGB$ increases with the cache capacity. For the synthetic traces, the spectral norm of $\JacobianGB$ is smaller than $1$ for all cache capacities, whereas in the Amazon trace, for all tested norms, $\norm{\JacobianGB}$ exceeds $1$. Nevertheless, Proposition \ref{prop:jacobian-convergence} provides only sufficient conditions and our results (see e.g. Fig. \ref{fig:tc-amazon}) suggest that our algorithm converges also on the Amazon trace.

\section{Conclusion}
\label{Conclusion}

We proposed a method, namely the RND-TTL approximation, for estimating the hit rate of a popular similarity caching policy, RND-LRU, under IRM. This method tunes the parameters of RND-TTL, a novel similarity cache model we introduced, and uses its hit rate as an estimation for RND-LRU's hit rate. The RND-TTL approximation involves solving a system of fixed point equations via a parameterized iterative algorithm. We studied the convergence of this algorithm and proposed a practical way for choosing its parameter.

Our experimental benchmark shows that our approach outperforms other methods one could use to predict the hit rate. In future work, we envision to investigate analytically the accuracy of our RND-TTL approximation, similarly to what was done in \cite{fricker2012versatile,jiang2018convergence} for classic caches.

\section*{Acknowledgement}
 This project was funded in part by CAPES,  CNPq, FAPERJ Grant JCNE/E-26/203.215/2017, in part by the French
Government through the “Plan de Relance” and “Programme
d’investissements d’avenir,” and by Inria under the exploratory action MAMMALS.

 \bibliographystyle{elsarticle-num-names} 
 \bibliography{references.bib}

 \appendix

\section{TTL based similarity caching algorithm}\label{app:algorithm-TTL}
A pseudo-code for TTL-based similarity caching is given in Algorithm \ref{alg:TTL-SIM2}. 
Note that in practice, we only need to keep track of TTL values greater than zero, corresponding to items that are in cache. However, to simplify presentation, we assume that all TTLs are stored at all time instants. In addition, we say that $m \in S_{\tau}$ if the TTL of $m$ at state $S_{\tau}$ is greater than zero. Finally, the algorithm assumes that the cache statically stores a  tombstone item  whose distance to all items is infinite.  Whenever a request arrives to an empty cache, the tombstone item is returned as the closest item in cache, in line~\ref{line:closet-item-in-cache1-2b}.

\begin{algorithm}
\caption{TTL-based similarity caching}\label{alg:TTL-SIM2} 
\begin{algorithmic}[1]
\STATE \textbf{Input:}
\STATE A sequence of requests $\mathbf{R}_T = (r_1, \ldots, r_T)$ and a sequence of time instants $(\tau_1, \ldots, \tau_T)$ when the requests occurred 
\STATE Threshold similarity $d$ and probabilities $(q_{n}(m))_{n,m \in \mathcal{I}^2}$
\STATE Initial TTL vector $\mathbf{S}_0 = (s_{0,1}, \ldots, s_{0,|\mathcal{I}|})$ where $s_{0,i}$ is the initial value of TTL of item $i$. 
\STATE \textbf{Output:}
\STATE The cache state at each   step $t$, $t=1, \ldots, T$.
\STATE \textbf{Algorithm:}
\FOR{$t = 1$ to $T$}
\STATE  $S_t \leftarrow S_{t-1}$
\STATE For all timers in $S_t$, decrement by $\tau_t - \tau_{t-1}$
 \IF{There are timers that reached value zero or negative}
 \STATE Evict corresponding items from cache
\ENDIF
    \STATE Compute the closest item to $r_t$ whose TTL is greater than zero in $S_{t}$ as $\hat{r}_t = \arg\min_{m \in S_{t}} \text{dis}(r_t, m)$. \label{line:closet-item-in-cache1-2b}
    \STATE Generate a uniform random number $u \in [0,1]$
    \IF{$u < q_{\hat{r}_t}(r_t)$  and $\text{dis}(\hat{r}_t, r_t) < d$} \label{line:condition-approximate hit1-2b}
        \STATE \textbf{Case 1:} Hit, encompassing exact/approximate hits
        \STATE Update $S_t$ to reset  timer of  $\hat{r}_t$ to $s_{0,\hat{r}_t}$ \label{line:action-approximate-hit1-2b}
    \ELSE
        \STATE \textbf{Case 2:} Miss
        \STATE   Update $S_t$ to add  $r_t$ to the cache and reset its timer to $s_{0,r_t}$\label{line:action-miss1-2b}
    \ENDIF
 \ENDFOR
\end{algorithmic}
 \end{algorithm}

\section{Proof of Proposition~\ref{prop:occupancy-RND-TTL}}

\label{app:proofoccupancy}

    To derive the occupancy of an item $n$, we first observe that the instants when item $n$ is evicted from the cache are regeneration points of a renewal process~\cite{ross2014introduction}. A renewal cycle consists of two consecutive time periods: a time period of duration $\Toff$, that starts immediately after item $n$ is evicted from the cache and ends when it re-enters the cache, and a time period of duration $\Ton$, that ends when item $n$ is evicted again from the cache.
From~\cite[Thm. 3.6.1, Example 3.6(A)]{ross1995stochastic}, the occupancy can be computed as: 
\begin{equation}
\label{e:occupancy}
     \newoccupancy_n = \frac{\E{\Ton}}{\E{\Toff} + \E{\Ton}}. 
\end{equation}

\noindent  We have that $\Ton$ verifies:   
\begin{equation}
\label{e:Ton-sum-value}
    \Ton = \sum_{j=1}^F Y_j + T,
\end{equation} 
where $(Y_j)_{j\in \{1, \ldots, F\}}$ are exponentially distributed random variables with parameter $\lambda_n^{r}$ such that $Y_j<T$  for  $j= 1, \ldots, F,$ and $F$ is a geometric random variable. Since we have: 
\begin{align}
   & \E{Y_j\given Y_j< T} = \frac{1}{\lambda_{n}^r} - \frac{T}{\exp(\lambda_{n}^{r} T) - 1}  ,\\
   & \E{F} = \exp(\lambda_{n}^{r} T)- 1,
\end{align}
we conclude from Wald's identity and~\eqref{e:Ton-sum-value} that:
\begin{equation}
    \label{e:ton-expectation}
        \E{T_n^{\mathrm{On}}} =  \frac{\e^{\lambda_{n}^{r} T}-1}{\lambda_{n}^{r}}.
\end{equation}
\noindent By combining \eqref{e:ton-expectation} and \eqref{e:occupancy} and observing that $\E{\Toff} = 1/ \lambda_{n}^i$, we get our result.

\section{Generalized Poisson Arrivals See Time Averages (PASTA) property}
\label{app:PASTA}
We derive in this appendix a generalization of the PASTA property that will be used in the proofs of Propositions \ref{prop:hit-rate-RND-TTL}--\ref{prop:refresh-rate-RND-LRU}. We will construct a stochastic process $Q$ that is not Poisson but whose jumps coincide with one of many Poisson processes (to be defined). In our generalization, we prove that the arrivals of this process $Q$ see time averages.

Let $(M(t))_{t\geq 0}$ be a stochastic process with finite state space $\Omega$. We assume that for every $S\in \Omega$,  $\lim_{t\to +\infty} \Proba{M(t)=S}$ exists and we denote it as $\pi_S$. We also assume that $ \lim_{T\to +\infty} \frac{1}{T} \int_{0}^T \mathds{1}(M(u)=S) du$ exists and is equal to $\pi_S$. 

For every $S\in \Omega$, we define a Poisson process $P_S$ with rate $\lambda_S$. For any $S$, the processes $\{ P_S(t+u) - P_S(t), \; u\geq 0 \}$ and $\{M(v), 0\leq v\leq t\}$ are assumed to be independent. This assumption is known as the lack of anticipation assumption \cite{wolff1982poisson,van1988conditional}. We construct a stochastic process $Q$ such that its jumps coincide with the jumps of $P_S$ when $M$ is in state $S$. 

Let $B$ be a subset of $\Omega$. We define the stochastic process $Y_B$ as the number of times $Q$ finds $M$ in $B$ normalized by $t$; we can write
\begin{align}
     Y_{B}(t) &\triangleq  \frac{1}{t}  \int_{0}^{t} \mathds{1}\left( M(u) \in  B \right) \, dQ(u) \; . 
\end{align}

\begin{theorem}[Generalized PASTA] 
\label{th:PASTA-MMPP}
  For every subset $B$ of $\Omega$, $Y_B(t)$ converges to $\sum_{S\in B} \lambda_S \pi_S$, as $t$ goes to $+\infty$, with probability $1$. 
\end{theorem}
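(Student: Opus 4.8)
The plan is to reduce the statement to the classical PASTA theorem of Wolff by a conditioning argument that treats the process $M$ as an exogenous ``environment'' and the jump times of $Q$ as a state-modulated point process whose intensity is $\lambda_{M(t)}$. First I would decompose the integral defining $Y_B(t)$ over the finitely many states $S \in B$: writing $\mathds{1}(M(u)\in B) = \sum_{S\in B}\mathds{1}(M(u)=S)$, we get
\begin{align}
Y_B(t) = \sum_{S\in B} \frac{1}{t}\int_0^t \mathds{1}(M(u)=S)\, dQ(u).
\end{align}
Because $B$ is finite, it suffices to prove the result for a single state $S$, i.e.\ to show that $\frac{1}{t}\int_0^t \mathds{1}(M(u)=S)\, dQ(u) \to \lambda_S \pi_S$ almost surely, and then sum.

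For the single-state claim, the key observation is that the jumps of $Q$ occurring while $M$ is in state $S$ are exactly the jumps of the Poisson process $P_S$ that fall in the (random) set $\{u : M(u)=S\}$; this is precisely the construction of $Q$. So $\int_0^t \mathds{1}(M(u)=S)\,dQ(u) = \int_0^t \mathds{1}(M(u)=S)\,dP_S(u)$. Now I would invoke the lack-of-anticipation assumption: the increments of $P_S$ after time $t$ are independent of $\{M(v): v\le t\}$, which is exactly the hypothesis under which Wolff's PASTA argument applies to the ``time-average vs.\ event-average'' comparison for the indicator process $\mathds{1}(M(\cdot)=S)$ sampled at the points of $P_S$. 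Concretely, one can apply the standard PASTA result (e.g.\ via the martingale/Papangelou argument in \cite{wolff1982poisson,van1988conditional}) to conclude
\begin{align}
\lim_{t\to\infty} \frac{\int_0^t \mathds{1}(M(u)=S)\, dP_S(u)}{P_S(t)} = \lim_{t\to\infty}\frac{1}{t}\int_0^t \mathds{1}(M(u)=S)\, du = \pi_S
\end{align}
almost surely, where the second equality is the assumed ergodic/time-average hypothesis on $M$. Combining this with the strong law of large numbers for the Poisson process, $P_S(t)/t \to \lambda_S$ a.s., gives $\frac{1}{t}\int_0^t \mathds{1}(M(u)=S)\,dP_S(u) \to \lambda_S\pi_S$ a.s. Summing over $S\in B$ yields $Y_B(t)\to \sum_{S\in B}\lambda_S\pi_S$.

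The main obstacle is the rigorous justification that classical PASTA applies here: the textbook statement assumes a single Poisson stream with a fixed ``lack of anticipation'' property, whereas our setting has one Poisson process $P_S$ per state and the relevant indicator $\mathds{1}(M(\cdot)=S)$ is correlated across states through $M$. I would address this by noting that, for the fixed state $S$, only $P_S$ and the fixed indicator $\mathds{1}(M(\cdot)=S)$ enter the argument — the other Poisson processes are irrelevant for that term — so the hypotheses of the classical theorem (stationarity is replaced by the assumed existence of the time-average limit $\pi_S$, and lack of anticipation holds by assumption) are met termwise. A secondary technical point is ensuring that almost-sure convergence of each of the finitely many terms yields almost-sure convergence of the sum, which is immediate since a finite union of null sets is null. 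I would also remark that, although $Q$ itself is not Poisson, no property of $Q$ beyond its construction from the $P_S$'s is needed, so the argument never requires a PASTA-type statement for $Q$ directly.
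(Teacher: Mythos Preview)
Your proposal is correct and follows essentially the same route as the paper: decompose $Y_B$ over the finitely many states in $B$, use the key identity $\mathds{1}(M(u)=S)\,dQ(u)=\mathds{1}(M(u)=S)\,dP_S(u)$ from the construction of $Q$, apply Wolff's PASTA (under the lack-of-anticipation assumption and the assumed time-average limit $\pi_S$) to each term, and sum. The only cosmetic difference is that you spell out the PASTA conclusion as a ratio converging to $\pi_S$ combined with $P_S(t)/t\to\lambda_S$, whereas the paper collapses this into the single statement $X_S(t)\to\lambda_S\pi_S$.
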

\begin{proof}
    For every state $S\in \Omega$, we define the following two stochastic processes: 
    \begin{align}
      &V_S(t) \triangleq  \frac{1}{t} \int_{0}^{t} \mathds{1}\left( M(u) = S \right) \, du \; , \\ 
      &X_S(t) \triangleq \frac{1}{t}  \int_{0}^{t} \mathds{1}\left( M(u) = S \right) \, dQ(u) \; .
    \end{align}
    
\noindent $V_S(t)$ is the fraction of time spent by $M$ in state $S$ and $X_S(t)$ is the number of times $Q$ finds $M$ in state $S$ normalized by $t$. It is easy to check that: 
    \begin{align}
        \mathds{1}\left( M(u) = S \right) \, dQ(u) = \mathds{1}\left( M(u) = S \right) \, dP_S(u),
    \end{align}
since the jumps of $Q$ coincides with the jumps of $P_S$ when $M$ is in state $S$. The PASTA property \cite{wolff1982poisson} allows us to deduce that with probability $1$: 
\begin{align}
   \lim_{t\to +\infty} X_S(t)   = \lambda_S \pi_S~. 
\end{align}
Observing that $Y_B(t) = \sum_{S\in B} X_S(t)$, we deduce our result.  
\end{proof}
A similar result is proven in \cite[Sect. 3.3]{rosenkrantz1992some} for Markov-modulated Poisson processes. 

\begin{corollary}[Generalized PASTA]
\label{cor:PASTA-MMPP}
For any partition $(B_i)_{i\in \{1,\ldots ,l\} }$ of $B$ such that $\lambda_S = \lambda_i$ for any $S$ in $B_i$, we have that 
\begin{align}
    Y_B(t) \xrightarrow{ t\to+\infty} \sum_{i=1}^{l} \lambda_i \sum_{S\in B_i} \pi_S,  \quad \text{w.p. } 1. 
\end{align}
\end{corollary}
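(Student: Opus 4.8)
The plan is to obtain the corollary as an immediate consequence of Theorem~\ref{th:PASTA-MMPP}: the partition hypothesis is used only to regroup the terms of the limit already produced there, so no new probabilistic argument is needed.

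First I would apply Theorem~\ref{th:PASTA-MMPP} to the set $B$ itself, which yields that, with probability $1$,
\[
    \lim_{t\to +\infty} Y_B(t) = \sum_{S\in B} \lambda_S\, \pi_S .
\]
Since $(B_i)_{i\in\{1,\ldots,l\}}$ is a partition of $B$, the finite index set $B$ is the disjoint union of the $B_i$'s, so the sum over $B$ splits as $\sum_{S\in B}\lambda_S\pi_S = \sum_{i=1}^{l}\sum_{S\in B_i}\lambda_S\pi_S$. Finally, invoking the hypothesis $\lambda_S=\lambda_i$ for all $S\in B_i$, I would pull $\lambda_i$ out of each inner sum to get $\sum_{i=1}^{l}\lambda_i\sum_{S\in B_i}\pi_S$, which is exactly the claimed limit; the almost-sure statement is inherited unchanged from the theorem because the limiting value is merely being rewritten.

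There is essentially no obstacle here: all the sums are finite because $B\subseteq\Omega$ and $\Omega$ is finite, so the rearrangements are unconditionally valid, and the ``with probability $1$'' qualifier carries over verbatim. The only point worth emphasizing is the role the corollary plays downstream — it is the form in which the generalized PASTA property is actually invoked in the proofs of Propositions~\ref{prop:entry-rate-RND-LRU} and~\ref{prop:refresh-rate-RND-LRU}, where the relevant subsets of states (such as $B_n$ and $B_{n,m}$) are naturally grouped according to which item's request rate drives the corresponding jump process.
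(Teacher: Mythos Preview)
Your proof is correct and is precisely the intended argument: the paper states the corollary without proof, treating it as an immediate regrouping of the limit in Theorem~\ref{th:PASTA-MMPP}, which is exactly what you do.
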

Observe that when $B=\Omega$, Theorem~\ref{th:PASTA-MMPP} and Corollary~\ref{cor:PASTA-MMPP} provide the rate of the process $Q$.

\section{Proof of Proposition \ref{prop:hit-rate-RND-TTL}}
\label{app:hit-rate-RND-TTL}
The proof uses the generalized PASTA property derived in~\ref{app:PASTA}, and we will redefine the relevant processes to serve our purpose. 

The stochastic process $M(t)$ refers to the set of cached items in the RND-TTL cache at time~$t$. For a given item $n$, we partition $M$'s state space $\Omega$ into $B_1=\{ S\in \Omega: \; n\in S \}$ and $B_2=\{ S\in \Omega: \; n\notin S \}$. For any cache state $S$ in $B_1$ we define $P_S$ as the request process, that is Poisson with rate $\lambda_n$ by Assumption~\ref{assum-IRM-request-process}. For any cache state $S$ in $B_2$ we define $P_S$ as the request process for $n$ thinned with probability $1-p_n^i$. We define now a stochastic process $Q_n$ whose jumps coincide with the jumps of $P_S$ when $M$ is in $S$. Corollary~\ref{cor:PASTA-MMPP} provides the rate of $Q_n$. So we only need to show that $Q_n(t)$ is nothing but the number of hits for item $n$ until time $t$ to find the hit rate and subsequently the hit probability.

In RND-TTL, whenever an item $n$ is in the cache, an exact hit occurs upon a request for $n$. In other words, if the cache state $M$ is in $B_1$, the number of hits for $n$ grows as its request process (and those added hits are all exact). Conversely, when $n$ is not in the cache, only an approximate hit may occur, with probability $1-p_n^i$. In other words, if the cache state $M$ is in $B_2$, the number of hits for $n$ grows as its request process thinned with probability $1-p_n^i$ (and those added hits are all approximate). 
It is clear then that the number of hits for $n$ until time $t$ is the process $Q_n(t)$ defined earlier. 

By Corollary~\ref{cor:PASTA-MMPP}, the rate of $Q_n$ (the hit rate) is $\lambda_n \cdot o_n + \lambda_n(1-p_n^i) (1-o_n)$, where $o_n=\sum_{S\in B_1} \pi_S$ and $1-o_n=\sum_{S\in B_2} \pi_S$. The hit probability readily follows concluding the proof.

\section{Proof of Proposition \ref{prop:entry-rate-RND-LRU}}
\label{app:proof-entry-rate-RND-LRU}

The proof is very similar to that for Proposition \ref{prop:hit-rate-RND-TTL} in~\ref{app:hit-rate-RND-TTL}. 

We redefine the stochastic process $M(t)$ to be the state of the RND-LRU cache at time~$t$.
For a given item $n$, we consider a subset $B$ of $M$'s state space $\Omega$ with all states that do not have $n$, and partition it into $B_{n}=\{ S\in \Omega: \; S\cap \Nclosed{n}=\emptyset \}$ and $B_{n,m}=\{ S\in \Omega: \; m\in S , S \cap \Nclosedin{m}{n}=\emptyset\}$ for every $m\in \Nopen{n}$. For any cache state $S$ in $B_{n}$, we define $P_S$ as the request process for $n$ that is Poisson with rate $\lambda_n$ by Assumption \ref{assum-IRM-request-process}. For any cache state $S$ in $B_{n,m}$ with $m\in \Nopen{n}$, we define $P_S$ as the request process for $n$ thinned with probability $1-q_{m}(n)$. The stochastic process $Q_n$ keeps the same definition: its jumps coincide with the jumps of (the redefined) $P_S$ when $M$ is in $S$. We stress that there are no jumps of $Q_n$ when $M$ is not in $B$ (this corresponds to defining a Poisson process with rate 0 for state $S\notin B$). Corollary~\ref{cor:PASTA-MMPP} provides then the rate of $Q_n$. Therefore we only need to show that $Q_n(t)$ is the number of insertions for item $n$ until time $t$ to find the insertion rate. 

In RND-LRU, when neither $n$ nor any of its neighbors are in the cache, a request for $n$ will correspond to a miss and thereby $n$ is inserted in the cache. In other words, when $M$ is in $B_{n}$, the number of insertions of $n$ grows as its request process. However, when $n$ is not in the cache but a neighbor is, $n$ may be inserted in the cache with some probability. Specifically, if $m$ is the closest neighbor of $n$ in the cache, $n$ will be inserted upon a request with probability $1-q_{m}(n)$. That is, when $M$ is in $B_{n,m}$, the number of insertions of $n$ grows as its request process thinned with probability $1-q_{m}(n)$.   

It is clear then that the number of insertions for $n$ until time $t$ is the process $Q_n(t)$ defined earlier. By Corollary~\ref{cor:PASTA-MMPP}, we find the following expression for the rate of $Q_n$ (the insertion rate)
    \begin{align}
        \tilde{\lambda}_{n}^i = \lambda_n \left(\Proba{B_{n}} + \sum_{m\in \Nopen{n}} (1-q_{m}(n)) \Proba{B_{n,m}}\right)~.
    \end{align}



\section{Proof of Proposition \ref{prop:refresh-rate-RND-LRU}}
\label{app:proof-refresh-rate-RND-LRU}

As in~\ref{app:proof-entry-rate-RND-LRU}, the process $M(t)$ refers to the state of the RND-LRU cache at time $t$. For a given item $n$, and for every $m\in \Nclosed{n}$, we define the set $B_{m,n}=\{ S\in \Omega: \; n\in S, S \cap \Nclosedin{n}{m} =\emptyset\}$.

For any cache state $S$ in $\Omega$, we consider the aggregation of thinned request processes for items in the subset $\{m\in \Nclosed{n}, S\in B_{m,n}\}$. For every $m$ in the aforementioned subset, the thinning probability of $m$'s request process is $q_{n}(m)$. Under Assumption \ref{assum-IRM-request-process}, this aggregated process is Poisson and we redefine $P_S$ to represent it; the rate of $P_S$ is then given by 

\begin{equation}
\label{e:rate-PS}
    \lambda_S = \sum_{m\in \Nclosed{n}} q_{n}(m) \lambda_m \mathds{1}(S\in B_{m,n})  .
\end{equation}
As in~\ref{app:hit-rate-RND-TTL} and~\ref{app:proof-entry-rate-RND-LRU}, the jumps of the stochastic process $Q_n$ coincide with the jumps of (the redefined) $P_S$ when $M$ is in $S$. By applying Theorem~\ref{th:PASTA-MMPP}, the rate of $Q_n$ will be readily available. It suffices then to show that $Q_n(t)$ is the number of times $n$'s timer is refreshed until time $t$ to find the refresh rate $\tilde{\lambda}_{n}^r$.

In RND-LRU, whenever an item $n$ is in the cache, a hit on $n$ occurs upon a request for $m$, with probability $q_n(m)$, if $n$ is the item in cache closest to $m$, and as a result $n$'s timer is refreshed. This holds for any neighbor $m$ of item $n$ (including $n$). In other words, for any neighbor $m$ such that the cache state $M$ is in $B_{m,n}$, $m$'s request process thinned with probability $q_n(m)$ contributes to the growth of the number of timer refreshes for $n$. No other process contributes to the refresh counting process, thereby, when $M$ is in state $S$, the number of times $n$'s timer is refreshed grows as the Poisson process $P_S$ with rate $\lambda_S$ given in \eqref{e:rate-PS}.


It follows then that the refresh counting process for $n$ is nothing but the process $Q_n$ defined earlier. We apply Theorem~\ref{cor:PASTA-MMPP} (with $B=\Omega$) to derive the rate of $Q_n$ (the refresh rate). We use \eqref{e:rate-PS} to write

\begin{subequations}
\begin{eqnarray}
\tilde{\lambda}_{n}^r &=& \sum_{S\in \Omega} \sum_{m\in \Nclosed{n}} q_{n}(m) \lambda_m \mathds{1}(S\in B_{m,n}) \, \pi_S  \\
    &=& \sum_{m\in \Nclosed{n}} q_{n}(m) \lambda_m  \sum_{S\in \Omega}  \mathds{1}(S\in B_{m,n})  \pi_S \\
    &=& \sum_{m\in \Nclosed{n}} q_{n}(m) \lambda_m \sum_{S\in B_{m,n}} \pi_S \\
    &=& \sum_{m\in \Nclosed{n}} q_{n}(m) \lambda_m \,\Proba{B_{m,n}}.
\end{eqnarray}
\end{subequations}

\section{Proof of Lemma \ref{lem:tc-function}} \label{app: tC-Function}

    Let $\mathbf{o}\in \Delta_C$. Observe that $F(\mathbf{o},0)=-C<0$ and that $F(\mathbf{o},\cdot)$ is an increasing and continuous function in $\mathbb{R}^{+}$ since it is the sum of increasing and continuous functions in $\mathbb{R}^{+}$. Therefore, to show the existence of a root $T_0$ for $F(\mathbf{o},\cdot)$, it suffices to show that $\lim_{T\to +\infty} F(\mathbf{o}, T)>0$, thanks to the intermediate value theorem.

\subsection{Proof idea}
    Next, we show that $\lim_{T\to +\infty} F(\mathbf{o}, T)>0$. To this aim, we note under the cover  condition~\eqref{e:condition-existence-fixed-point} at least $C+1$ items are required to cover all  the catalog. Therefore, given  $\mathbf{o} \in \Delta_C$,  
    we have  a strictly positive insertion rate $E_n(\mathbf{o}) > 0$ for at least $C+1$ items.  
Now, let's consider a corresponding alternative TTL cache in which at least $C+1$ items have  a strictly positive arrival rate and let $T\to \infty$.  In this alternative   cache we have  at least  $C+1$  statically stored items in steady state,  implying that    $\lim_{T\to +\infty} F(\mathbf{o}, T)>0$ in the original system.  The details follow below.

 \subsection{Proof details}   
Let $\mathcal{M}_{\mathbf{o}}$ be the set of items with occupancy equal to $1$ in $\mathbf{o}$. More specifically,~$\mathcal{M}_{\mathbf{o}}=\{ n\in \mathcal{I}:\; o_n =1  \}$. We show that $\lim_{T\to +\infty} F(\mathbf{o}, T)>0$ as follows

     \begin{align} \label{e:Lemma-TC-2} 
        \mathbf{o} \in \Delta_C
&\implies \# \mathcal{M}_{\mathbf{o}} \leq C \\ \label{e:Lemma-TC-3} 
&\implies \# \bigcup_{m \in \mathcal{M}_{\mathbf{o}}} \Nopen{m} < N-C \\ \label{e:Lemma-TC-4}
&\implies \#  \left\{ n\in \mathcal{I}: \; \prod_{m\in \Nopen{n}} (1-o_m) = 0 \right\} <N-C \\  \label{e:Lemma-TC-5} 
&\implies \#  \left\{n\in \mathcal{I}: \; \prod_{m\in \Nopen{n}} (1-o_m) > 0 \right\}> C \\  \label{e:Lemma-TC-6} 
&\implies \#  \left\{ n\in \mathcal{I}: \; E_n(\mathbf{o}) > 0 \right\} > C \\   \label{e:Lemma-TC-7} 
&\implies \#  \left\{ n\in \mathcal{I}: \; \lim_{T\to +\infty} g\left(R_n(\mathbf{o}), E_n(\mathbf{o}), T\right)  =1 \right\} > C \\ \label{e:Lemma-TC-8} 
&\implies \lim_{T\to +\infty} F(\mathbf{o}, T) > 0
    \end{align}
The transition from~\eqref{e:Lemma-TC-2}~to~\eqref{e:Lemma-TC-3} is by applying the cover condition \eqref{e:condition-existence-fixed-point} for~$\mathcal{M_{\mathbf{o}}}$. 

Noting  that~$n\in \Nopen{m} \iff m\in \Nopen{n}$, the passage from~\eqref{e:Lemma-TC-3} to~\eqref{e:Lemma-TC-4} follows from the fact that, given $n$, $$\prod_{m\in \Nopen{n}} (1-o_m) = 0 \iff \exists m\in \mathcal{M}_{\mathbf{o}
}: \; m\in \Nopen{n} \iff \exists m\in \mathcal{M}_{\mathbf{o}
}: \; n\in \Nopen{m}.$$ 

The step from~\eqref{e:Lemma-TC-5} to~\eqref{e:Lemma-TC-6} is by definition of the function $\fentry$ (see \eqref{e:lambda-entry-independence}). We find \eqref{e:Lemma-TC-7} from \eqref{e:Lemma-TC-6} since we have
    \begin{align}\label{e:Lemma-TC-1} 
        \lim_{T\to +\infty} g\left(R_n(\mathbf{o}), E_n(\mathbf{o}), T\right) = \begin{cases}
                0 &\text{ if }  E_n(\mathbf{o}) = 0, \\
                1 &\text{ otherwise},
        \end{cases}
    \end{align}
by definition of $g$ (see \eqref{e:occupancy-g-function}). Finally, the transition from \eqref{e:Lemma-TC-7} to \eqref{e:Lemma-TC-8} follows from $\lim_{T\to +\infty} F(\mathbf{o}, T)= \sum_{n\in \mathcal{I}} \lim_{T\to +\infty} g\left(R_n(\mathbf{o}), E_n(\mathbf{o}), T\right)-C$. 

We deduce the existence of a root of $F(\mathbf{o}, \cdot)$, $T_0$. Observe that $F(\mathbf{o}, \cdot)$ is strictly increasing because there exist $m\in \mathcal{I}$ such that $g\left(R_m(\mathbf{o}), E_m(\mathbf{o}), \cdot\right)$ is strictly increasing (see \eqref{e:Lemma-TC-1} and~\eqref{e:occupancy-g-function}). It follows that $T_0$ is unique, which concludes the proof.

\section{Proof of Lemma \ref{lem:tc-class-C1}}\label{app:tC-class-C1}

Let $j\in \mathcal{I}$ and $\mathbf{o}\in \Delta_C$. We first show that $\ftc$ is differentiable. For this aim, we prove the existence of the partial derivatives of $\ftc$, i.e., the existence of 
\begin{align}
\lim_{\epsilon\to 0} \frac{\ftc(\mathbf{o} + \epsilon \mathbf{e}_j) - \ftc(\mathbf{o}) }{\epsilon},
\end{align}
where $\mathbf{e}_j$ is the canonical vector in $\mathbb{R}^{N}$ that has $1$ in the $j$-th component and $0$ in all the other components.

Let $A= (\mathbf{o}, \ftc(\mathbf{o}))$ and  $B_j= (\mathbf{o}+ \epsilon \mathbf{e}_j , \ftc(\mathbf{o}+\epsilon \mathbf{e}_j))$. Observe that $F$ is continuously differentiable over $\Delta_C \times \mathbb{R}^{+}$ since it is the sum of the continuously differentiable functions. The \textit{Mean Value Theorem} for $F$ shows the existence of $L_j$ in the segment between $A$ and $B_j$ ($L_j\in \{  (1-\lambda) A + \lambda B_j: \; \lambda \in [0,1]  \}$) such that
\begin{align}
    F(B_j)- F(A) =  \nabla F(L_j) \cdot (B_j-A).
\end{align}
We have that $F(A)=F(B_j)=0$ by definition of $\ftc$ (see \eqref{e:def-tC}). It follows that $\nabla F(L_j) \cdot (B_j-A) = 0 $ and we have 

\begin{align}
    \epsilon \frac{\partial F}{\partial o_j}(L_j) +\left(\ftc(\mathbf{o} + \epsilon \mathbf{e}_j) - \ftc(\mathbf{o})\right) \cdot  \frac{\partial F}{\partial T} (L_j)=0. 
\end{align}
Recall that $F(\mathbf{x},\cdot)$ is continuous and strictly increasing (see \ref{app: tC-Function}), i.e., $\frac{\partial F}{\partial T}>0$. Therefore
\begin{align}
    \frac{\ftc(\mathbf{o} + \epsilon \mathbf{e}_j) - \ftc(\mathbf{o})}{\epsilon} = - \frac{\partial F}{\partial o_j}(L_j) \cdot \left( \frac{\partial F}{\partial T} (L_j)\right)^{-1}. 
\end{align}
When $\epsilon \to 0$, $B_j$ converges to $A$. Thereby, the partial derivative of $\ftc$ with respect to $o_j$ exists and is continuous and is expressed in \eqref{e:gradient-tC}, which completes the proof.  

We stress that \cite{de2012implicit}[Th.1] provides a similar proof. 

\section{Proof of Proposition~\ref{prop:jacobian-convergence}}\label{app:proof-jac-GB}


The Jacobian matrix of a function $f: 
 \mathbb{R}^{n} \to \mathbb{R}^{m}$ is a rectangular matrix with $m$ rows and $n$ columns. The element in the $i$-th row and $j$-th column  represents the partial derivative of the $i$-th component of   function $f$ with respect to the $j$-th variable.
 

Let $\mathbf{o}_0\in \Delta_C$ and $P_i =(R_i(\mathbf{o}_0),  E_i (\mathbf{o}_0),   t_C(\mathbf{o}_0))$. We use the chain rule on $\mathbf{G}$ to compute the partial derivative of its $i$-th component, $G_i$, with respect to the $j$-th variable. 

\begin{align}
\frac{\partial G_i}{\partial o_j}(\mathbf{o}_0) =
\frac{\partial g}{\partial x_1}(P_i) \frac{\partial R_i}{\partial o_j}(\mathbf{o}_0) 
+ \frac{\partial g}{\partial x_2}(P_i) \frac{\partial E_i}{\partial o_j}(\mathbf{o}_0)  
+ \frac{\partial g}{\partial x_3}(P_i) \frac{\partial t_C}{\partial o_j}(\mathbf{o}_0).
\end{align} 
Observe that 

\begin{align}
\left(\frac{\partial g}{\partial x_1}(P_i) \frac{\partial R_i}{\partial o_j}(\mathbf{o}_0)\right)_{1\leq i\leq n \atop 1\leq j\leq n }  = \text{Diag}\left(\partial_1 \foccupancy \right)&\cdot \mathcal{J}_{\frefresh}(\mathbf{o}_0),  
\end{align}
and 
\begin{align}
\left(\frac{\partial g}{\partial x_2}(P_i) \frac{\partial E_i}{\partial o_j}(\mathbf{o}_0)\right)_{1\leq i\leq n \atop 1\leq j\leq n } =  \text{Diag}\left(\partial_2 \foccupancy \right) \cdot \mathcal{J}_{\fentry}(\mathbf{o}_0).
\end{align}
To prove \eqref{e:Jacobian-G-Beta}, it suffices to prove that
\begin{align} \label{e:lem-Jac-GB-0}
     \left( \frac{\partial g}{\partial x_3}(P_i) \frac{\partial t_C}{\partial o_j}(\mathbf{o}_0)\right)_{1\leq i\leq n \atop 1\leq j\leq n } =- \frac{\partial_{3} \foccupancy^\intercal }{\sum_{n\in \mathcal{I}} \partial_{3} \foccupancy_n}\cdot \left( \partial_1 \foccupancy \cdot \mathcal{J}_{\frefresh}(\mathbf{o}_0) + \partial_2 \foccupancy\cdot \mathcal{J}_{\fentry}(\mathbf{o}_0)  \right).
\end{align}
To this aim, we compute the partial derivatives of $t_C$. Let $Q = (\mathbf{o}_0, t_C(\mathbf{o}_0))$. Using Lemma \ref{lem:tc-class-C1}, we have 
\begin{align}\label{e:lem-Jac-GB-1}
    \frac{\partial t_C}{\partial o_j}(\mathbf{o}_0) = -  \frac{\partial F}{\partial o_j}(Q) \cdot \left(\frac{\partial F}{\partial T}(Q) \right)^{-1}. 
\end{align}
From the definition of $F$ (see \eqref{e:F-capacity-function}), we get 
\begin{align}\label{e:lem-Jac-GB-2}
    \frac{\partial F}{\partial T}(Q)  = \sum_{n\in \mathcal{I}} \frac{\partial g}{\partial x_3}(P_n),
\end{align}
and
\begin{align}\label{e:lem-Jac-GB-3}
    \frac{\partial F}{\partial o_j}(Q)  = \sum_{n\in \mathcal{I}} \frac{\partial g}{\partial x_1}(P_n) \frac{\partial R_n}{\partial o_j}(\mathbf{o}_0) + \sum_{n\in \mathcal{I}} \frac{\partial g}{\partial x_2}(P_n) \frac{\partial E_n}{\partial o_j}(\mathbf{o}_0).
\end{align}
Plugging \eqref{e:lem-Jac-GB-2} and \eqref{e:lem-Jac-GB-3} in \eqref{e:lem-Jac-GB-1} we deduce \eqref{e:lem-Jac-GB-0}. This concludes the proof.


\section{Proof of Proposition~\ref{lem:effect-beta-spectral-norm}} 

\label{app:properties}

Let $\beta\in [0,1]$ and $\mathbf{o} \in \Delta_C$. In order to show that $(a,b)\subset Y(\mathbf{o})$,  where $a$ is given in \eqref{e:lem-Yo-a} and $b$ is given in \eqref{e:lem-Yo-b}, we first show that $\norm{\JacobianGB(\mathbf{o})}_2^{2}< (1+ \gamma ) \cdot \beta^{2}  -  ( 2 \gamma - \eta )  \cdot \beta + \gamma$. Next we observe that if \eqref{e:condition-subsest-Y} is verified, the aforementioned quadratic function in $\beta$ is smaller than $1$ if and only if $\beta$ lies within the interval $(a,b)$, to conclude that $(a,b)\subset Y(\mathbf{o})$.

We show now that $\norm{\JacobianGB(\mathbf{o})}_2^{2}< (1+ \gamma ) \cdot \beta^{2}  -  ( 2 \gamma - \eta )  \cdot \beta + \gamma$. We denote the spectral radius of a matrix $M$ as $\rho(M)$. We denote  the $N$-dimensional identity matrix  as $I_N$ and let
$$A=\mathcal{J}_{\mathbf{G}}(\mathbf{o}).$$
We compute the square of the spectral norm of $\JacobianGB(\mathbf{o})$ as follows
\begin{subequations}
\begin{eqnarray} \label{e:proof-lem1-0}
\norm{\JacobianGB(\mathbf{o})}_2^{2} 
&=& \left(\norm{(1-\beta) A + \beta I_N}_{2} \right)^2\\
\label{e:proof-lem1-1}
&=& \rho\left( (1-\beta)^{2} A\cdot A^\intercal + \beta (A + A^\intercal) + \beta^2 I_{N} \right) \\ 
\label{e:proof-lem1-2}
&\leq& \rho\left( (1-\beta)^{2} A\cdot A^\intercal + \beta (A + A^\intercal) \right) + \beta^2\\ 
\label{e:proof-lem1-3}
&=& \norm{(1-\beta)^{2} A\cdot A^\intercal + \beta (A + A^\intercal)}_2 + \beta^2 \\  \label{e:proof-lem1-4}
&\leq& \norm{(1-\beta)^{2} A\cdot A^\intercal}_2 +  \norm{\beta (A+A^\intercal)}_2 +\beta^2\\ 
\label{e:proof-lem1-5}
&=& (1-\beta)^{2} \gamma +  \beta \eta + \beta^2 \\ 
\label{e:proof-lem1-6}
&=&  (1+ \gamma ) \cdot \beta^{2}  -  ( 2 \gamma - \eta )  \cdot \beta + \gamma.
\end{eqnarray}
\end{subequations}

The passage from \eqref{e:proof-lem1-0} to \eqref{e:proof-lem1-1} is by definition of the spectral norm, i.e., $\norm{M}_2 = \sqrt{\rho(M \cdot M^\intercal)}$. The passage from \eqref{e:proof-lem1-1} to \eqref{e:proof-lem1-2} is by observing that $\rho(M + \beta^2 I_n) \leq \rho(M) + \beta^2$ for any matrix $M$. The passage from \eqref{e:proof-lem1-2} to \eqref{e:proof-lem1-3} follows from the fact that for a symmetric matrix $M$, its spectral norm and its spectral radius coincides, i.e.,  $\rho(M) =\norm{M}_2$. The passage from \eqref{e:proof-lem1-3} to \eqref{e:proof-lem1-4} is via the triangular inequality. This concludes the proof.


\section{Implementation Details}
\label{app:Implementation-Details}

 When computing $\norm{\JacobianGB(\mathbf{o})}$, we follow a specific procedure. First, we use the formula in \eqref{e:jacobian-G-Experession} to compute the Jacobian matrix $\mathcal{J}_{\mathbf{G}}$. To compute the Jacobian matrices $\mathcal{J}_{\mathbf{E}}$ and $\mathcal{J}_{\mathbf{R}}$, we utilize the "torch.autograd.functional.jacobian" function from the PyTorch library \cite{NEURIPS2019_9015}. However, the vectors $\partial \mathbf{g}_1$, $\partial \mathbf{g}_2$, and $\partial \mathbf{g}_3$ are implemented separately without relying on the aforementioned PyTorch function. This is done to ensure accurate results by avoiding potential errors that may arise from floating point precision.

\end{document}